\newcommand{\modify}[1]{{#1}}
\newcommand{\wtilde}{\widetilde}
\DeclareMathOperator{\dd}{d} 
\DeclareMathOperator*{\smax}{smax}
\DeclareMathOperator*{\argmax}{arg\,max}
\DeclareMathOperator*{\argsmax}{arg\,smax}
\DeclareMathOperator{\Ber}{Ber}
\newcommand{\dtv}[2]{d_{\mathrm{TV}}\left( {#1}, {#2} \right)}
\newcommand{\KL}[2]{KL\left( {#1} \parallel {#2} \right)}
\newcommand{\eps}{\varepsilon}
\newcommand{\bI}{\mathbb{I}}
\newcommand{\bR}{\mathbb{R}}
\newcommand{\cA}{\mathcal{A}}
\newcommand{\cD}{\mathcal{D}}
\newcommand{\cE}{\mathcal{E}}
\newcommand{\cI}{\mathcal{I}}
\newcommand{\ind}[1]{\mathbb{I}\left\{ {#1} \right\}}
\newcommand{\OPT}{\textsc{Opt}}
\newcommand{\Rev}{\textsc{Rev}}
\newcommand{\unif}{\textsc{Unif}}
\pgfplotsset{compat = newest}
\begin{document}

\title{Improved Online Learning Algorithms for CTR Prediction in Ad Auctions}
\author[1]{Zhe Feng\footnote{zhef@google.com}}
\author[1]{Christopher Liaw \footnote{cvliaw@google.com}}
\author[2]{Zixin Zhou \footnote{jackzhou@stanford.edu}}
\affil[1]{Google Research}
\affil[2]{Stanford University}

\maketitle

\begin{abstract}
In this work, we investigate the online learning problem of revenue maximization in ad auctions, where the seller needs to learn the click-through rates (CTRs) of each ad candidate and charge the price of the winner through a \emph{pay-per-click} manner.
We focus on two models of the advertisers' strategic behaviors.
First, we assume that the advertiser is completely myopic; i.e.~in each round, they aim to maximize their utility only for the current round.
In this setting, we develop an online mechanism based on upper-confidence bounds that achieves a tight $O(\sqrt{T})$ regret in the worst-case and \emph{negative} regret when the values are static across all the auctions and there is a gap between the highest expected value (i.e.~value multiplied by their CTR) and second highest expected value ad.
Next, we assume that the advertiser is non-myopic and cares about their long term utility. This setting is much more complex since an advertiser is incentivized to influence the mechanism by bidding strategically in earlier rounds.
In this setting, we provide an algorithm to achieve \emph{negative} regret for the static valuation setting (with a positive gap), which is in sharp contrast with the prior work that shows $O(T^{2/3})$ regret when the valuation is generated by adversary.

\end{abstract}

\section{Introduction}\label{sec:intro}
Pay-per-click auctions are widely used in internet advertising auctions to allocate advertising space to advertisers~\citep{edelman2007internet, Varian07}.
As a concrete example, major search engines, such as Bing and Google, run an auction for every search query to decide which ads to show.
An important aspect of these auctions is that ads are only charged in the event of a click. 

A crucial piece of information that is required to run these pay-per-click ad auctions is the \emph{click-through rate (CTR)}, which is the probability that an ad which is shown is actually clicked.
The canonical Vickrey-Clarke-Groves (VCG) auction assigns a score equal to the product of an ad's bid and their CTR.
The winning ad is the ad with the highest score and their payment, in the event of a click, is equal to the second-highest score divided by their own CTR~\citep{aggarwal2006truthful}.

In this paper, we consider the problem where the CTR is not known to the seller and must be estimated from data.
We model this as an online learning problem where at each time $t$, each advertiser $i$ has a private $v_{i, t}$ for their ad to be clicked at time $t$ and places a bid $b_{i, t}$.
The CTR for ad $i$ is $\rho_i$ and we assume this remains static over time. \modify{For presentation simplicity, we call $\rho_i \cdot v_{i, t}$ the expected cost-per-impression (eCPM)}.
The auctioneer must then run an auction at time $t$ (that may depend on all information up to time $t-1$) which receives the bids and outputs a winner and the price in the event of a click.
Our goal is to minimize the auctioneer's regret of not knowing the CTR beforehand;
this is equal to the difference of revenue that they could have achieved using a VCG auction if they knew the CTR and the revenue they obtained using an online mechanism.

\subsection{Our Results}
Our results revolve around two models of advertiser utility: the myopic setting and the non-myopic setting.

\paragraph{Myopic setting.}
We first assume that the advertisers are myopic in the sense that, at a particular round $t$, each advertiser bids to maximize their utility at round $t$.
In particular, the advertiser does not try to use their bid at a particular round to influence future auctions.
In this setting, we design a ``stage-wise incentive compatible (stage-IC)'' auction which combines the upper confidence bound (UCB) algorithm for bandits \citep{AuerCF02} with the canonical VCG auction described above.
By stage-IC, we mean that a myopic bidder is incentived to bid truthfully at each time (a formal definition can be found in Definition~\ref{def:stage-ic}).
This UCB-style auction computes an upper confidence bound on the CTR of each ad and uses the product of this estimate of the CTR with the advertiser's bid as the advertiser score.
As a warmup result, we prove that this simple algorithm achieves $\wtilde{O}(\sqrt{T})$ regret in the worst case where \modify{the values are generated from an adversary}.
Here, $T$ corresponds to the number of rounds and $\wtilde{O}$ hides logarithmic factors in $T$.
We complement this by proving a $\Omega(\sqrt{T})$ lower bound even when the values of the ads remain static across all rounds.

Next, we consider the setting where the values are static across all rounds and there exists a gap between the highest eCPM ad and all other ads.
Specifically, we assume there is some ad $i$ for which $v_i \rho_i > v_j \rho_j$ for all other ads $j \neq i$ and the gap is a \emph{time-indepdent} positive constant. 
Our main technical result is that, using exactly the same UCB algorithm as above, the regret is $-\Omega(T)$.

To summarize, we have the following informal theorem.
The formal statements can be found in Theorem~\ref{thm:sqrt_ub}, Theorem~\ref{thm:neg_ub}, and Theorem~\ref{thm:lb} respectively.
\begin{theorem}[Informal]
If the advertisers are myopic then there is an online algorithm that guarantees the following.
\begin{enumerate}[topsep=0pt, itemsep=0pt]
\item The worst-case regret is $\wtilde{O}(\sqrt{T})$.
\item If \modify{the values are static across all rounds and} there is a time-independent constant gap between the highest eCPM ad and all other ads then the regret is $-\Omega(T)$.
\end{enumerate}
\end{theorem}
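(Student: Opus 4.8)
The plan is to run both parts on top of the observation that, since the mechanism is stage-IC (Definition~\ref{def:stage-ic}) and the bidders are myopic, every advertiser bids truthfully, $b_{i,t}=v_{i,t}$, so the online score of ad $i$ at round $t$ is $\bar\rho_{i,t}v_{i,t}$ with $\bar\rho_{i,t}=\hat\rho_{i,t}+\beta_{i,t}$, where $\hat\rho_{i,t}$ is the empirical CTR, $\beta_{i,t}=\sqrt{c\log T/n_{i,t}}$ is the UCB bonus, and $n_{i,t}$ counts the impressions of ad $i$ before round $t$. First I would fix the good event $\cE=\{\,|\hat\rho_{i,t}-\rho_i|\le\tfrac12\beta_{i,t}\text{ for all }i,t\,\}$; choosing the exploration constant $c$ large enough, Hoeffding and a union bound over the $KT$ pairs give $\Pr[\cE]\ge 1-O(1/T)$. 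The one structural fact I will use is the resulting two-sided bound on $\cE$, $\rho_i v_{i,t}+\tfrac12\beta_{i,t}v_{i,t}\le\bar\rho_{i,t}v_{i,t}\le\rho_i v_{i,t}+\tfrac32\beta_{i,t}v_{i,t}$: the lower end gives optimism with a \emph{strictly positive} bonus, the upper end controls the width. I would also record the two per-round expected revenues: the VCG benchmark is the second-highest true eCPM $\max_{j\ne i^\star_t}\rho_j v_{j,t}$ with $i^\star_t=\argmax_i\rho_i v_{i,t}$, while the online mechanism, which shows $w_t=\argmax_i\bar\rho_{i,t}v_{i,t}$ and charges the critical pay-per-click price, collects $\rho_{w_t}\cdot\big(\max_{j\ne w_t}\bar\rho_{j,t}v_{j,t}\big)/\bar\rho_{w_t,t}$.

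For part~(1) I would bound the per-round regret on $\cE$ by the bonus of the realized winner. The numerator $\max_{j\ne w_t}\bar\rho_{j,t}v_{j,t}$ always dominates the VCG revenue $\max_{j\ne i^\star_t}\rho_j v_{j,t}$: if $w_t=i^\star_t$ this is immediate from optimism $\bar\rho_{j,t}\ge\rho_j$, while if $w_t\ne i^\star_t$ then $i^\star_t$ is a competitor of $w_t$, so the numerator is at least $\bar\rho_{i^\star_t,t}v_{i^\star_t,t}\ge\rho_{i^\star_t}v_{i^\star_t,t}\ge$ the VCG revenue. Thus the sole loss is the multiplicative factor $\rho_{w_t}/\bar\rho_{w_t,t}\ge 1-O(\beta_{w_t,t})$, so the VCG revenue exceeds the online revenue by at most $O(\beta_{w_t,t})$ each round, and summing while grouping by the ad shown gives $\sum_t\beta_{w_t,t}=\sqrt{c\log T}\sum_i\sum_{n=1}^{n_{i,T}}n^{-1/2}=\wtilde O\big(\sum_i\sqrt{n_{i,T}}\big)=\wtilde O(\sqrt{KT})$ by Cauchy--Schwarz and $\sum_i n_{i,T}=T$. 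The off-$\cE$ contribution is $O(T)\cdot O(1/T)=O(1)$, giving the $\wtilde O(\sqrt T)$ bound.

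For part~(2), with a time-independent eCPM gap $\Delta$, the negative regret comes from the runner-up's bonus never vanishing. A standard gap-based UCB argument shows that on $\cE$ every suboptimal ad is shown only $O(\log T/\Delta^2)$ times, so in all but $O(K\log T/\Delta^2)$ rounds the top ad $i^\star$ wins and, being shown $\Omega(t)$ times, has $\beta_{i^\star,t}\to0$ and $\bar\rho_{i^\star,t}\to\rho_{i^\star}$. The crux is the opposite behaviour of the true runner-up (eCPM $\rho_2 v_2$): UCB keeps its impression count pinned at the value where its score just matches the top, so $\beta_{2,t}v_2=\Theta(\Delta)$ throughout, and the lower end of the two-sided bound yields $\max_{j\ne i^\star}\bar\rho_{j,t}v_{j,t}\ge\bar\rho_{2,t}v_2\ge\rho_2 v_2+\tfrac12\beta_{2,t}v_2=\rho_2 v_2+\Omega(\Delta)$. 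Substituting the vanishing denominator, the online expected revenue is $\rho_2 v_2+\Omega(\Delta)$ per round, exceeding the VCG benchmark $\rho_2 v_2$ by $\Omega(\Delta)$. Over the $\Omega(T)$ such rounds this accumulates to a surplus $\Omega(\Delta T)$, i.e.\ regret $-\Omega(T)$; the $O(K\log T/\Delta^2)$ exploration rounds each lose $O(1)$ and are lower order.

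The main obstacle is making the surplus in part~(2) genuinely constant rather than merely nonnegative. The naive optimism bound only gives $\bar\rho_{2,t}v_2\ge\rho_2 v_2$, a zero surplus, because the empirical deviation $|\hat\rho_{2,t}-\rho_2|$ could a priori cancel the whole bonus. The device is exactly the two-sided event $\cE$ with a large constant $c$: there the bonus $\beta_{2,t}$ strictly dominates the deviation (by the extra $\sqrt{\log T}$ factor it carries), leaving a residual $\tfrac12\beta_{2,t}=\Omega(\Delta)$. Executing this requires showing that the runner-up's pull count stays pinned at $\Theta(\log T/\Delta^2)$ across the whole horizon --- it creeps upward through re-exploration as $\log t$ grows and resets after each pull --- so that $\beta_{2,t}v_2$ never leaves $\Theta(\Delta)$, while simultaneously controlling the denominator $\bar\rho_{i^\star,t}$; the history-dependence of these counts on the click realizations is the delicate point. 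Finally I would pass from expected to realized revenue through an Azuma bound on the martingale of Bernoulli clicks, whose $O(\sqrt{T\log T})$ fluctuation is dominated by the $\Omega(T)$ surplus.
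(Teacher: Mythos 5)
Your proposal is correct and follows essentially the same route as the paper: for the $\wtilde{O}(\sqrt{T})$ bound, the observation that the runner-up's UCB score dominates the true second-highest eCPM so the only loss is the multiplicative factor $\rho_{w_t}/\tilde{\rho}_{w_t,t}$ (the paper's proof of Theorem~\ref{thm:sqrt_ub}); and for the negative regret, the two-sided concentration plus the $O(\log T/\Delta^2)$ cap on the runner-up's pull count, which forces its UCB score to overshoot its true eCPM by $\Omega(\Delta)$ and inflates the price charged to the winner (the paper's Lemma~\ref{lem:bound_Nit}, Claim~\ref{claim:Delta_gap}, and the decomposition into wrong-winner and right-winner rounds in Lemmas~\ref{lem:wrong_winner} and~\ref{lem:right_winner}). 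The only cosmetic differences are that you retune the bonus constant so the deviation is half the bonus where the paper works with its fixed constant $\sqrt{3/2}$ and a smaller margin, and you append an Azuma step that is unnecessary since the regret is measured in expectation.
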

In Subsection~\ref{subsec:techniques} we give a high-level description of the techniques used in the proof.

\paragraph{Non-myopic setting.}
In the second setting, we assume that the advertisers are non-myopic.
In particular, we assume that the advertisers want to maximize their total utility over all $T$ rounds.
In this case, we design a ``global-IC'' mechanism that also achieves $-\Omega(T)$ regret provided that the values are static and there is a gap between the highest value ad and all other ads (see Theorem~\ref{thm:static_non_myopic} for the formal result).
Here, by global-IC we mean that the mechanism must incentivze a non-myopic advertiser to bid truthfully at every round (see Definition~\ref{defn:global_ic} for a formal definition).
In this setting, the algorithm we design is based on both the UCB algorithm and the explore-then-commit algorithm from the bandit literature.
In particular, the algorithm runs in two phases.
In the first phase, the algorithm explores by showing each ad for free in a round-robin manner.
At the end of each round-robin, we compute confidence intervals for all the ads.
We terminate the phase if there is an ad whose lower confidence bound is greater than the upper confidence bound of all other ads.
Note that the termination time for the first phase is not provided as input to the algorithm.
In the second phase, the algorithm commits by using a UCB estimate of the CTR of each of the ads.

\subsection{Techniques}
\label{subsec:techniques}
There are a number of key differences between our setting and the standard multi-armed bandit setting which calls for novel ideas.
First, in the multi-armed bandit setting, there is a single fixed arm that the algorithm needs to perform well against.
In our setting, the benchmark is the revenue that can be achieved if the algorithm had known the CTRs beforehand.
At time $t$ this revenue is equal to second highest quantity in the set $\{v_{i, t} \rho_i\}$ where $v_{i,t}$ is the value of ad $i$ and $\rho_i$ is the CTR of ad $i$.
Note, importantly, that the benchmark may actually be \emph{different} at each time step.

A second key difference is that the algorithm may actually incur regret even when it chooses the correct ad to be shown.
As an example, suppose that we have two ads with CTR $\rho_1 = \rho_2 = 0.5$.
Ad $1$ has value $1$ while ad $2$ has value $2$.
Our estimates for their CTRs are $\hat{\rho}_1 = 0.5$ and $\hat{\rho}_2 = 0.8$.
If we had known $\rho_1$ and $\rho_2$ then the canonical VCG auction scores each ad by $v_i \rho_i$.\footnote{This assumes the advertisers bid truthfully but since the auction is a standard VCG auction, which is stage-wise IC, it is reasonable to assume truthful bidding.}
The winner is ad $2$ (since $v_2 \rho_2 > v_1 \rho_1)$ and, in the event of a click, ad $2$ is charged $v_1 \rho_1 / \rho_2$ (the losing score divided by the winner's CTR).
The expected revenue is thus $v_1 \rho_1 = 0.5$.
On the other hand, suppose that one uses the the estimated CTRs.
Then ad $2$ still wins but is charged $v_1 \hat{\rho}_1 / \hat{\rho}_2$ when they are clicked for an expected revenue of $v_1 \hat{\rho}_1 \cdot \frac{\rho_2}{\hat{\rho}_2} = 0.3125$.
Thus, we still incur regret even though the correct ad is shown.
The reason that we are still incurring regret in this example is that our estimate for the CTR of ad $2$ is incorrect.
However, observe that the closer that $\hat{\rho}_2$ is to $\rho_2$ (while still being larger), the smaller the incurred regret when the correct ad is shown.

To prove our upper bound result, we first observe that the regret can be expressed as the sum of two components.
The first component is the regret that is incurred whenever we choose an incorrect ad at each time step.
While the correct ad may actually be different at each time step, we are still able to bound the regret of this component using a fairly standard argument.
The second component is the regret that is incurred when we show the correct ad but our estimate of the CTR is not accurate.
Intuitively, if we have shown the correct ad $K$ times then this estimation error should be roughly on the order of $1/\sqrt{K}$.
We show that this also translates to an incurred regret of $1/\sqrt{K}$.
It turns out that this blueprint is sufficient to prove the $\wtilde{O}(\sqrt{T})$ upper bound.

Proving the negative regret bound for static values calls for additional ideas.
As in the above paragraph, we still split the regret into a component for showing incorrect ads and a component for showing correct ads.
The former can still be bounded using a fairly standard algorithm.
Our goal then is to show that the regret due to showing correct ads is actually \emph{negative}.
This requires opening up the blackbox of the UCB algorithm.
Recall that in this setting, we are also assuming that there is a gap between the best ad to show and all the other ads.
Specifically, sorting the ads, we assume that $\rho_1 v_1 > \rho_2 v_2 \geq \ldots \geq \rho_n v_n$.
In this case, we need to show that UCB does \emph{not} maintain a very good estimate of $\rho_2$.
This is in stark contrast with the usual proofs involving UCB that only require proving that the UCB estimate is good in order to prove regret bounds.
Note that the UCB estimate is always an upper bound on $\rho_2$; the key observation is that, since there is a gap, this estimate must actually be \emph{significantly} more than $\rho_2$.
As a result, this allows to conclude that ad $1$ is actually seeing slightly more competition which results in the algorithm setting a higher price per click.

\subsection{Related Work}\label{subsec:related-work}
The closely related work lies in the field of \emph{online learning for pay-per-click auctions}~\citep{DK09, BSS14}. \citet{DK09} show $\Theta(T^{2/3})$ regret against the revenue achieved by second price auctions when the true click-through rates are known. They restrict their attention to the global-IC mechanism, i.e., bidding truthfully is the dominant strategy for each advertiser given any realized sequence of clicks and bids in $T$ rounds. \citet{BSS14} focus on maximizing welfare and propose a black-box reduction from a standard multi-arm bandit (MAB) algorithm to a global-IC MAB algorithm and they show a $\Omega(T^{2/3})$ is necessary for any deterministic global-IC mechanism. \citet{BKS15} further extend to randomized global-IC mechanism and achieve an improved regret bound $O(\sqrt{T})$ when the valuation of advertisers is generated stochastically. In the myopic setting, our UCB-style mechanism is deterministic and stage-IC. We can achieve $\Theta(\sqrt{T})$ (matching lower bound) regret when the valuation of the advertisers are chosen adversarially. Furthermore, we show this algorithm can achieve better revenue than second price auctions with known true click-through rates (i.e., negative regret) when the value of advertisers are fixed, which is surprising. In the non-myopic setting, the explore-then-commit UCB mechanism is global-IC and we show it also achieves \emph{negative} regret, which complements the $O(T^{2/3})$ regret analysis proposed by~\citep{DK09} when the valuation is generated adversarially.

Our work is also related with the Bayesian Incentive Compatible (BIC) Bandits literature in general, e.g.,~\citep{bic-bandit, SS21} and some follow-up works for combinatorial bandits~\citep{bic-combinatorial-bandit} and reinforcement learning~\citep{bic-rl}, where the previous papers focus on designing truthful (global-IC) mechanism to incentivize agents to explore. The online mechanism designed in this paper is stage-IC (i.e., incentive compatible at each round), so that the myopic advertisers will report their true value, even though designing truthful mechanism is not the target of this paper. Our work differs with this BIC bandits literature as: (1) existing BIC bandit papers focus on welfare maximization (i.e., maximize total reward of pulled arms), while our objective is to maximize revenue, which also depends on the 
arms that are not pulled; (2) standard BIC bandits are tailored to stochastic setting, which assume there is a prior belief of each arm, whereas, our UCB-style mechanism achieves $O(\sqrt{T})$ regret when the advertisers' value are generated adversarially. Furthermore, we achieve \emph{negative} regret when the values are fixed across rounds.
This requires careful analysis tailored to pay-per-click second price auctions.

Loosely related work includes the rich literature about learning click-through rates, e.g.,~\citep{mcmahan2013ad,chen2016deep,cheng2016wide,zhang2016deep,qu2016product,juan2017field,lian2018xdeepfm,zhou2018deep}. These existing works focus on the offline setting, which treat predicting CTR as a standard classification question and deep neural networks have proven to be very powerful for this task. Our work focus on the online learing setting where the advertisers and the seller interact at each round, and the seller's target is to learn the click-through rates to maximize the expected revenue.
\section{Model and Preliminaries}
In this section, we describe the model considered in this paper.
We assume a repeated single-slot ad auction setting with $T$ rounds and $n$ advertisers (equivalently, ads) per round. From now on, we will interchangeably use arms and advertisers (or ads), when the context is clear. For simplicity, we assume that $T$ is fixed and known beforehand; our results can be extended to the setting where the number of rounds is not known using a standard doubling trick argument~\citep{BCB12}.
Each ad $i$ has an unknown click-through-rate $\rho_i \in (0, 1)$ , which is fixed across all rounds.
At each round $t$, ad $i$ has a private value $v_{i, t} > 0$ for being clicked; its expected value for being \emph{shown} is thus $v_{i,t} \cdot \rho_i$, which is also usually called expected cost-per-impression (eCPM).
In this paper, we use $\smax$ to denote the set-valued function that returns the largest element in the set (if there are duplicates then $\smax$ returns the largest element).
Similarly, we use $\argsmax$ to denote the set-valued function that returns the index of the second largest element.

Let $b_t = \{b_{1, t}, b_{2, t}, \cdots, b_{n, t}\}$ be the bid profile at round $t$, where $b_{i, t}$ is the bid of advertiser $i$ at this round. Following the common notations of auction theory literature, we denote $b_{-i, t}$ and $v_{-i, t}$ to be the bids and value of the other advertisers except for $i$. 
For each advertiser $i$, the seller specifies an auction which is determined by an allocation rule (can be randomized) $\hat{x}_{i, t} \colon \bR_{\geq 0}^n \to [0, 1]$ where $\sum_{i \in [n]} \hat{x}_{t, i}(b_t) \leq 1$ and an expected payment rule $p_{i, t} \colon \bR_{\geq 0}^n \to \bR_{\geq 0}$.\footnote{Given the expected payment $p_{i,t}$, the payment that is charged in the event of a click is $p_{i,t} / (\rho_i \hat{x}_{i,t}) = p_{i,t} / x_{i,t}$.}  We define $x_{i, t} = \rho_i \hat{x}_{i, t}$ as the effective click probability of ad $i$ at round $t$ and it represents the effective allocation probability \modify{that advertiser $i$ can get the value
in the auction}.
The expected utility of ad $i$ with value $v_{i, t}$ for submitting a bid $b_{i, t}$ conditioned on the remaining bids of the other advertisers are $b_{-i, t}$ at round $t$ is given by 
\begin{eqnarray}\label{eq:utility-per-round}
u_{t, i}(v_{i,t}; b_{i, t}, b_{-i, t}) = v_{i, t} x_{i, t}(b_t) - p_{i, t}(b_t).
\end{eqnarray}
Note that to implement this auction in practice, we only need to define the allocation rule $\hat{x}$ and per-click payment rule. 

In this paper, we consider two different types of advertisers: \emph{myopic} advertisers and \emph{non-myopic} advertisers. In the myopic setting, each advertiser is only interested in maximizing her utility at each round and ignores the effects in the future rounds, i.e., advertiser $i$ would like to submit a bid $b_{i, t}$ to maximize her utility at each round defined in Eq.\eqref{eq:utility-per-round}, conditioned on the other advertisers' bids $b_{-i, t}$. An online mechanism that always incentives the myopic advertisers to report their true value at each round is called \emph{stage-wise incentive compatible} (stage-IC), defined in the following:

\begin{definition}[Stage-IC]\label{def:stage-ic}
An online mechanism is said to be stage-IC if for every $i, t, v_{i,t}, b_{i,t}, b_{-i, t}$, we have
\[
u_{i,t}(v_{i, t}; v_{i,t}, b_{-i, t})
\geq u_{i,t}(v_{i,t}; b_{i,t}, b_{-i, t}).
\]
\end{definition}
In other words, reporting truthfully is a dominant strategy for each myopic advertiser $i$ no matter what the other advertisers submit at each round. The seminal work by~\citet{myerson1981optimal} characterizes the expected payment rule $p_{i, t}$ for stage-IC mechanisms:

\begin{lemma}[\citet{myerson1981optimal}]\label{lem:myerson}
An online mechanism is stage-IC if and only if the allocation rule $x_{i, t}$ is monotone with respect to $i$th coordinate (advertiser $i$'s bid) and the payment rule is given by
\begin{equation}
p_{i, t}(b_t) = b_{i, t} \cdot x_{i, t}(b_t) - \int_0^b x_{i, t}(z, b_{-i, t})\, \dd z.
\end{equation}
for any $i, t, b_{i, t}$ and $b_{-i, t}$.
\end{lemma}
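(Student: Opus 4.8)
The plan is to fix the index $i$, the round $t$, and the competing bids $b_{-i, t}$, and to treat everything as a function of advertiser $i$'s own bid. Writing $x(\cdot) := x_{i, t}(\cdot, b_{-i, t})$ and $p(\cdot) := p_{i, t}(\cdot, b_{-i, t})$, the stage-IC condition of Definition~\ref{def:stage-ic} reduces to the single-variable statement that $v x(v) - p(v) \geq v x(b) - p(b)$ for every value $v$ and bid $b$; that is, truthful reporting maximizes the utility $u(v; b) = v x(b) - p(b)$. I would prove the two directions of the equivalence separately, using the truthful-utility function $U(v) := u(v; v) = v x(v) - p(v)$ as the central object.

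For the forward direction (stage-IC implies monotonicity and the payment formula), I would write the two IC inequalities obtained by letting a bidder of value $v$ deviate to $v'$ and a bidder of value $v'$ deviate to $v$. Rearranging and combining them yields the sandwich $(v - v') x(v') \leq U(v) - U(v') \leq (v - v') x(v)$. Taking $v > v'$ immediately gives $x(v) \geq x(v')$, i.e.\ monotonicity of the allocation in the bidder's own bid. The same sandwich shows that $U$ is Lipschitz and that, wherever $x$ is continuous, $U'(v) = x(v)$; this is the envelope step. I expect this to be the main obstacle: one must argue that the local incentive constraints integrate up to the global identity $U(v) = U(0) + \int_0^v x(z)\,\dd z$, which requires invoking absolute continuity of $U$ (a consequence of the sandwich bound together with monotonicity of $x$) before applying the fundamental theorem of calculus. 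Combined with the normalization $U(0) = 0$ (equivalently $p(0) = 0$, which the stated formula encodes), rearranging $U(v) = v x(v) - p(v)$ then produces $p(v) = v x(v) - \int_0^v x(z)\,\dd z$, exactly the claimed payment rule.

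For the reverse direction (monotonicity and the payment formula imply stage-IC), I would substitute the formula into the utility to obtain the clean expression $u(v; b) = (v - b) x(b) + \int_0^b x(z)\,\dd z$, so that $u(v; v) - u(v; b) = \int_b^v x(z)\,\dd z - (v - b) x(b)$. Splitting into the cases $b < v$ and $b > v$ and rewriting the difference as $\int_b^v \left( x(z) - x(b) \right)\,\dd z$ and $\int_v^b \left( x(b) - x(z) \right)\,\dd z$ respectively, monotonicity of $x$ makes each integrand nonnegative, so $u(v; v) \geq u(v; b)$ in both cases. This establishes stage-IC and closes the equivalence; the only care needed here is handling both directions of deviation uniformly, which the two-case split accomplishes.
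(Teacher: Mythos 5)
Your proof is correct, but note that the paper never proves Lemma~\ref{lem:myerson} at all: it imports the result as a black box from \citet{myerson1981optimal}, and the only place it engages with the content of the lemma is Appendix~\ref{app:ucb-ic}, where just the \emph{if} direction is applied --- monotonicity of the UCB score plus a direct verification of the payment identity yields Proposition~\ref{prop:myerson1}. So your self-contained argument is necessarily a different route from anything in the paper; it is the standard textbook proof of Myerson's lemma, and both directions are sound. The forward direction via the two-sided IC sandwich $(v - v')\,x(v') \leq U(v) - U(v') \leq (v - v')\,x(v)$, giving monotonicity and then (via Lipschitz continuity of $U$, hence absolute continuity, plus a.e.\ continuity of the monotone function $x$) the envelope identity $U(v) = U(0) + \int_0^v x(z)\,\dd z$, is exactly right; so is the converse via substitution and the case split $\int_b^v \left( x(z) - x(b) \right)\dd z \geq 0$ and $\int_v^b \left( x(b) - x(z) \right)\dd z \geq 0$. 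What your proof buys that the paper does not provide is an actual derivation; what the paper buys by citing is brevity, since it only ever needs the easy direction.

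One point in your forward direction deserves to be explicit rather than parenthetical. As literally stated, the \emph{only if} direction is false without the normalization $p_{i,t}(0, b_{-i,t}) = 0$: stage-IC pins payments down only up to a bidder-specific constant (for instance, a constant allocation $x \equiv 1/2$ with a constant positive charge $p \equiv c$ is stage-IC but violates the displayed formula, which would force $p \equiv 0$). You acknowledge this by saying the formula ``encodes'' $U(0) = 0$, which is the right reading of the lemma, but in the paper's own framework you can close the gap cleanly instead of assuming it: the payment rule is by definition non-negative ($p_{i,t} \colon \bR_{\geq 0}^n \to \bR_{\geq 0}$), and individual rationality (Definition~\ref{def:ir}) gives $U(0) = -p_{i,t}(0, b_{-i,t}) \geq 0$, so together $p_{i,t}(0, b_{-i,t}) = 0$. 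Adding that one sentence makes the equivalence airtight as stated. (Also, purely editorially: the upper limit of the integral in the lemma should read $b_{i,t}$, not $b$, as your rewriting implicitly corrects.)
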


As an illustrative example, we define the pay-per-click second price auctions as follows:
\begin{definition}[Pay-per-click Second Price Auctions]\label{def:spa-ppc}
At each round $t$, the pay-per-click second price auction
\begin{itemize}[topsep=0pt, itemsep=0pt]
\item selects the ad with the highest estimated eCPM score, $i^* = \argmax_i f_{i, t} b_{i, t}$,
where $f_{i, t}$ is an estimate of CTR $\rho_i$; and
\item shows ad $i^*$. If it is clicked, charge $i^*$ by $\frac{\max_{j\neq i} f_{j, t} b_{j, t}}{f_{i, t}}$.
\end{itemize}
\end{definition}
It is well-known, this mechanism is stage-IC, as long as CTR estimates $f_{i, t}$ are independent of $b_{i, t}$~\citep{aggarwal2006truthful}.

In the \emph{non-myopic} setting, each advertiser $i$ aims to maximize her cumulative utility achieved in $T$ rounds, i.e., $\sum_{t=1}^T u_{i, t}(v_{i, t}; b_{i, t}, b_{-i, t})$, by submitting $b_{i, t}$ at each round $t$.  In this setting, advertisers are incentived to leverage
their private information and “game" the system by submitting corrupted bids (deviating from her true value) stratigically in the earlier rounds so that she can manipulate her predicted click-through rates to benefit in the long run. To incentivize non-myopic advertisers to reveal their true valuation, we need a stronger notion of incentive compatibility to design online mehcanisms, which is global-IC.

\begin{definition}[Global-IC]
\label{defn:global_ic}
An online mechanism is said to be global-IC if for every $i, v_{i,t}, b_{i,t}, b_{-i, t}$, we have
\[
\sum_{t=1}^T u_{i,t}(v_{i, t}; v_{i,t}, b_{-i, t})
\geq
\sum_{t=1}^T u_{i,t}(v_{i,t}; b_{i,t}, b_{-i, t}).
\]
\end{definition}

Beyond incentive properties mentioned above, the online mechanism considered in this paper should satisfy individual rationality, i.e., the expected utility of each advertiser is non-negative.
\begin{definition}[Individual Rationality (IR)]\label{def:ir}
An online mechanims is IR if and only if for every $i, t, v_{i, t}, b_{-i, t}$,
\begin{eqnarray}
u_{i,t}(v_{i,t}; v_{i,t}, b_{-i, t}) \geq 0
\end{eqnarray}
\end{definition}

\paragraph{Valuation Generation.} 
In this paper we mainly focus on two important valuation generation settings: (1) valuation of the advertisers are generated from an \emph{adversary}; (2) advertisers' value are \emph{fixed} and \emph{static} across all rounds, i.e, for all $i, t, v_{i,t} = v_i$. Conceptually, the adversarial valuation generation focus on the worst-case scenario and the fixed valuation setting is an ideal and theory-driven model. However, the fixed valuation setting also captures the practice to some extent, e.g., the valuation of a shopping ad is just the price of the product, which is static and fixed in a long period.

\paragraph{Regret.} In this paper, we would like to design an online mechanism to minimize the regret against the revenue of second price auctions (VCG) when the true click-through rates are known. Let $\OPT = \sum_{t=1}^T \smax_{i \in [n]} \rho_{i} v_{i, t}$ be the revenue of second price auctions if the true CTRs $\{\rho_i\}_{i=1}^n$ are known, then the regret is defined as follows.
\begin{definition}[Regret]\label{def:regret}
For an IC mechanism (either stage-IC or global-IC) $\cA$, let $\Rev_T(\cA)$ be the revenue of $\cA$ if all advertisers bid their true value for all rounds.
We define the regret as
\[
    R_T = \OPT - \Rev_T(\cA).
\]
\end{definition}
In this paper, we focus on the revenue acheived by VCG as the benchmark for regret bound, due to the robustness of the VCG in practice, i.e., VCG mechanism does not need to know the prior distribution of the valuation.
Note that $\OPT$ is the optimal revenue that one can achieve if the CTRs are known and if, at every round, we always decide to show an ad. 
\section{A UCB-style Mechanism for Myopic Advertisers}
\label{sec:ucb_no_regret}
In this section, we present a UCB-style pay-per-click online mechanism to minimize the regret (Definition~\ref{def:regret}), when the advertisers are myopic.
The algorithm is built upon standard Upper Confidence Bound (UCB) bandit algorithm. Namely, we maintain a UCB estimate of each ad's pCTR at each round $t$:
\begin{eqnarray}\label{eq:ucb-pctr}
\tilde{\rho}_{i, t} = \hat{\rho}_{i, t} + \sqrt{\frac{3 \log T}{2N_{i, t}}},
\end{eqnarray}
where $N_{i, t}$ is the total number of showing ad $i$ up to time $t-1$ and $\hat{\rho}_{i, t}$ is the average clicks among $N_{i, t}$ ad impressions. We will then run pay-per-click second price auctions (see Definition~\ref{def:spa-ppc}) using the UCB estimates of the CTRs and we defer the details of this UCB-style online mechanism to Algorithm~\ref{alg:ucb_auction}. Following a standard regime in UCB algorithm, we run a forced exploration for each arm $i$ in the beginning to get a warm start for the main UCB online mechanism (Line~\ref{line:warm-start}). Since the bids are not used in this inital exploration, it has no effect on the incentive property of the online mechanism. In our regret analysis, we ignore the regret suffered due to this initial exploration as it can be easily bounded by a constant.

\begin{algorithm}
\caption{UCB-style algorithm for online pay-per-click auctions}
\label{alg:ucb_auction}
\begin{algorithmic}[1]
\State Show each ad $i \in [n]$ once (for free) and observe click; initialize $\hat{\rho}_{i, 1} = \ind{\text{ad $i$ was clicked}}$ and $N_{i, 1} = 1$.\label{line:warm-start}
\For{$t = 1, \ldots, T$}
\State Compute $\tilde{\rho}_{i, t} = \hat{\rho}_{i, t} + \sqrt{\frac{3 \log T}{2N_{i, t}}}$. \label{line:ucb_score}
\State Solicit bids $b_{i, t}$ for each advertiser $i \in [n]$.
\State Let $A_t \in \argmax_{i \in [n]} \tilde{\rho}_{i, t} \cdot b_{i, t}$ (winner) and $B_t \in \argsmax_{i \in [n]} \tilde{\rho}_{i, t} \cdot b_{i, t}$ (runner up). \label{line:ucb_ranking}
\State Show ad $A_t$. Let $X_t = \ind{\text{ad $A_t$ was clicked}}$ and charge $\frac{\tilde{\rho}_{B_t, t} \cdot b_{B_t, t}}{\tilde{\rho}_{A_t, t}} \cdot X_t$ to ad $A_t$ (other ads pay $0$). \label{line:ucb_payment}
\State Update $N_{A_t, t+1} = N_{A_t, t} + 1$ and $N_{i, t+1} = N_{i, t}$ for $i \neq A_t$.
\State Update $\hat{\rho}_{A_t, t+1} = \left( 1 - \frac{1}{N_{A_t, t+1}} \right)\hat{\rho}_{A_t, t} + \frac{1}{N_{A_t, t+1}} X_t$ and $\hat{\rho}_{i,t+1} = \hat{\rho}_{i,t}$ for $i \neq A_t$.
\EndFor
\end{algorithmic}
\end{algorithm}

Our UCB estimates of CTR $\tilde{\rho}_{i, t}$ is indepdent with its bid $b_{i, t}$ for each advertiser $i$ at round $t$. Therefore, our UCB-style online mechanism is clearly stage-IC by~\citep{aggarwal2006truthful},

\begin{restatable}{proposition}{ucbisic}
\label{prop:myerson1}
The UCB-style online mechanism proposed in Algorithm~\ref{alg:ucb_auction} is stage-IC.
\end{restatable}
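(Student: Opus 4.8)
The plan is to verify the two conditions of Myerson's characterization (Lemma~\ref{lem:myerson}): that the effective allocation rule $x_{i, t}$ is monotone in advertiser $i$'s bid, and that the expected payment charged by Algorithm~\ref{alg:ucb_auction} coincides with the Myerson payment. The single observation driving everything is that the CTR estimate $\tilde{\rho}_{i, t}$ computed in Line~\ref{line:ucb_score} is a deterministic function of $N_{i, t}$ and $\hat{\rho}_{i, t}$, both of which depend only on the clicks observed for ad $i$ through round $t-1$; in particular $\tilde{\rho}_{i, t}$ does \emph{not} depend on the current bid $b_{i, t}$. Fixing $b_{-i, t}$ and the realized history, I therefore treat each $\tilde{\rho}_{j, t}$ as a fixed positive constant and vary only $b_{i, t}$.

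For monotonicity, note from Line~\ref{line:ucb_ranking} that ad $i$ is shown exactly when $\tilde{\rho}_{i, t} b_{i, t}$ is the largest score, i.e.\ when $b_{i, t} > \beta_i := \max_{j \neq i} \tilde{\rho}_{j, t} b_{j, t} / \tilde{\rho}_{i, t}$ (ties resolved by the fixed rule of $\argmax$, and the single boundary point is immaterial below). Hence $\hat{x}_{i, t}(z, b_{-i, t}) = \ind{z > \beta_i}$ is a nondecreasing step function of the bid $z$, and since $x_{i, t} = \rho_i \hat{x}_{i, t}$ with $\rho_i > 0$ a constant, $x_{i, t}$ is nondecreasing in $b_{i, t}$ as well, as required.

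It remains to check the payment. When ad $i$ wins, Line~\ref{line:ucb_payment} charges $\tilde{\rho}_{B_t, t} b_{B_t, t} / \tilde{\rho}_{i, t} = \beta_i$ per click, and a click occurs with probability $\rho_i$, so its expected payment is $p_{i, t}(b_t) = \rho_i \beta_i$; losers pay $0$. On the other hand, substituting the threshold allocation $x_{i, t}(z, b_{-i, t}) = \rho_i \ind{z > \beta_i}$ into the Myerson formula gives, for a winning bid $b_{i, t} > \beta_i$,
\[
b_{i, t}\, x_{i, t}(b_t) - \int_0^{b_{i, t}} x_{i, t}(z, b_{-i, t})\, \dd z = \rho_i b_{i, t} - \rho_i (b_{i, t} - \beta_i) = \rho_i \beta_i,
\]
and $0$ for a losing bid. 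These agree with the expected payment of Algorithm~\ref{alg:ucb_auction}, so both conditions of Lemma~\ref{lem:myerson} hold and the mechanism is stage-IC. (Equivalently, one may directly invoke the stage-IC guarantee for pay-per-click second-price auctions with bid-independent CTR estimates from \citep{aggarwal2006truthful}.)

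The argument is essentially a bookkeeping verification, so I do not anticipate a genuine obstacle; the one point to handle carefully is the translation between the per-click charge stated in Line~\ref{line:ucb_payment} and the expected-payment rule $p_{i, t}$ appearing in Lemma~\ref{lem:myerson}, which requires correctly inserting the click probability $\rho_i$ and checking that the winner's own estimate $\tilde{\rho}_{i, t}$ cancels. This step also makes explicit why bid-independence of $\tilde{\rho}_{i, t}$ is indispensable: were the estimate to depend on $b_{i, t}$, the winning region would no longer be a single threshold in $b_{i, t}$, and neither monotonicity nor the second-price payment identity would survive.
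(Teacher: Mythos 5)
Your proof is correct and follows essentially the same route as the paper's: verify Myerson's two conditions, using the fact that $\tilde{\rho}_{i,t}$ is bid-independent so the allocation is a threshold rule in $b_{i,t}$, and then check that the per-click second-price charge, converted to an expected payment by multiplying by $\rho_i$, matches the Myerson payment formula. The only cosmetic difference is that the paper carries out the winner case via the indicator $Y_{i,t}=\ind{A_t=i}$ and treats the tie at the threshold explicitly, while you work directly with the threshold $\beta_i$; both are the same bookkeeping.
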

For completeness of the context, we provide a proof for this Proposition in Appendix~\ref{app:ucb-ic}. Since the UCB-style online mechanism is stage-IC, the myopic advertisers will report their true value at each round.

\subsection{Adversarial Valuation}\label{subsec:adversary-value}
We consider the valuation of the advertisers are generated from adversary in this subsection. 
Intuitively, standard UCB algorithm is tailored to stochastic bandit and cannot work for the adversarial setting. Our result doesn't contradict this common sense because the valuation of all advertisers can be observed in the beginning of each round, due to the truthfulness of our mechanism. In other words, the adversarial valuation can be treated as an adversarially generated but known context and the uncerntainty of this problem is only from the CTRs.

We show in Theorem~\ref{thm:sqrt_ub} that our online mechanism proposed in Algorithm~\ref{alg:ucb_auction} can achieve $\widetilde{O}(\sqrt{T})$ regret and the complete proof is deferred to Appendix~\ref{app:sqrt_ucb}.
\begin{restatable}{theorem}{sqrtub}
\label{thm:sqrt_ub}
Let $M$ be a positive constant s.t. $M \geq \smax_{i, t} \rho_{i} v_{i, t}$.
Then the regret achieved by Algorithm~\ref{alg:ucb_auction} for the adversarial valuation setting can be bounded by,
\[
R_T \leq M \cdot \sum_{i=1}^n \frac{\sqrt{24 T \log(2nT)}}{\rho_i} + \frac{M}{T}.
\]
\end{restatable}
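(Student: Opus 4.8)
The plan is to invoke stage-IC (Proposition~\ref{prop:myerson1}) so that every advertiser bids truthfully, $b_{i,t}=v_{i,t}$, and then to write the regret as a sum of per-round terms that are measurable with respect to the history before each round, so the click randomness integrates out cleanly. Write $\OPT_t:=\smax_i \rho_i v_{i,t}$, let $A_t,B_t$ be the winner and runner-up of Line~\ref{line:ucb_ranking}, and set $p_t:=\tilde\rho_{B_t,t}v_{B_t,t}$ (the second-highest UCB score). Since $A_t,B_t,\tilde\rho_{\cdot,t},N_{\cdot,t}$ are all determined by the history $\mathcal{F}_t$ before round $t$, the expected revenue of round $t$ conditioned on $\mathcal{F}_t$ is $\rho_{A_t}\,p_t/\tilde\rho_{A_t,t}$ (click probability times per-click price). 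Hence
\[
R_T=\sum_{t=1}^T \mathbb{E}[g_t],\qquad g_t:=\OPT_t-\frac{\rho_{A_t}}{\tilde\rho_{A_t,t}}\,p_t ,
\]
with each $g_t$ being $\mathcal{F}_t$-measurable. I would then fix the UCB radius $r_{i,t}:=\sqrt{3\log(2nT)/(2N_{i,t})}$ (the radius of Algorithm~\ref{alg:ucb_auction}, with the union-bound log factor) and let $\cE$ be the event that $|\hat\rho_{i,t}-\rho_i|\le r_{i,t}$ for all $i,t$; by Hoeffding and a union bound over the at most $nT$ ad--count pairs, $\Pr[\cE^c]\le 1/T^2$.

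The heart of the argument is a per-round bound valid on $\cE$. There, $\tilde\rho_{i,t}\ge\rho_i$ and $\tilde\rho_{i,t}-\rho_i\le 2r_{i,t}=\sqrt{6\log(2nT)/N_{i,t}}$ for all $i$. The first fact gives $\tilde\rho_{i,t}v_{i,t}\ge\rho_i v_{i,t}$ coordinatewise; since the second-largest order statistic is monotone under coordinatewise domination (write the $m$-th largest as $\min_{|S|=m}\max_{i\in S}$ and compare termwise), the second-highest UCB score dominates the second-highest true eCPM, i.e.\ $p_t\ge\OPT_t$. Combining this with $\rho_{A_t}/\tilde\rho_{A_t,t}\le 1$ yields
\[
g_t\le p_t\Big(1-\tfrac{\rho_{A_t}}{\tilde\rho_{A_t,t}}\Big)=p_t\cdot\frac{\tilde\rho_{A_t,t}-\rho_{A_t}}{\tilde\rho_{A_t,t}}\le v_{A_t,t}\,(\tilde\rho_{A_t,t}-\rho_{A_t}),
\]
where the last step uses $p_t\le \tilde\rho_{A_t,t}v_{A_t,t}$ (the runner-up score is at most the winner score). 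This one inequality simultaneously covers both the ``wrong ad shown'' and ``correct ad shown'' cases from the techniques overview. Using $\tilde\rho_{A_t,t}-\rho_{A_t}\le\sqrt{6\log(2nT)/N_{A_t,t}}$ and $v_{A_t,t}\le M/\rho_{A_t}$ (as $M$ dominates the eCPM of the shown ad), I obtain $g_t\le \frac{M}{\rho_{A_t}}\sqrt{6\log(2nT)/N_{A_t,t}}$ on $\cE$.

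It remains to sum. Grouping rounds by the ad shown, the counter $N_{A_t,t}$ runs through $1,2,\dots$ for each ad, so $\sum_{t:A_t=i}N_{i,t}^{-1/2}\le\sum_{k=1}^{N_i}k^{-1/2}\le 2\sqrt{N_i}\le 2\sqrt T$, turning the $\cE$-contribution into $\sum_i \frac{M}{\rho_i}\,2\sqrt{6T\log(2nT)}=\sum_i \frac{M}{\rho_i}\sqrt{24T\log(2nT)}$. On $\cE^c$ one always has $g_t\le \OPT_t\le M$, so $\sum_t \mathbb{E}[g_t\,\ind{\cE^c}]\le MT\,\Pr[\cE^c]\le M/T$; adding the two pieces gives the stated bound. \textbf{The main obstacle} is conceptual rather than computational: because the benchmark $\OPT_t$ changes every round (unlike the fixed optimal arm in multi-armed bandits) one cannot track the gap to a single comparator, and the device that rescues the argument is the order-statistic monotonicity $p_t\ge\OPT_t$, which lets us charge all per-round regret to the single overestimate $\tilde\rho_{A_t,t}-\rho_{A_t}$ of the ad actually shown, hence to its visit count $N_{A_t,t}$. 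The only other subtlety—the coupling between the clicks and the event $\cE$—is dissolved by the observation that each $g_t$ is $\mathcal{F}_t$-measurable, so the click randomness is integrated out before $\cE$ is invoked.
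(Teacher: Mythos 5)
Your proposal follows essentially the same route as the paper's proof: the same high-probability event on which $\tilde\rho_{i,t}-\rho_i\in\bigl[0,\,2\sqrt{3\log(2nT)/(2N_{i,t})}\bigr]$ for all $i,t$, the same observation that the runner-up UCB score dominates the per-round benchmark (the paper's first inequality uses precisely $\tilde{\rho}_{j,t}v_{j,t}=\smax_k\tilde{\rho}_{k,t}v_{k,t}\geq\smax_k\rho_{k}v_{k,t}$), the same reduction of the per-round regret to $\frac{M}{\rho_{A_t}}\sqrt{6\log(2nT)/N_{A_t,t}}$, and the same visit-count summation and $M/T$ contribution from the complement event. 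The paper likewise handles this theorem with a single inequality rather than a wrong-winner/right-winner split, so your structure matches it.

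There is one step, however, that does not follow from the stated hypothesis. After $g_t\le p_t\bigl(1-\rho_{A_t}/\tilde\rho_{A_t,t}\bigr)$ you bound $p_t\le\tilde\rho_{A_t,t}v_{A_t,t}$ and then invoke $v_{A_t,t}\le M/\rho_{A_t}$, i.e.\ $\rho_{A_t}v_{A_t,t}\le M$. But $M$ is only assumed to dominate the \emph{second}-largest eCPM, and the winner $A_t$ may well be the ad with the largest true eCPM at round $t$, which can exceed $\smax_i\rho_iv_{i,t}$ (and hence $M$) by an arbitrary amount; so ``$M$ dominates the eCPM of the shown ad'' is not available. The repair is to apply your inequality $p_t\ge\OPT_t$ to the subtracted revenue term rather than to the leading term: this gives $g_t\le\OPT_t\bigl(1-\rho_{A_t}/\tilde\rho_{A_t,t}\bigr)\le M\cdot\frac{2}{\rho_{A_t}}\sqrt{3\log(2nT)/(2N_{A_t,t})}$, which needs only $M\ge\OPT_t$ and is exactly the paper's chain. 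With that one substitution the rest of your argument goes through verbatim and yields the stated bound.
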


\subsection{Fixed Valuation}\label{subsec:fixed-value}

Recall the fixed valuation setting that the values are fixed and static over time, i.e., $v_{i, t} = v_i, \forall i, t$. In this setting, our UCB-style online mechanism achieves \emph{negative} regret, as long as there is a \emph{time-independent} constant gap between the optimal winner (highest eCPM) and runner up (second highest eCPM), i.e.
\begin{eqnarray}\label{eq:min-gap}
\zeta := \max_{i \in [n]} 
\rho_i v_i - \smax_{i \in [n]} \rho_i v_i > 0
\end{eqnarray}
such that $\zeta > 0$ is a \emph{time-independent} constant. Without this \emph{time-independent} constant gap assumption, we show $\Omega(\sqrt{T})$ lower bound for the fixed valuation setting in Section~\ref{sec:lower-bound}.

Furthermore, the \emph{negative} regret achieved by our algorithm  is \emph{linear} in $T$, which implies that our online mechanism can achieve (unbounded) revenue gain that scales linearly with the number of total rounds, compared with second price auctions associated with true CTRs. 
To simplify notations, it is without loss of generality to assume arm 1 is the optimal arm that has the highst score of eCPM $\rho_i v_i$ and denote
\begin{eqnarray}\label{eq:gap-per-arm}
\forall i = 2, 3,\cdots, n, \Delta_i := \frac{\rho_1 v_1 - \rho_i v_i}{v_i}
\end{eqnarray}

\begin{theorem}
\label{thm:neg_ub}
In the fixed valuation setting,
the UCB-style online mechanism proposed by Algorithm~\ref{alg:ucb_auction} obtains a regret bounded by,
\begin{align*}
R_T
& \leq -0.05 \zeta  T + O(\log(nT))
\end{align*}
where $\zeta > 0$ is a \emph{time-independent} constant defined in Eq.\eqref{eq:min-gap}.
\end{theorem}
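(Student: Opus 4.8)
The plan is to use that the mechanism is stage-IC (Proposition~\ref{prop:myerson1}), so every myopic advertiser bids $b_{i,t}=v_i$, and to condition on the ``clean event'' $\mathcal{E}$ that all confidence intervals hold, i.e.\ $|\hat{\rho}_{i,t}-\rho_i|\le \sqrt{3\log T/(2N_{i,t})}$ for all $i,t$; by Hoeffding and a union bound $\Pr[\mathcal{E}^c]=O(1/T)$, and since the per-round regret is at most $M$, the event $\mathcal{E}^c$ contributes only $O(1)$ to $R_T$. Sorting so that $\rho_1 v_1 > \rho_2 v_2 \ge \cdots \ge \rho_n v_n$, the benchmark is $\OPT = T\rho_2 v_2$. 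I would first split $R_T$ over the rounds where $A_t\neq 1$ (``wrong ad'') and $A_t=1$ (``right ad''). On $\mathcal{E}$ the standard UCB gap argument shows each suboptimal arm $j$ is shown only $O(\log T/\Delta_j^2)$ times, so there are only $O(\log(nT))$ wrong-ad rounds, each contributing at most $M$; this disposes of the first group.

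For a right-ad round the expected revenue is $\rho_1\,\tilde{\rho}_{B_t,t}v_{B_t}/\tilde{\rho}_{1,t}$. Writing $s_t := \tilde{\rho}_{B_t,t}v_{B_t}$ for the runner-up score and $\eta_t := \tilde{\rho}_{1,t}v_1 - s_t \ge 0$ for the winning margin, and using $\rho_1\le\tilde{\rho}_{1,t}$ on $\mathcal{E}$, this revenue equals $\rho_1 v_1 - (\rho_1/\tilde{\rho}_{1,t})\eta_t \ge \rho_1 v_1 - \eta_t$. Hence each right-ad round contributes at most $\rho_2 v_2 - (\rho_1 v_1 - \eta_t) = -\zeta + \eta_t$. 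The entire difficulty — and the step I expect to be the main obstacle — is to show that the runner-up score $s_t$ stays close to the \emph{top} eCPM $\rho_1 v_1$ rather than to its own value, i.e.\ that $\eta_t$ is small; this is exactly where one must open up the UCB black box, since ordinary UCB analyses only control the winner's estimate.

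My approach to bounding $\eta_t$ is to partition the right-ad rounds into maximal blocks during which no suboptimal arm is shown, so that all suboptimal statistics are frozen and $B_t$ is a fixed arm $j$. For every block \emph{except the last}, the block ends when arm $j$ overtakes ad $1$ and is shown, so at that transition $\tilde{\rho}_{j}v_j \ge \tilde{\rho}_{1}v_1 \ge \rho_1 v_1$ on $\mathcal{E}$; since arm $j$'s score is frozen throughout the block, $s_t = \tilde{\rho}_j v_j \ge \rho_1 v_1$ for \emph{every} round of the block, giving $\eta_t \le \tilde{\rho}_{1,t}v_1 - \rho_1 v_1 = v_1(\tilde{\rho}_{1,t}-\rho_1)\le 2v_1\sqrt{3\log T/(2N_{1,t})}$. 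For the final (open) block I would argue that any frozen arm must have been shown $\Omega(\log T)$ times: otherwise its width $\sqrt{3\log T/(2N_{j,t})}$ is so large that even $\hat{\rho}_{j,t}\ge 0$ forces $\tilde{\rho}_{j,t}v_j > \tilde{\rho}_{1,t}v_1$, contradicting that it is frozen. Consequently the one-step change of $\tilde{\rho}_{j}$ at its last pull — both the $\le 1/N_{j}$ mean update and the $\le \sqrt{3\log T/2}\,(N_j^{-1/2}-(N_j{+}1)^{-1/2})$ width shrinkage — is $o(1)$; combined with the fact that the runner-up \emph{won} at its last pull (pre-drop score $\ge \rho_1 v_1$), this yields $s_t \ge \rho_1 v_1 - o(1)$, and again $\eta_t \le 2v_1\sqrt{3\log T/(2N_{1,t})} + o(1)$.

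Finally I would combine the pieces. The bound $\eta_t \le 2v_1\sqrt{3\log T/(2N_{1,t})}+o(1)$ exceeds $0.9\zeta$ only when $N_{1,t} = O(\log T/\zeta^2)$, i.e.\ on at most $O(\log T)$ right-ad rounds, each contributing at most $M$. On every other right-ad round $\eta_t \le 0.9\zeta + o(1)$, so the per-round contribution is at most $-\zeta + 0.9\zeta + o(1) \le -0.05\zeta$ once $T$ exceeds a constant depending on the gaps. Summing, $R_T \le -0.05\zeta\,(T - O(\log(nT))) + O(\log(nT)) = -0.05\zeta T + O(\log(nT))$, where the $O(\log(nT))$ collects the wrong-ad rounds, the $O(\log T)$ large-width rounds, and the $O(1)$ from $\mathcal{E}^c$. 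The crux throughout is the third paragraph: in sharp contrast to usual UCB analyses, the gain comes from UCB \emph{keeping} the runner-up's estimate inflated, so that ad $1$ is charged a price governed by a competitor score pinned near $\rho_1 v_1$.
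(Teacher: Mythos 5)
Your proposal is correct, and its skeleton (stage-IC implies truthful bids, condition on the clean event, split into wrong-winner and right-winner rounds, charge the $O(\log (nT))$ wrong-winner rounds trivially) matches the paper's decomposition into Lemma~\ref{lem:wrong_winner} and Lemma~\ref{lem:right_winner}. Where you genuinely diverge is in how you certify that the runner-up's score stays inflated on the rounds where ad $1$ wins. The paper works with the fixed arm $s=\argsmax_i \rho_i v_i$ and argues probabilistically: since $N_{s,t}=O(\log(2nT)/\Delta_s^2)$ (Lemma~\ref{lem:bound_Nit}), arm $s$'s confidence width never shrinks below $\Omega(\Delta_s)$, so $\tilde{\rho}_{s,t}\ge\rho_s+0.08\Delta_s$ with high probability (Claim~\ref{claim:Delta_gap}); the runner-up score therefore dominates $\tilde{\rho}_{s,t}v_s\ge \rho_s v_s+0.08\zeta$, and dividing by $\tilde{\rho}_{1,t}\le \rho_1+o(1)$ once $N_{1,t}$ exceeds a threshold $T_0=O(\log(2nT)/\Delta_s^2)$ gives per-round revenue at least $\rho_s v_s+0.05\zeta$. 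Your argument is pathwise instead: you freeze the suboptimal statistics on maximal blocks of right-winner rounds and anchor the frozen runner-up score to the round at which that arm last won (or will next win), where its score necessarily exceeded $\tilde{\rho}_{1,t}v_1\ge\rho_1 v_1$; the only loss is the one-step update of size $O(1/N_j+\sqrt{\log T}\,N_j^{-3/2})=o(1)$ once $N_j=\Omega(\log T)$, which you correctly force for any non-winning arm in the final block. This pins the runner-up score near $\rho_1 v_1$ rather than merely near $\rho_s v_s+0.08\zeta$, so your per-round gain approaches the full gap $\zeta$ --- actually a stronger constant than the paper's $0.05\zeta$ --- at the price of the extra case analysis for the open final block and of requiring $T$ to exceed an instance-dependent constant (harmless, since the $O(\log(nT))$ term absorbs small $T$). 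Both routes are sound; yours trades the paper's clean high-probability estimate on a single arm for a combinatorial block decomposition that extracts more from the same phenomenon.
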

\begin{proof}[Proof Sketch]
Let $s = \arg \mathrm{smax}_{i} \, \rho_i\cdot v_i$.
As defined in Algorithm~\ref{alg:ucb_auction}, let $\hat{\rho}_{i, t}$ be the empirical mean of $\rho_i$ at the time $t$ and
$\tilde{\rho}_{i, t} = \hat{\rho}_{i,t} + \sqrt{\frac{3 \log T}{2N_{i,t}}}$ is the UCB estimate of $\rho_i$ at time $t$.
Let $A_t = \arg \mathrm{max}_{i} \, \hat{\rho}_{i,t}\cdot v_i$ be the index of the winning ad in round $t$ and $B_t = \arg \mathrm{smax}_{i} \, \hat{\rho}_{i,t}\cdot v_i$ be the index of the runner up.
Finally, let $\Delta_i = \min\left\{1, \frac{\rho_1 v_1 - \rho_i v_i}{v_i} \right\}$.
Let $X_t = \ind{\text{ad $A_t$ was clicked}}$.
First, observe that we can write the regret as
\begin{align}
&R_T\\
&= \E \left [\sum_{t = 1}^T \sum_{i, j=1}^n  \left ( \rho_s v_s - \tilde{\rho}_{j,t} v_j \frac{X_t}{\tilde{\rho}_{i,t}}\right) \mathbb{I} \left \{ A_t = i, B_t = j\right \}  \right] \notag \\
& = \E \left [\sum_{t = 1}^T \sum_{i = 2, j=1}^n  \left ( \rho_s v_s - \tilde{\rho}_{j,t} v_j \frac{X_t}{\tilde{\rho}_{i, t}} \right) \mathbb{I} \left \{ A_t = i, B_t = j\right \}  \right] \label{eqn:wrong_winner} \\
& + \E \left [\sum_{t = 1}^T \sum_{j=1}^n \left ( \rho_s v_s - \tilde{\rho}_{j,t} v_j\frac{X_t}{\tilde{\rho}_{1,t}} \right ) \mathbb{I} \left \{ A_t = 1, B_t = j\right \}  \right]. \label{eqn:right_winner}
\end{align}
Eq.~\eqref{eqn:wrong_winner} corresponds to rounds where we choose a suboptimal advertiser as the winner and Eq.~\eqref{eqn:right_winner} corresponds to rounds where we choose the optimal advertiser as the winner.
To complete the proof of this theorem, we bound Eq.~\eqref{eqn:wrong_winner} and Eq.~\eqref{eqn:right_winner} in Lemma~\ref{lem:wrong_winner} and Lemma~\ref{lem:right_winner} respectively.

\begin{restatable}{lemma}{wrongwinner}
\label{lem:wrong_winner}

$ \\ \\  \E \left [\sum_{t, i, j} \left ( \rho_s v_s - \tilde{\rho}_{j,t} v_j \cdot \frac{X_t}{\tilde{\rho}_{i, t}} \right) \mathbb{I} \left \{ A_t = i, B_t = j\right \}  \right] \\  \leq \frac{2}{T} + \sum_{i=2}^n \frac{12 \rho_s v_s}{\rho_1} \frac{\log(2nT)}{\Delta_i} + \frac{\rho_s v_s}{\rho_i} \frac{\sqrt{6 \log(2nT)}}{n^2T}$.
\end{restatable}

\begin{restatable}{lemma}{rightwinner}
\label{lem:right_winner}
\begin{align*}
    \E & \left [\sum_{t = 1}^T \sum_{j=1}^n \left ( \rho_s v_s - \tilde{\rho}_{j,t} v_j \frac{X_t}{\tilde{\rho}_{1,t}} \right ) \mathbb{I} \left \{ A_t = 1, B_t = j\right \}  \right] \\
    & \leq -0.05 \Delta_s v_s T + 3\rho_s v_s + \frac{9000 \rho_s^3 v_s \log(2nT)}{\rho_1^s \Delta_s^2} \\
    & ~~ + \frac{450 \rho_s^2 v_s \log(2nT)}{\rho_1^2 \Delta_s} + \frac{0.25 \Delta_s v_s}{nT} \\
    & ~~~~+ \sum_{i=2}^n \frac{0.06 \Delta_s v_s \log(2nT)}{\Delta_i^2} \\
    & \leq -0.05 (\rho_1 v_1 - \rho_s v_s) T + O(\log(nT)).
\end{align*}
\end{restatable}
\end{proof}

The proofs of these two auxilliary lemmas are rather technical and we deferred them to Appendix~\ref{app:wrong_winner_proof} and~\ref{app:right_winner_proof}.
At a high-level, the proof of Lemma~\ref{lem:wrong_winner} bears resemblance to the usual proofs in the bandit literature since this is the regret incurred by showing an incorrect ad. In particular, we first show that the UCB estimate is fairly good with very high problem. Conditioned on this event, we are able to show that the regret is small.
The proof of Lemma~\ref{lem:right_winner} is more technically interesting.
A key observation is that the UCB estimate actually overestimates the true CTR by a good margin (see Lemma~\ref{claim:Delta_gap}).
This observation means that, when we do show the correct ad, the price that the ad is charged is slightly higher due additional competition from an underexplored arm.
For the rigorous details of this argument, we invite the reader to refer to Claim~\ref{claim:rightwinner_1} and its proof in Appendix~\ref{app:proof_rightwinner_1}.

\paragraph{Remark.} The astute readers may notice that the linearly \emph{negative} regret mainly comes from the bound of Eq.~\eqref{eqn:right_winner}. Since the runner upper is pulled much less often compared with the optimal advertiser (ad 1), the confidence bound of the runner up's pCTR is higher than the one of the optimal advertiser. Conditioning on the optimal advertiser wins the auction, this difference of confidence bound of the CTR estimates between the winner and runner up provides an additional lever to the price of the optimal advertiser. However, it is non-trivial to argue this claim is true and prove the $-\Omega(T)$ regret bound.

\section{Lower Bound Results}\label{sec:lower-bound}
In this section, we prove a $\Omega(\sqrt{T})$ regret lower bound for any stage-IC and IR mechanism. Indeed, the instance we construct in the lower bound proof still lies in the fixed valuation setting, however, the gap $\zeta$ (defined in Eq. ~\eqref{eq:min-gap}) is the order of $\frac{1}{\sqrt{T}}$. The proof of this lower bound follows the information-theoretical arguments and we carefully mitigates to the CTR prediction setting. 

\begin{theorem}
\label{thm:lb}
For any $T \geq 1$ and any stage-IC and IR auction $\cA$ (Definition~\ref{def:ir}), there exist an instance such that any online mechanism must incur $\Omega(\sqrt{T})$ regret.
\end{theorem}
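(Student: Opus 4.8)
The plan is to establish the bound through a standard two-point (change-of-measure) information-theoretic argument, specialized to the pay-per-click revenue objective. First I would fix a family of fixed-valuation instances that are statistically hard to tell apart from click feedback. Concretely, I would take unit values and let all CTRs sit near $1/2$, perturbed by $\Delta = \Theta(1/\sqrt{T})$, choosing the perturbation so that (i) $\OPT$ is the same across the family and (ii) no ad has eCPM uniformly at or above $\OPT$ — i.e.\ every ad is strictly below the second-highest eCPM in at least one instance of the family. Condition (ii) is what rules out the trivial ``post a reserve equal to the observed value on a fixed ad'' strategy, which would otherwise already guarantee revenue $\ge \OPT$; it is the reason the construction needs at least three ads rather than two.

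Second, I would turn the revenue regret into an allocation (identification) regret using Lemma~\ref{lem:myerson} and individual rationality (Definition~\ref{def:ir}). IR forces the winner's expected payment to be at most its eCPM, $p_{A_t, t} \le v_{A_t}\rho_{A_t}\hat{x}_{A_t, t}$, and losers pay $0$, so
\begin{align}
\Rev_T(\cA) \le \E\Big[\sum_{t=1}^T \sum_{i=1}^n \rho_i v_i\, \hat{x}_{i,t}\Big].
\end{align}
Subtracting from $\OPT = \sum_t \smax_i \rho_i v_i$ shows that every unit of allocation placed on a below-$\OPT$ ad costs $\Theta(\Delta)$ in regret, so for a constant $c>0$,
\begin{align}
R_T \ge c\,\Delta\cdot \E\big[N_{\mathrm{bad}}\big],
\end{align}
where $N_{\mathrm{bad}}$ counts the (fractional) rounds in which $\cA$ shows an ad whose true eCPM is below $\OPT$. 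This step is where stage-IC and IR are used, and it holds for every admissible mechanism, not just the second-price one.

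Third, I would lower bound $\E[N_{\mathrm{bad}}]$ by $\Omega(T)$ on the worst instance of the family. The only information $\cA$ receives is the click/no-click on the shown ad, a Bernoulli draw whose mean differs by $O(\Delta)$ across instances, so the per-round contribution to the KL divergence between two instances is $O(\Delta^2)$ and is incurred only on rounds that actually show a ``distinguishing'' ad. Summing over $T$ rounds, the total divergence is $O(\Delta^2 T) = O(1)$, so by Pinsker (or Bretagnolle--Huber) the laws of the transcript under any two instances have total variation bounded away from $1$. A symmetrization over the family then forces $\cA$ to misidentify the high-eCPM ads on a constant fraction of rounds in some instance, i.e.\ $\max_{\text{instance}} \E[N_{\mathrm{bad}}] = \Omega(T)$. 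Combining with the previous display yields $R_T \ge \Omega(\Delta T) = \Omega(\sqrt{T})$ for $\Delta = \Theta(1/\sqrt{T})$.

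The main obstacle is the reduction in the second and third steps rather than the raw information theory: unlike a standard bandit, the revenue depends on ads that are \emph{not} shown (through the second-price competition), and the mechanism has the full flexibility of any monotone allocation with its Myerson payment, including reserve-type rules. The delicate point is therefore to design the instance family so that this flexibility cannot be exploited (no uniformly ``safe'' ad), and to carry out the change-of-measure while the informative feedback is only the partial, allocation-dependent click signal — this is exactly the ``careful mitigation to the CTR prediction setting'' that an off-the-shelf bandit lower bound does not directly provide.
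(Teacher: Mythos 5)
Your proposal is correct and follows essentially the same route as the paper's proof: a small family of fixed-valuation, unit-value instances with CTRs near $1/2$ separated by $\Theta(1/\sqrt{T})$, arranged in tied pairs so that $\OPT$ is identical across instances and no ad is safe, with IR used to bound revenue by the allocation-weighted eCPM and a KL/Pinsker change-of-measure showing the click transcripts are indistinguishable, yielding $\Omega(\eps T)=\Omega(\sqrt{T})$ regret on some instance. The paper instantiates your sketch with exactly four ads (two tied pairs) and two instances, so the only difference is that you leave the instance family and constants implicit.
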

\begin{proof}
Suppose there are four ads, each of which have value $1$.
We consider two different instances $\cI_1$ and $\cI_2$ and we let $\rho_{i, j}$ be the CTR of ad $j$ in instance $i$.
In $\cI_1$, we have $\rho_{1, 1} = \rho_{1, 2} = 1/2 + \eps / 2$ and $\rho_{1, 3} = \rho_{1, 4} = 1/2$.
In $\cI_2$, we have $\rho_{2, 1} = \rho_{2, 2} = 1/2 + \eps / 2$ and $\rho_{2, 3} = \rho_{2, 4} = 1/2$.
Let $\cA$ be any auction and let $R_i(T)$ denote the regret at time $T$ against instance $\cI_i$.
Let $r_{i,t}$ be the expected revenue received by $\cA$ at time $t$ and let $q_{i, t}$ be the probability that the ad chosen at time $t$ is in $\{1, 2\}$.
Note that $r_{i, t} \leq \rho_{i, 1} q_{i, t} + \rho_{i, 3} (1 - q_{i, t})$.
This is because the mechanism is IR so the expected revenue from showing ad $j$ can be no more than $\rho_{i,j}$.
Thus,
\[
    R_1(T) = \sum_{t=1}^T \frac{1+\eps}{2} - r_{1, t} \geq \sum_{t=1}^T \frac{\eps}{2} \cdot (1 - q_{1, t}).
\]
Analogously,
\[
    R_2(T) \geq \sum_{t=1}^T \frac{\eps}{2} \cdot q_{2,t}.
\]

Let $\cD_{i, t}$ be an independent realization of one draw from $\cI_i$ (i.e.~whether if an ad is shown, if it is clicked) and let $\cD_i = (\cD_{i,1}, \ldots, \cD_{i, T})$.
Let $\cD_{i, t}^\cA$ denote the distribution of the ads $\cA$ at time step $t$ when the instance is $\cI_i$
and $\cD_i^\cA = (\cD_{i,1}^\cA, \ldots, \cD_{i, T}^\cA)$.
Note that the arm chosen by $\cA$ at time $t$ is a randomized function of the realizations of $\cD_{i, 1}, \ldots, \cD_{i, t-1}$.
We have
\begin{align*}
    2\dtv{\cD_1^\cA}{\cD_2^\cA}^2
    & \leq 2\dtv{\cD_1}{\cD_2}^2 \\
    & \leq \KL{\cD_1}{\cD_2} \\
    & = T \cdot \KL{\cD_{1,1}}{\cD_{2,1}} \\
    & \leq 8T \eps^2.
\end{align*}
Here, the second inequality is Pinkser's Inequality and the last inequality is a straightforward calculation to verify that $\KL{\cD_{1,1}}{\cD_{2,1}} \leq 8 \eps^2$ for $\eps \in (0, 1/2]$.
Setting $\eps = \frac{1}{8\sqrt{T}}$, we have $\dtv{\cD_1^\cA}{\cD_2^\cA} \leq 1/2$.

Thus, we have that
\[
    R_1(T) + R_2(T) \ge \sum_{t=1}^T \frac{\eps}{2} \cdot (1-q_{1,t} + q_{2,t})
    \geq \frac{\eps T}{4} = \frac{\sqrt{T}}{32}.
\]
where in the inequality, we used that $\dtv{\cD_1^A}{\cD_2^A} \leq 1/2$ implies $q_{2,t} - q_{1,t} \geq -1/2$ for all $t \in [T]$.
Thus, $\max\{R_1(T), R_2(T)\} \geq \frac{\sqrt{T}}{64}$.
\end{proof}

\paragraph{Remark.} In fact, the lower bound proof doesn't utilize stage-IC property and we only need the advertisers don't overbid (bid cannot exceed value) and the mechanism is IR. Therefore, our lower bound result can be strengthen to a broader mechanism class.

\section{Non-Myopic Advertisers with Fixed Valuation}
\label{sec:global_ic_static}
In this section, we focus on the non-myopic setting.
Our goal is to design a global-IC mechanism to minimize regret. The adversarial valuation setting has been studied in prior work~\citep{DK09}, which shows $\Omega(T^{2/3})$ regret lower bound and provides an explore-then-commit algorithm to achieve matching regret upper bound.

Similar to the myopic case, we are still interested in the setting that the advertisers are non-myopic but the valuation are fixed, where there also exists a \emph{time-independent} constant gap $\zeta$ (defined in Eq.~\eqref{eq:min-gap}). 
We propose an online mechanism combining the
ideas of UCB and explore-then-commit algorithm, shown in Algorithm~\ref{alg:non_myopic_fixed}. For simplicity, we assume the seller can effectively elicit the true values $\{v_i \}$ through a global-IC mechanism beforehand. We will briefly dicuss how to make the algorithm work without this assumption in Remark~\ref{unknownvalue}.
The algorithm first runs pure exploration rounds until we find an arm whose lower confidence bound of the estimated eCPM is larger than the upper confidence bound of the estimated eCPMs of all the other arms. In the remaining rounds, we run naive VCG mechanism using the UCB estimates of CTRs observed from the initial exploration rounds, which is called ``exploitation" phase as we are not updating the estimated CTR anymore in these rounds.

\begin{algorithm}
\caption{Exlore-then-commit algorithm for fixed valuation setting}
\label{alg:non_myopic_fixed}
\begin{algorithmic}[1]
   \Repeat
    \State Show each ad $i \in [n]$ once (for free) and observe click; update $\hat{\rho}_{i}$ and $N_i$ accordingly. Let $\tilde{\rho}_i = \hat{\rho}_{i} + \sqrt{\frac{3 \log T}{2N_{i}}}$ (UCB), and $L_i = \hat{\rho}_{i} - \sqrt{\frac{3 \log T}{2N_{i}}}$ (LCB).
    \Until{finding a clear winner $i^*$ such that $v_{i^*} \cdot L_{i^*} > v_j \cdot \tilde{\rho}_{j}, \forall j \ne i^*$.}
\State Show each ad $i \in [n]$ once (for free) and observe click; initialize $\hat{\rho}_{i, 1} = \ind{\text{ad $i$ was clicked}}$ and $N_{i, 1} = 1$.
\For{each remaining round $t$}
\State Solicit bids $b_{i, t}$ for each advertiser $i \in [n]$.
\State Let $C_{i^*} = L_{i^*}$ and  $C_{j} = \tilde{\rho}_{j}$ for $j \ne i^*$.
\State Let $A_t \in \argmax_{i \in [n]} C_i \cdot b_{i, t}$ (winner) and $B_t \in \argsmax_{i \in [n]} C_i \cdot b_{i, t}$ (runner up). 
\State Show ad $A_t$. Let $X_t = \ind{\text{ad $A_t$ was clicked}}$ and charge $\frac{C_{B_t} \cdot b_{B_t, t}}{C_{A_t}} \cdot X_t$ to ad $A_t$ (other ads pay $0$). 
\EndFor
\end{algorithmic}
\end{algorithm}

First it is easy to see that the online mechanism shown in Algorithm~\ref{alg:non_myopic_fixed} is global-IC. The algorithm has two phases: exploration and exploitation. In the exploration phase, any bidding strategy will not affect the outcome nor the CTR learning procedure. In the exploitation phase, all auction parameters are fixed and thus each of the remaining rounds of this phase are independent auctions. Finally, we know that the mechanism is stage-IC in each individual round. Therefore we conclude that Algorithm~\ref{alg:non_myopic_fixed} is global-IC.

\begin{claim}
\label{claim:myerson2}
    The online mechanism proposed in Algorithm~\ref{alg:non_myopic_fixed} is global-IC.
\end{claim}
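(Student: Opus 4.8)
The plan is to exploit the two-phase structure of Algorithm~\ref{alg:non_myopic_fixed} and argue that an advertiser's bid in any single round can never influence the outcome of any other round; once this decoupling is established, global-IC (Definition~\ref{defn:global_ic}) reduces to verifying stage-IC in each individual round, which we already know holds. I would fix an arbitrary advertiser $i$ and an arbitrary deviating bid sequence $(b_{i,t})_t$, and decompose the total utility $\sum_t u_{i,t}$ into the contribution from the exploration rounds and the contribution from the exploitation rounds.

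First I would handle the exploration phase. Here every ad is shown for free in a fixed round-robin order, and the stopping condition $v_{i^*} L_{i^*} > v_j \tilde{\rho}_j$ depends only on the empirical CTR estimates (hence on the realized clicks) and on the pre-elicited values $\{v_j\}$; it never reads any bid $b_{i,t}$. Since each displayed ad is charged nothing and receives its value upon a click, and since the click of ad $j$ in an exploration round is an independent $\Ber(\rho_j)$ draw generated regardless of the bids, the distribution of the entire exploration transcript --- the allocations, the clicks, the frozen scores $C_j$, and the random stopping time --- is independent of $(b_{i,t})_t$. Consequently the exploration-phase contribution to advertiser $i$'s expected utility is a constant that does not depend on how $i$ bids.

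Next I would analyze the exploitation phase conditioned on the exploration transcript. At this point all scores $C_j$ are fixed, so each remaining round $t$ runs a pay-per-click second-price auction (Definition~\ref{def:spa-ppc}) with CTR estimates $f_{j,t} = C_j$ that are fixed and, crucially, independent of advertiser $i$'s own bid. Nothing computed in round $t$ feeds back into the parameters of any later round, so the only quantities a bid $b_{i,t}$ influences are the allocation and payment of round $t$ itself. The exploitation rounds are therefore mutually independent auctions, and by the standard characterization of VCG with bid-independent CTR estimates --- the same fact underlying Proposition~\ref{prop:myerson1} and \citet{aggarwal2006truthful} --- each such round is stage-IC: bidding $b_{i,t} = v_i$ maximizes $u_{i,t}(v_i; b_{i,t}, b_{-i,t})$ for every fixed $b_{-i,t}$.

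Finally I would combine the two observations. Because the exploration term is bid-independent and the exploitation rounds are decoupled and individually stage-IC, truthful bidding simultaneously maximizes every summand of $\sum_t u_{i,t}(v_i;\,\cdot\,, b_{-i,t})$ --- the exploration summands trivially, the exploitation summands by stage-IC --- so summing the pointwise-optimal choices and taking expectation over the exploration randomness yields the global-IC inequality. The only delicate step, and the one I expect to be the main obstacle, is the independence claim for the exploration phase: one must verify that the data-dependent random stopping time is measurable with respect to the realized clicks and the pre-elicited values alone, so that a strategic advertiser cannot lengthen, shorten, or otherwise steer the exploration phase in order to manipulate the frozen scores $C_j$ that govern the entire exploitation phase.
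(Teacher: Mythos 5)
Your argument is correct and follows essentially the same route as the paper's: bids play no role in the exploration phase (the stopping rule depends only on realized clicks and the pre-elicited values), the exploitation rounds are decoupled auctions with frozen, bid-independent scores, and stage-IC of each such round then yields global-IC. The extra care you take about the measurability of the random stopping time is a welcome sharpening of the paper's more informal statement, but it is not a different proof.
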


Our main result in this section is the following theorem which establishes a \emph{negative} regret for the mechanism proposed in Algorithm~\ref{alg:non_myopic_fixed}. Without loss of generality, we assume arm 1 is the optimal arm that has the highest eCPM score, i.e., $\rho_1 v_1 > \rho_j v_j, \forall j\neq 1$. Similar to the myopic setting, we define $\Delta_i = \frac{\rho_1 v_1 - \rho_i v_i}{v_i}$ for each advertiser $i\in [n]$ and assume the gap $\rho_1 v_1 - \max_{j\neq 1} \rho_j v_j$ (same as Eq.~\eqref{eq:min-gap}) is a \emph{time-independent} positive constant.
\begin{theorem}
\label{thm:static_non_myopic}
In the fixed valuation setting, the online mechanism shown in Algorithm~\ref{alg:non_myopic_fixed} obtains a regret of 
\[
R_T = -\Omega(T),
\]
as long as the gap $\rho_1 v_1 - \max_{j\neq 1} \rho_j v_j$ is a \emph{time-independent} positive constant.
\end{theorem}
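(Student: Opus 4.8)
The plan is to split the horizon into the exploration (explore) phase and the exploitation (commit) phase of Algorithm~\ref{alg:non_myopic_fixed}, and to track carefully the confidence width at the moment the algorithm commits. The guiding observation is that, because the gap $\zeta$ is a \emph{time-independent constant}, the algorithm stops exploring while the confidence width $w_k = \sqrt{3\log T/(2k)}$ used in the UCB/LCB is still a \emph{constant} of order $\zeta$, even though the actual empirical error $|\hat\rho_i - \rho_i|$ is a strictly smaller constant. Since the commit phase charges the winner using the inflated loser score $\tilde\rho_B$ in the numerator and the deflated winner LCB $L_1$ in the denominator, this separation between the inflation $w_K$ and the true error translates into a constant per-round revenue surplus over $\OPT$, hence $-\Omega(T)$ regret.

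First I would fix a \emph{tight} concentration event $\cE$: with $\hat\rho_i^{(k)}$ the empirical CTR of arm $i$ after $k$ round-robins, let $w'_k = \sqrt{\log T/k} = \sqrt{2/3}\,w_k$ and let $\cE$ be the event that $|\hat\rho_i^{(k)} - \rho_i| \le w'_k$ for all $i\in[n]$ and all $k \le T$. Hoeffding plus a union bound gives $\Pr[\cE^c] = O(n/T)$, and crucially $w'_k$ is a strict constant fraction of the inflation $w_k$. On $\cE$ I would then control the explore phase. Termination commits with width $\Theta(\zeta)$: at the round-robin $K-1$ just before termination the stopping rule fails, so for the eventual winner there is a competitor $j$ with $v_1 L_1 \le v_j\tilde\rho_j$; plugging in the concentration bounds yields $\rho_1 v_1 - \rho_j v_j \le (w_{K-1}+w'_{K-1})(v_1+v_j)$, and as the left side is at least $\zeta$ this forces $w_{K-1}=\Omega(\zeta)$, hence $w_K = \Omega(\zeta)$. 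Conversely, once $w_k$ falls below a $\Theta(\zeta)$ threshold the condition $v_1(\rho_1-2w_k) > v_j(\rho_j+2w_k)$ holds for every $j$, so the phase ends by $K = O(\log T/\zeta^2)$ round-robins, i.e. in $nK = O(\log T)$ rounds. The same bounds give $v_{i^*}L_{i^*}\le v_{i^*}\rho_{i^*}$ and $v_j\tilde\rho_j \ge v_j\rho_j$, so the declared winner is the true optimal arm, $i^*=1$.

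Next I would evaluate the frozen exploitation revenue. By the termination condition arm $1$ wins every remaining round and the runner-up is the fixed index $B=\argmax_{j\neq 1}\tilde\rho_j v_j$, so the expected per-round revenue is $\rho_1\tilde\rho_B v_B/L_1$. The crux step is the lower bound $\tilde\rho_B v_B \ge \tilde\rho_s v_s = (\hat\rho_s + w_K)v_s \ge (\rho_s + (w_K - w'_K))v_s = \rho_s v_s + \Omega(\zeta)$, where I used $s\neq 1$, the optimality of $B$ among losers, the bound $\hat\rho_s \ge \rho_s - w'_K$ on $\cE$, and $w_K - w'_K = (1-\sqrt{2/3})w_K = \Omega(\zeta)$; together with $L_1\le\rho_1$, hence $\rho_1/L_1\ge 1$, this gives per-round revenue at least $\rho_s v_s + \Omega(\zeta)$, a constant \emph{above} the per-round benchmark $\OPT/T = \rho_s v_s$. (The LCB in the denominator contributes the extra factor $\rho_1/L_1>1$, but it is not needed for the leading term.) Summing over the $T - nK = \Theta(T)$ commit rounds, on $\cE$ we obtain $R_T \le -\Omega(\zeta)T + O(\log T)$. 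Since $|R_T|\le MT$ always and $\Pr[\cE^c]=O(n/T)$, the event $\cE^c$ contributes only $O(1)$, so $\E[R_T] = -\Omega(T)$.

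The main obstacle is precisely the crux step, and the trap is to use the natural high-probability bound $|\hat\rho_i-\rho_i|\le w_K$: this only yields $R_T\le 0$, because the $+w_K$ inflating $\tilde\rho_B$ and the $-w_K$ deflating $L_1$ are exactly cancelled by the worst-case estimation error, leaving no surplus. The negative regret emerges only once one separates the \emph{loose} confidence inflation $w_K=\Theta(\zeta)$ from the genuinely smaller empirical error $w'_K$, which is what the tight event $\cE$ supplies. A secondary technical point, also handled by $\cE$, is that the commit time $K$ is a data-dependent stopping time, so the concentration must hold uniformly (anytime) over all $k\le T$ rather than for a single fixed $k$.
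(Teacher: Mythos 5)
Your proof is correct and follows essentially the same route as the paper's: exploration terminates after $O(\log T/\zeta^2)$ round-robins with arm $1$ as the clear winner (Lemma~\ref{lem:static_exploration}), the frozen UCB $\tilde{\rho}_j$ then overestimates $\rho_j$ by $\Omega(\zeta)$ because the confidence width at the stopping time is still $\Omega(\zeta)$ while the empirical error is only a strict constant fraction of it (Lemma~\ref{lem:static_positive_gap}), and this surplus in the price numerator together with $L_1 \le \rho_1$ in the denominator yields a constant per-round revenue gain over $\rho_s v_s$ (Lemma~\ref{lem:static_negative_regret}). Your write-up is in fact slightly more careful than the paper's on two minor points --- the uniform-in-$k$ concentration needed because the commit time is data-dependent, and the distinction between the exploitation-phase runner-up $B$ and the true second-best arm $s$ --- but these are refinements of the same argument rather than a different approach.
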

Note that in Theorem~\ref{thm:static_non_myopic} and in the rest of this section, the notation $\Omega(\cdot)$ and $O(\cdot)$ also suppresses dependence on $n, \rho_i, v_i$.

We prove our \emph{negative} regret result in three steps.
First, the goal of exploration phase in Algorithm~\ref{alg:non_myopic_fixed} is to find a ``clear winner'', i.e.~a winner whose lower confidence bound is an upper bound on the upper confidence bound of all other ads.
In Lemma~\ref{lem:static_exploration}, we show that, with high probability, this takes at most $O(\log T)$ rounds and that the clear winner is the ad with the highest eCPM.
Next, we show, in Lemma~\ref{lem:static_positive_gap}, that there is a large gap between the UCB estimate of the CTR and the true CTR.
Finally, in Lemma~\ref{lem:static_negative_regret}, that the strictly positive gap from Lemma~\ref{lem:static_positive_gap} results in a negative regret in the exploitation phase.

\begin{lemma}
\label{lem:static_exploration}
Algorithm~\ref{alg:non_myopic_fixed} finds the clear winner $1$ in $O(\log T)$ rounds with probability $1 - O(\frac{1}{T})$. 
\end{lemma}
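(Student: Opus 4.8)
The plan is to condition on a high-probability \emph{good event} under which all empirical CTR estimates lie in their confidence intervals, and then argue deterministically that (i) whenever a clear winner is declared it must be ad~$1$, and (ii) the clear-winner condition is met after only $O(\log T)$ round-robins.

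First I would define the good event $\cE$ that throughout the exploration phase every confidence interval is valid: for every ad $i$ and every value of the counter $N_i \in \{1, \ldots, T\}$, the empirical mean $\hat{\rho}_i$ formed from $N_i$ free impressions satisfies $|\hat{\rho}_i - \rho_i| \leq \sqrt{3 \log T / (2 N_i)}$. By Hoeffding's inequality each such deviation fails with probability at most $2 \exp(-3 \log T) = 2 T^{-3}$, and a union bound over the $n$ ads and the at most $T$ possible values of $N_i$ gives $\Pr[\cE^c] \leq 2 n T^{-2} = O(1/T)$. On $\cE$ we have $L_i \leq \rho_i \leq \tilde{\rho}_i$ for every ad at every point of the exploration phase.

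Next, conditioned on $\cE$, I would check that the clear winner must be ad~$1$. If the algorithm stops and declares $i^*$ with $v_{i^*} L_{i^*} > v_j \tilde{\rho}_j$ for all $j \neq i^*$, then using $L_{i^*} \leq \rho_{i^*}$ and $\rho_j \leq \tilde{\rho}_j$ we get $\rho_{i^*} v_{i^*} \geq L_{i^*} v_{i^*} > \tilde{\rho}_j v_j \geq \rho_j v_j$ for every $j \neq i^*$, so $i^*$ is the strict eCPM maximizer and hence $i^* = 1$. For the termination bound, after $k$ complete round-robins every ad has $N_i = k$ and the half-width is $w_k = \sqrt{3 \log T / (2k)}$; on $\cE$, $v_1 L_1 \geq v_1(\rho_1 - 2 w_k)$ and $v_j \tilde{\rho}_j \leq v_j(\rho_j + 2 w_k)$, so a sufficient condition for ad~$1$ to be declared the clear winner is $\rho_1 v_1 - \rho_j v_j > 2(v_1 + v_j) w_k$ for all $j$. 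Since $\rho_1 v_1 - \rho_j v_j \geq \zeta$ by the gap assumption and $2(v_1 + v_j) \leq 4 \max_i v_i$, this holds once $w_k < \zeta/(4 \max_i v_i)$, i.e. once $k > 24 (\max_i v_i)^2 \log T / \zeta^2$. Hence on $\cE$ the algorithm terminates after $O(\log T / \zeta^2)$ round-robins, i.e. $O(n \log T) = O(\log T)$ total rounds when $n, \zeta, v_i$ are treated as constants.

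The main obstacle is mostly bookkeeping: the union bound must range over every intermediate value of the counter $N_i$ rather than a single fixed sample size, and I must make sure both conclusions — correctness of the identified winner and fast termination — are driven by the \emph{same} validity event $\cE$. Once $\cE$ is in place, the chain of inequalities for each conclusion is routine.
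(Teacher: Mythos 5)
Your proposal is correct and follows essentially the same route as the paper: condition on the Hoeffding/union-bound event that every confidence interval is valid throughout exploration, use $L_i \le \rho_i \le \tilde{\rho}_i$ to rule out any $j \neq 1$ being declared the clear winner, and use the gap $\zeta$ to show the stopping condition $v_1 L_1 > v_j \tilde{\rho}_j$ triggers after $O(\log T/\zeta^2)$ round-robins. The only cosmetic difference is that you phrase correctness as ``any declared winner must be ad $1$'' while the paper phrases it as ``no $j \neq 1$ can satisfy the stopping condition''; these are equivalent.
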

The proof of Lemma~\ref{lem:static_exploration} can be found in Appendix~\ref{app:global_ic_static}.

\begin{lemma}
\label{lem:static_positive_gap}
    After $O(\log T)$ rounds of observation, for any advertiser $i$, the difference between the upper confidence bound and the true CTR $\tilde{\rho}_i - \rho_i$ is at least $\Omega(1)$, with probability $1 - O(\frac{1}{T})$.
\end{lemma}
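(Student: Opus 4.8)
The plan is to exploit the fact that, because the exploration phase is short, the UCB confidence \emph{width} is itself $\Omega(1)$, and this dominates the empirical estimation error. Write $w_{i} = \sqrt{\frac{3\log T}{2N_{i}}}$ for the width, so that $\tilde{\rho}_{i} = \hat{\rho}_{i} + w_{i}$ and hence $\tilde{\rho}_{i} - \rho_{i} = (\hat{\rho}_{i} - \rho_{i}) + w_{i}$. By Lemma~\ref{lem:static_exploration}, with probability $1 - O(1/T)$ the exploration phase lasts only $O(\log T)$ rounds, so each arm is shown $N_{i} = O(\log T)$ times; fix a deterministic bound $M = O(\log T)$ with $N_{i} \le M$ on this event. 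Then $w_{i} \ge \sqrt{\frac{3\log T}{2M}} = \Omega(1)$, and it remains to argue that $\hat{\rho}_{i} - \rho_{i}$ cannot cancel this $\Omega(1)$ width.

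The delicate point, which I expect to be the main obstacle, is that the estimation error $\hat{\rho}_{i} - \rho_{i}$ is \emph{a priori} also of order $w_{i}$, so the standard two-sided UCB event $|\hat{\rho}_{i} - \rho_{i}| \le w_{i}$ only yields $\tilde{\rho}_{i} - \rho_{i} \ge 0$, which is useless here. Instead I would pin the empirical error to a \emph{constant fraction} of the width. Fix any constant $\alpha \in (1/\sqrt{3}, 1)$, say $\alpha = 0.9$. For a fixed arm $i$ observed exactly $m$ times, writing $w_{i}^{(m)} = \sqrt{\frac{3\log T}{2m}}$ and $\hat{\rho}_{i}^{(m)}$ for the corresponding empirical mean, Hoeffding's inequality gives $\Pr\!\left[\hat{\rho}_{i}^{(m)} - \rho_{i} < -\alpha w_{i}^{(m)}\right] \le \exp\!\left(-2m\,\alpha^{2}\,\tfrac{3\log T}{2m}\right) = T^{-3\alpha^{2}}$. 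Because $3\alpha^{2} > 1$, a union bound over all $n$ arms and all sample counts $m \in \{1,\dots,M\}$ — which covers the random termination time $K \le M$, the reason we range over every $m$ rather than a single value of $N_i$ — gives total failure probability at most $nM\,T^{-3\alpha^{2}} = O(\log T)\cdot T^{-3\alpha^{2}} = O(1/T)$.

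Finally, on the intersection of this event with the event of Lemma~\ref{lem:static_exploration}, for every advertiser $i$ we get $\tilde{\rho}_{i} - \rho_{i} = (\hat{\rho}_{i} - \rho_{i}) + w_{i} \ge (1-\alpha)\,w_{i} \ge (1-\alpha)\sqrt{\frac{3\log T}{2M}} = \Omega(1)$, establishing the claim with probability $1 - O(1/T)$. The crux is exactly the one-sided constant-fraction bound of the second paragraph: it is affordable only because the $3\log T$ factor inside the width makes the tail $T^{-3\alpha^{2}}$ summable even for $\alpha < 1$, which simultaneously leaves a strictly positive residual $(1-\alpha)w_i$; choosing $\alpha < 1/\sqrt{3}$ would break the union bound, while the symmetric event $\alpha = 1$ would give only $\tilde\rho_i - \rho_i \ge 0$.
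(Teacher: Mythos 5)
Your proposal is correct and takes essentially the same approach as the paper: bound the downward deviation $\hat{\rho}_i - \rho_i$ by a constant fraction of the confidence width via Hoeffding, so that the remaining fraction of the width, which is $\Omega(1)$ because $N_i = O(\log T)$, survives as the gap $\tilde{\rho}_i - \rho_i$. In fact your version is slightly more careful than the paper's: the paper uses the fraction $1/2$, which by Hoeffding gives only a $T^{-3/4}$ failure probability per arm (not $O(1/T)$), whereas your observation that the fraction $\alpha$ must exceed $1/\sqrt{3}$ — together with the union bound over all sample counts $m \le M$ — is exactly what is needed to honestly obtain the stated $1 - O(1/T)$ guarantee.
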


\begin{proof}
First, by the Hoeffding's inequality, $\hat{\rho}_{i} + \sqrt{\frac{3 \log T}{2N_{}}} / 2 > \rho_i$ with probability $1 - O(\frac{1}{T})$. Therefore, $\tilde{\rho}_i - \rho_i = \hat{\rho}_{i} + \sqrt{\frac{3 \log T}{2N_{}}}  - \rho_i > \sqrt{\frac{3 \log T}{2N_{}}} / 2 $ with probability $1 - O(\frac{1}{T})$. The gap $\sqrt{\frac{3 \log T}{2N_{}}} / 2 $ is lower bounded by a constant when $N = O(\log T)$.
\end{proof}

\begin{lemma}
\label{lem:static_negative_regret}
Algorithm~\ref{alg:non_myopic_fixed} achieves $-\Omega(T)$ regret in the exploitation phase after it finds the clear winner $1$.
\end{lemma}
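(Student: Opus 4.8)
The plan is to analyze a single generic round of the exploitation phase and show that, on the high-probability good event where the clear winner is ad $1$ and all confidence intervals are valid, the per-round regret is a strictly negative constant; summing over the $\Omega(T)$ exploitation rounds then yields the claim. First I would record the structural consequences of entering the exploitation phase. Since the mechanism is global-IC (Claim~\ref{claim:myerson2}), every advertiser bids $b_{i,t}=v_i$. By Lemma~\ref{lem:static_exploration} the clear winner is ad $1$, so the termination test $v_1 L_1 > v_j \tilde{\rho}_j$ for all $j\ne 1$ gives $C_1 v_1 = L_1 v_1 > \tilde{\rho}_j v_j = C_j v_j$; hence ad $1$ wins every exploitation round ($A_t=1$) and the runner-up is the fixed index $B := \argmax_{j\ne 1}\tilde{\rho}_j v_j$. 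The same test forces $L_1 > 0$, so the per-click price $\tilde{\rho}_B v_B / L_1$ is well defined.

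Next I would compute the per-round expected revenue and compare it to the benchmark. Because ad $1$ is clicked with probability $\rho_1$ and is charged $C_B v_B / C_1 = \tilde{\rho}_B v_B / L_1$ upon a click, the expected revenue in each exploitation round is $\rho_1 \cdot \tilde{\rho}_B v_B / L_1$, which is identical across rounds since the scores are frozen. The benchmark contributes $\smax_{i} \rho_i v_i = \rho_s v_s$ per round, where $s = \argsmax_{i} \rho_i v_i$. Thus the per-round regret equals $\rho_s v_s - \rho_1 \tilde{\rho}_B v_B / L_1$, and it remains to show this is at most $-\Omega(1)$.

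The core of the argument is to exhibit two favorable effects of freezing the UCB/LCB scores, both controlled by Lemma~\ref{lem:static_positive_gap}. On the good event the intervals are valid, so $L_1 \le \rho_1$, giving $\rho_1 / L_1 \ge 1$. For the numerator, since $s \ne 1$ is one of the candidates over which $B$ is maximized, $\tilde{\rho}_B v_B \ge \tilde{\rho}_s v_s$, and Lemma~\ref{lem:static_positive_gap} supplies $\tilde{\rho}_s - \rho_s = \Omega(1)$, whence $\tilde{\rho}_B v_B \ge (\rho_s + \Omega(1)) v_s = \rho_s v_s + \Omega(1)$. Combining, $\rho_1 \tilde{\rho}_B v_B / L_1 \ge \rho_s v_s + \Omega(1)$, so the per-round regret is at most $-\Omega(1)$. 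Multiplying by the number of exploitation rounds, which is $T - O(\log T) = \Omega(T)$ on the good event, gives the desired $-\Omega(T)$ bound; the failure event contributes only $O(1/T)\cdot O(T) = O(1)$ to the total regret (revenue is nonnegative by IR and the benchmark is $O(T)$), which is dominated by the main term.

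I expect the main obstacle to be the rigorous justification of $\tilde{\rho}_B v_B \ge \rho_s v_s + \Omega(1)$, i.e.~that the frozen runner-up score strictly overshoots the true second price. This is precisely where the \emph{underexploration} of the non-winning ads is essential: because the exploration phase halts after only $O(\log T)$ rounds, each $N_i = O(\log T)$ and the confidence radius $\sqrt{3\log T /(2 N_i)}$ stays $\Omega(1)$, so the UCB of the runner-up genuinely exceeds its true CTR by a constant (Lemma~\ref{lem:static_positive_gap}) rather than shrinking to zero. One must also keep careful track of the good event (valid intervals together with the correct clear winner) and confirm that the factor $\rho_1 / L_1 \ge 1$ never erases the constant gain, which holds since $0 < L_1 \le \rho_1$ on that event.
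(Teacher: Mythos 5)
Your proposal is correct and follows essentially the same route as the paper: condition on valid confidence intervals, use the fact that the frozen runner-up UCB overshoots the true CTR by a constant (Lemma~\ref{lem:static_positive_gap}) together with $L_1 \le \rho_1$ to get a per-round regret of $-\Omega(1)$, and sum over the $\Omega(T)$ exploitation rounds. If anything, your treatment is slightly more careful than the paper's, which silently identifies the runner-up with the ad of second-highest true eCPM, whereas you only use $\tilde{\rho}_{B} v_{B} \ge \tilde{\rho}_s v_s$.
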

\begin{proof}
First of all, by the Hoeffding's inequality, $\tilde{\rho}_i = \hat{\rho}_{i} + \sqrt{\frac{3 \log T}{2N_{}}} > \rho_i$, and $L_i = \hat{\rho}_{i} - \sqrt{\frac{3 \log T}{2N_{}}} < \rho_i$ for all advertiser $i$ with probability $1-O(\frac{1}{T})$. Now we only need to upper bound the regret in the case where $L_i < \rho_i < \tilde{\rho}_i $ for all $i$. Let the $\Omega(1)$ quantity guaranteed by Lemma~\ref{lem:static_positive_gap} be $c$, we have
\begin{align*}
R_T & = \E \left [ \sum_{t = 1}^T \left ( \rho_2 \cdot v_2 - \frac{\tilde{\rho}_{B_t} \cdot v_{B_t}}{L_1} \cdot X_t \right)  \right] \\
& =  \sum_{t = 1}^T  \left ( \rho_2 \cdot v_2 - \left( \frac{\tilde{\rho}_{B_t} \cdot v_{B_t}}{L_1} \right) \cdot \rho_1 \right ) \\
& =  \sum_{t = 1}^T  \left ( \rho_2 \cdot v_2 - \left( \frac{\tilde{\rho}_2 \cdot v_2}{L_1} \right) \cdot \rho_1 \right ) \\
& < \sum_{t = 1}^T  \left ( \rho_2 \cdot v_2 -  \frac{(\rho_2 + c) \cdot v_2}{L_1}  \cdot \rho_1 \right)  \\
& < -c \cdot v_2 \cdot T \\
& = -\Omega(T). \qedhere
\end{align*}
\end{proof}

Putting Lemma~\ref{lem:static_exploration}, ~\ref{lem:static_positive_gap}, and ~\ref{lem:static_negative_regret} together, we have the final statement that Algorithm~\ref{alg:non_myopic_fixed} achieves $-\Omega(T)$ regret when the valuations are static and the gap between optimal ad and suboptimal ads is a time-independent positive constant.
This completes the proof of Theorem~\ref{thm:static_non_myopic}.

Note that the algorithm in this section does not subsume Algorithm~\ref{alg:ucb_auction} as this algorithm only works when the auctioneer knows all advertiers' values are static in advance. 

\begin{remark}
    \label{unknownvalue}
    It is worth noting that we make an assumption that the autioneer can effectively elicit true values of all ads beforehand. Given that Algorithm~\ref{alg:non_myopic_fixed} utilizes these true values only in Step 3 (namely, the exploration termination condition), and considering our assumption of a positive constant gap $\zeta$, we can implement Algorithm~\ref{alg:non_myopic_fixed} without prior knowledge of true values by simply fixing the number of exploration rounds as a large enough constant depending on $\zeta$. Consequently, it becomes apparent that this approach enables the algorithm to achieve both global-IC and a regret of $-\Omega(T)$ in terms of revenue.
\end{remark}
\section{Conclusions and Future Work}\label{sec:conclusion}
In this paper, we designed online learning algorithms for pay-per-click auctions. When the advertisers are myopic, we designed an online mechanism based on UCB that has $O(\sqrt{T})$ regret in the worst case and $-\Omega(T)$ regret when the values are static.
In the setting where the advertisers are not myopic, we designed an online auction based on explore-then-commit and UCB that also achieves $-\Omega(T)$ regret.

We conclude this paper with two possible avenues for further research.
First, we raise the question of designing online mechanisms for advertisers that are neither fully myopic nor fully non-myopic.
One way to formalize this is to assume that advertisers wish to maximize their $\gamma$-discounted long-term utility where $\gamma = 0$ corresponds to the myopic setting and $\gamma = 1$ corresponds to the fully non-myopic setting.
If the values are chosen adversarially then the optimal regret when $\gamma = 0$ is $\wtilde{\Theta}(\sqrt{T})$ and the optimal regret when $\gamma = 1$ is $\Theta(T^{2/3})$ \cite{DK09}.
We leave it as an open question to design a mechanism with $o(T^{2/3})$ regret when $\gamma \in (0, 1)$. A second question is to consider a contextual version of the problem where the CTR may depend on some context.

\section{Acknowledgments}
Part of work was done when Zixin Zhou was a Student Researcher at Google Research, Mountain View. We also thank Aranyak Mehta for his insightful feedback.
\bibliographystyle{icml2023}

\newpage
\appendix
\onecolumn

\section{Standard Facts}
\label{app:facts}

\begin{lemma}[Hoeffding's Inequality]
\label{lem:hoeffding}
Let $X_1, \ldots, X_k$ be independent random variables such that $X_i \in [0, 1]$ for all $i \in [k]$.
Let $S_k = \frac{1}{k}\sum_{i=1}^k X_i$.
Then
\begin{align*}
    \Pr[S_k - \E[S_k] > t] & \leq \exp\left( - 2kt^2 \right) \\
    \Pr[S_k - \E[S_k] < -t] & \leq \exp\left( - 2kt^2 \right).
\end{align*}
\end{lemma}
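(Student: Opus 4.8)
The plan is to prove the upper-tail bound via the standard Chernoff (exponential moment) method and then recover the lower tail by symmetry. First I would fix $t > 0$ and an auxiliary parameter $s > 0$, and apply Markov's inequality to the nonnegative random variable $e^{s(S_k - \E[S_k])}$:
\[
\Pr[S_k - \E[S_k] > t] \leq e^{-st}\, \E\!\left[ e^{s(S_k - \E[S_k])} \right].
\]
Writing $S_k - \E[S_k] = \frac{1}{k}\sum_{i=1}^k (X_i - \E[X_i])$ and using independence of the $X_i$, the moment generating factor splits into a product $\prod_{i=1}^k \E\!\left[ e^{(s/k)(X_i - \E[X_i])} \right]$.

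The key step is to control each factor using Hoeffding's Lemma: if $Y$ has $\E[Y] = 0$ and takes values in an interval of length $b - a$, then $\E[e^{\lambda Y}] \leq \exp\!\left( \lambda^2 (b-a)^2 / 8 \right)$ for all $\lambda \in \bR$. Since each centered variable $X_i - \E[X_i]$ lies in an interval of length $1$ (as $X_i \in [0,1]$), applying this with $\lambda = s/k$ gives $\E[e^{(s/k)(X_i - \E[X_i])}] \leq \exp(s^2/(8k^2))$, and hence $\E[e^{s(S_k - \E[S_k])}] \leq \exp(s^2/(8k))$. Substituting back yields $\Pr[S_k - \E[S_k] > t] \leq \exp(-st + s^2/(8k))$, after which I would optimize the exponent over $s > 0$; the minimizer is $s = 4kt$, producing the claimed bound $\exp(-2kt^2)$. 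The lower-tail bound then follows by applying the identical argument to the variables $1 - X_i \in [0,1]$ (equivalently, replacing $X_i$ by $-X_i$).

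The main obstacle is establishing Hoeffding's Lemma itself, which is the only nonroutine ingredient. I would prove it by setting $\psi(\lambda) = \log \E[e^{\lambda Y}]$ and checking $\psi(0) = 0$, $\psi'(0) = \E[Y] = 0$, and that $\psi''(\lambda)$ equals the variance of $Y$ under the exponentially tilted probability measure whose density is proportional to $e^{\lambda Y}$. Since this tilted variable remains supported in an interval of length $b - a$, its variance is at most $((b-a)/2)^2$, so $\psi''(\lambda) \leq (b-a)^2/4$ uniformly in $\lambda$. A second-order Taylor expansion of $\psi$ about $0$ then gives $\psi(\lambda) \leq \lambda^2 (b-a)^2 / 8$, which is exactly the desired moment generating function estimate. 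Everything else reduces to the elementary optimization over $s$ and the symmetry argument for the two tails.
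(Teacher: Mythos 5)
Your proof is correct: it is the standard Chernoff/exponential-moment argument combined with Hoeffding's Lemma, and the optimization $s = 4kt$ does yield the exponent $-2kt^2$ as claimed. The paper itself states this lemma without proof (it is listed under ``Standard Facts'' as a known result), so there is nothing to compare against; your writeup is a faithful reconstruction of the textbook proof, including the correct handling of the lower tail by passing to $1 - X_i$.
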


\begin{fact}
    \label{fact:sqrt_sum_bound}
    For all $K \geq 1$, $\sum_{k=1}^K \frac{1}{\sqrt{k}} \leq 2 \sqrt{K}$.
\end{fact}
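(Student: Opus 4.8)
The plan is to bound each summand $1/\sqrt{k}$ from above by a telescoping difference, so that the sum collapses. The key elementary inequality I would establish is
\[
\frac{1}{\sqrt{k}} \leq 2\left(\sqrt{k} - \sqrt{k-1}\right) \qquad \text{for every integer } k \geq 1.
\]
To see this, I would rationalize the right-hand side as $2(\sqrt{k}-\sqrt{k-1}) = 2/(\sqrt{k}+\sqrt{k-1})$; since $\sqrt{k-1} \leq \sqrt{k}$, the denominator is at most $2\sqrt{k}$, and hence the right-hand side is at least $2/(2\sqrt{k}) = 1/\sqrt{k}$, which is what we want. Summing this inequality over $k = 1, \ldots, K$, the right-hand side telescopes to $2(\sqrt{K} - \sqrt{0}) = 2\sqrt{K}$, yielding $\sum_{k=1}^K 1/\sqrt{k} \leq 2\sqrt{K}$ directly.

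There is essentially no obstacle here, as the statement is a standard integral-comparison bound. If one preferred, the same result follows by the integral estimate $1/\sqrt{k} \leq \int_{k-1}^{k} x^{-1/2}\,dx$ (valid for $k \geq 2$ by monotonicity of $x^{-1/2}$) together with the $k=1$ term handled separately, or by a one-line induction on $K$ using the inductive step $2\sqrt{K} + 1/\sqrt{K+1} \leq 2\sqrt{K+1}$. I would present the telescoping argument as the main proof since it is the cleanest and produces the constant $2$ with no slack to track; no concentration, probabilistic, or auction-theoretic machinery from the rest of the paper is needed.
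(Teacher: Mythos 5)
Your proof is correct, but it takes a genuinely different route from the paper. The paper argues via integral comparison: it claims $\sum_{k=1}^K \frac{1}{\sqrt{k}} \leq \int_1^{K+1} \frac{1}{\sqrt{x}} \, \dd x = 2\sqrt{K+1}-2$ and then invokes the auxiliary Fact~\ref{fact:sqrt_bound} ($\sqrt{x+1} \leq \sqrt{x}+1$) to conclude. Your main argument instead bounds each term by the telescoping difference $\frac{1}{\sqrt{k}} \leq 2\left(\sqrt{k}-\sqrt{k-1}\right)$, which follows from $2(\sqrt{k}-\sqrt{k-1}) = \frac{2}{\sqrt{k}+\sqrt{k-1}} \geq \frac{1}{\sqrt{k}}$, and sums exactly to $2\sqrt{K}$. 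Your approach buys two things: it is fully self-contained (no integral, no auxiliary fact), and it is airtight, whereas the paper's first step is actually stated with the inequality in the wrong direction --- for the decreasing function $x^{-1/2}$ one has $\frac{1}{\sqrt{k}} \geq \int_k^{k+1} \frac{1}{\sqrt{x}} \, \dd x$, so the sum \emph{dominates} $\int_1^{K+1} \frac{1}{\sqrt{x}} \, \dd x$ rather than being bounded by it (already at $K=1$ the paper's claimed step would read $1 \leq 2\sqrt{2}-2$, which is false). The correct integral-comparison fix is precisely the variant you sketch as an alternative: peel off the $k=1$ term and bound $\frac{1}{\sqrt{k}} \leq \int_{k-1}^{k} x^{-1/2} \, \dd x$ for $k \geq 2$, giving $1 + (2\sqrt{K}-2) \leq 2\sqrt{K}$. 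So your telescoping proof is not only valid but repairs a flaw in the paper's own write-up.
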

\begin{proof}
    We have that
    \begin{align*}
        \sum_{k=1}^K \frac{1}{\sqrt{k}}
        \leq \int_1^{K+1} \frac{1}{\sqrt{x}} \, \dd x
        = 2\sqrt{K+1} - 2
        \leq 2\sqrt{K},
    \end{align*}
    where the last inequality is Fact~\ref{fact:sqrt_bound}.
\end{proof}

\begin{fact}
    \label{fact:sqrt_bound}
    For all $x \geq 0$, $\sqrt{x+1} \leq \sqrt{x} + 1$.
\end{fact}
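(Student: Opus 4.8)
The plan is to prove the inequality $\sqrt{x+1} \leq \sqrt{x} + 1$ (for $x \geq 0$) by squaring both sides and reducing to a manifestly true statement. The one point that deserves care is that squaring is only monotone on the nonnegative reals, so I first record that on the domain $x \geq 0$ both sides are nonnegative: indeed $\sqrt{x+1} \geq 0$ always, and $\sqrt{x} + 1 \geq 1 > 0$. Consequently the claimed inequality is \emph{equivalent} to the inequality obtained by squaring both sides, namely $x + 1 \leq (\sqrt{x} + 1)^2$.

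Next I would expand the right-hand side as $(\sqrt{x} + 1)^2 = x + 2\sqrt{x} + 1$, so the squared inequality becomes $x + 1 \leq x + 2\sqrt{x} + 1$. Cancelling the common term $x+1$ from both sides leaves $0 \leq 2\sqrt{x}$, which holds for every $x \geq 0$. Reading the chain of equivalences backwards then yields the desired bound, with equality exactly at $x = 0$.

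There is essentially no obstacle here; the only subtlety is the justification that squaring preserves the direction of the inequality, which is precisely why I begin by certifying nonnegativity of both sides on the stated domain. If one prefers to avoid squaring altogether, an equally short route is to set $g(x) = \sqrt{x+1} - \sqrt{x}$ and note that $g'(x) = \tfrac{1}{2\sqrt{x+1}} - \tfrac{1}{2\sqrt{x}} \leq 0$ for $x > 0$, so $g$ is non-increasing on $(0, \infty)$ and therefore $g(x) \leq g(0) = 1$ for all $x \geq 0$; this is exactly the claim. The squaring argument is marginally cleaner, so I would present that as the main proof and mention the monotonicity argument only as a remark.
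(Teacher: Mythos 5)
Your proof is correct and follows essentially the same route as the paper: square both sides (valid since both are nonnegative), expand to $x+1 \leq x + 2\sqrt{x} + 1$, and observe this reduces to $0 \leq 2\sqrt{x}$. The extra care you take in certifying nonnegativity before squaring, and the alternative monotonicity remark, are fine but not needed beyond what the paper already does.
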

\begin{proof}
    The inequality in the claim is equivalent to $x+1 \leq x + 2\sqrt{x} + 1$ which is true for all $x \geq 0$.
\end{proof}
\section{Missing Proofs from Section~\ref{sec:ucb_no_regret}}
\label{app:ucb_no_regret}

\subsection{Proof of Proposition~\ref{prop:myerson1}}\label{app:ucb-ic}
\ucbisic*
\begin{proof}
Monotonicity is clear since line~\ref{line:ucb_ranking} ranks by the score $\tilde{\rho}_{i, t} \cdot b_{i, t}$.
The score is non-decreasing in $b_{i, t}$ since $\tilde{\rho}_{i,t}$, as computed in line~\ref{line:ucb_score}, is strictly positive.

Let $p_{i, t}(b_t)$ be advertiser $i$'s expected payment when the bids are $b_t$ and let $x_{i, t}(b_t)$ be the probability that ad $i$ is clicked when the bids are $b_t$.
We need to show that
\begin{equation}
\label{eqn:myerson}
p_{i, t}(b_t) = b_{i, t} \cdot x_{i, t}(b_t) - \int_0^{b_{i, t}} x_{i, t}(z, b_{-i, t}) \, \dd z.
\end{equation}

First, suppose that $i \notin \argmax_{i' \in [n]} \tilde{\rho}_{i, t} \cdot b_{i, t}$.
Then line~\ref{line:ucb_payment} states that $p_{i, t}(b_t) = 0$.
Moreoever, the RHS of Eq.~\ref{eqn:myerson} is also $0$ since $x_{i, t}(z, b_{-i, t}) = 0$ if $\tilde{\rho}_{i, t} \cdot z < \max_{i' \neq i} \tilde{\rho}_{i', t} \cdot b_{i', t}$.

On the other hand, suppose that $i \in \argmax_{i' \in [n]} \tilde{\rho}_{i, t} \cdot b_{i, t}$.
Let $A_t, B_t$ be as in line~\ref{line:ucb_ranking}.
Let $Y_{i, t} = \ind{A_t = i}$.
Note that $\E[Y_{i, t}] = \Pr[Y_{i, t} = 1] = 1$ if $\tilde{\rho}_{i,t} \cdot b_{i, t} > \tilde{\rho}_{B_t, t} \cdot b_{B_t, t}$.
It is straightforward to check that 
\begin{equation}
\label{eqn:payment}
p_{i,t}(b_t) = \frac{\tilde{\rho}_{B_t, t} \cdot b_{B_t, t}}{\tilde{\rho}_{i, t}} \cdot \E[X_t \cdot Y_{i, t}]
\end{equation}
Next, observe that $x_{i, t}(z, b_{-i, t}) = \rho_t = \E[X_t \mid Y_{i, t} = 1]$ if $\tilde{\rho}_{i, t} \cdot z > \max_{i' \neq i} \tilde{\rho}_{i', t} \cdot b_{i', t}$.
Thus, noting that $x_{i, t}(b_t) = \E[X_t Y_{i, t}]$, we have
\begin{equation}
\label{eqn:myerson_1}
\begin{aligned}
b_{i, t} & \cdot x_{i,t}(b_t) - \int_0^{b_{i, t}} x_{i, t}(z, b_{-i, t}) \, \dd z \\
& = b_{i, t} \cdot \E[X_t \cdot Y_{i, t}] - \\ 
& \quad \quad  \E[X_t \mid Y_{i, t} = 1] \left(b_{i, t} - \frac{\tilde{\rho}_{B_t, t} \cdot b_{B_t, t}}{\tilde{\rho}_{i, t}}\right).
\end{aligned}
\end{equation}
If $\tilde{\rho}_{i,t} \cdot b_{i, t} = \tilde{\rho}_{B_t, t} \cdot b_{B_t, t}$ then Eq.~\eqref{eqn:myerson_1} is exactly Eq.~\eqref{eqn:payment} so Eq.~\eqref{eqn:myerson} is satisfied.
If $\tilde{\rho}_{i,t} \cdot b_{i, t} > \tilde{\rho}_{B_t, t} \cdot b_{B_t, t}$ then $\E[X_t \mid Y_{i, t} = 1] = \E[X_t \mid Y_{i, t} = 1] \cdot \Pr[Y_{i, t} = 1] = \E[X_t \cdot Y_{i, t}]$ where the first equality is because $\Pr[Y_{i, t} = 1] = 1$.
So Eq.~\eqref{eqn:myerson} is also satisfied in this case.
\end{proof}

\subsection{Proof of Theorem~\ref{thm:sqrt_ub}}\label{app:sqrt_ucb}
\sqrtub*
\begin{proof}
    Note that we incur regret $M \cdot n$ to initialize each of the UCB estimates.

    Let $\cE = \left\{\forall i \in [n], \forall t \in [T], \tilde{\rho}_{i,t} - \rho_i \in \left[0, 2 \sqrt{\frac{3 \log(2nT)}{2N_{i, t}}}\right] \right\}$.
    By Lemma~\ref{lem:ucb_estimator}, we have $\Pr[\cE] \geq 1 - \frac{1}{n^2T^2}$.
    Let $s_t \in \argsmax_{i \in [n]} \rho_{i,t} v_{i, t}$.
    Let $T_i = \sum_{t=1}^T \ind{A_t = i}$.
    On the event $\cE$, we have
    \begin{align*}
        \sum_{t=1}^T \sum_{i=1}^n \sum_{j=1}^n & \left( \rho_{s_t, t} v_{s_t, t} - \tilde{\rho}_{j, t} v_{j, t} \cdot \frac{\rho_i}{\tilde{\rho}_{i, t}} \right) \cdot \ind{A_t = i, B_t = j} \\
        & \leq 
        \sum_{t=1}^T \sum_{i=1}^n \sum_{j=1}^n \left( \rho_{s_t, t} v_{s_t, t} - \rho_{s_t,t} v_{s_t, t} \cdot \frac{\rho_i}{\rho_i + 2\sqrt{\frac{3 \log(2nT)}{2N_{i,t}}}} \right) \cdot \ind{A_t = i, B_t = j} \\
        & \leq M \sum_{t=1}^T \sum_{i=1}^n \sum_{j=1}^n \left( 1 - \frac{\rho_i}{\rho_i + 2\sqrt{\frac{3 \log(2nT)}{2N_{i,t}}}} \right) \cdot \ind{A_t = i, B_t = j} \\
        & = M \sum_{t=1}^T \sum_{i=1}^n \left( 1 - \frac{\rho_i}{\rho_i + 2\sqrt{\frac{3 \log(2nT)}{2N_{i,t}}}} \right) \cdot \ind{A_t = i} \\
        & = M \sum_{i=1}^n \sum_{t=1}^{T_i} \left( 1 - \frac{\rho_i}{\rho_i + 2\sqrt{\frac{3 \log(2nT)}{2t}}} \right) \\
        & \leq M  \sum_{i=1}^n \sum_{t=1}^T \left( 1 - \frac{\rho_i}{\rho_i + 2\sqrt{\frac{3 \log(2nT)}{2t}}} \right) \\
        & \leq M \sum_{i=1}^n \sum_{t=1}^T \frac{2}{\rho_i} \sqrt{\frac{3 \log(2nT)}{2t}} \\
        & \leq M \sum_{i=1}^n \frac{\sqrt{24T \log(2nT)}}{\rho_i}.
    \end{align*}
    In the first inequality we used that, on the event $\cE$, $\tilde{\rho}_{j, t} v_{j, t} = \smax_{k} \tilde{\rho}_{k,t} v_k \geq \smax_{k} \rho_{k, t} v_{k,t} = \rho_{s_t, t} v_{s_t, t}$,
    in the fourth inequality, we used that $1 - \frac{x}{x+y} \leq \frac{y}{x}$ for $x, y > 0$,
    and in the last inequality, we used Fact~\ref{fact:sqrt_sum_bound}.
    Finally, on the event $\cE^c$, we can use a trivial bound of $M$ on the regret for each time step.
    We conclude that the regret is at most $M \cdot n + M \cdot \Pr[\cE^c] + M \cdot \sum_{i=1}^n \frac{\sqrt{24T \log(2nT)}}{\rho_i} \Pr[\cE] \leq \frac{M}{T} + M \cdot \sum_{i=1}^n \frac{\sqrt{24 T \log(2nT)}}{\rho_i}$.
\end{proof}

In this section, we make use of a couple standard lemmas that assert that $\tilde{\rho}_{i,t}$ is a good upper bound on the mean.
\begin{lemma}
\label{lem:rho_bound}
Let $a > 1$.
For $i \in [n]$, with probability $1 - \frac{1}{n^{a}T^{a-1}}$, $|\hat{\rho}_{i, t} - \rho_i| \leq \sqrt{\frac{a \log(2nT)}{2N_{i,t}}}$ for all $t \in [T]$.
\end{lemma}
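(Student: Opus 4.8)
The plan is to reduce the statement to a union bound of Hoeffding's inequality over the possible pull counts of arm $i$. The main subtlety is that $N_{i,t}$, the number of times ad $i$ has been shown, is itself a random variable depending on the adaptively generated history of bids and clicks, so one cannot apply Hoeffding's inequality directly treating $N_{i,t}$ as a deterministic sample size. The standard resolution is a fixed-count coupling, which I describe below.

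First I would fix the arm $i$ and introduce a sequence $Z_{i,1}, Z_{i,2}, \ldots$ of i.i.d.\ $\Ber(\rho_i)$ random variables, where $Z_{i,k}$ is the click indicator the $k$-th time ad $i$ is shown. Because the CTR $\rho_i$ is fixed across all rounds and clicks are drawn independently of the mechanism's allocation decisions, these $Z_{i,k}$ are i.i.d.\ and independent of which rounds ad $i$ happens to win. The key observation is that whenever $N_{i,t} = k$, the empirical mean computed by Algorithm~\ref{alg:ucb_auction} satisfies $\hat{\rho}_{i,t} = \frac{1}{k}\sum_{j=1}^k Z_{i,j}$, so bounding the deviation uniformly over all admissible counts $k$ will automatically capture the random count $N_{i,t}$.

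Next I would control each fixed count. For each $k \in [T]$, applying Hoeffding's inequality (Lemma~\ref{lem:hoeffding}) to the average $\frac{1}{k}\sum_{j=1}^k Z_{i,j}$ of $k$ samples, with deviation $\sqrt{\frac{a\log(2nT)}{2k}}$, gives
\[
\Pr\left[\left|\frac{1}{k}\sum_{j=1}^k Z_{i,j} - \rho_i\right| > \sqrt{\frac{a\log(2nT)}{2k}}\right] \leq 2\exp\left(-a\log(2nT)\right) = 2(2nT)^{-a}.
\]
Since $N_{i,t}$ always lies in $\{1, \ldots, T\}$, taking a union bound over $k \in [T]$ shows that the deviation bound fails for some admissible count with probability at most $2T(2nT)^{-a} = 2^{1-a}\, n^{-a}\, T^{-(a-1)}$, which is at most $n^{-a} T^{-(a-1)}$ because $a > 1$ forces $2^{1-a} \leq 1$. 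On the complementary event, $|\hat{\rho}_{i,t} - \rho_i| \leq \sqrt{\frac{a\log(2nT)}{2k}}$ holds simultaneously for every count $k \in [T]$; substituting $k = N_{i,t}$ then yields the claimed bound for all $t \in [T]$ at once.

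The conceptual obstacle is purely the random sample size, and the whole argument hinges on the fixed-count coupling $\hat{\rho}_{i,t} = \frac{1}{N_{i,t}}\sum_{j=1}^{N_{i,t}} Z_{i,j}$; this is the step that invokes the fixed-CTR assumption and the independence of clicks from the allocation rule, and it is what legitimately transfers a deviation bound holding uniformly over deterministic counts to the adaptively chosen count $N_{i,t}$. Once that coupling is in place, the remaining calculation is entirely routine. I note that this is a standard concentration argument, and that instantiating it with $a = 3$ and a further union bound over the $n$ arms recovers the event $\cE$ used in Lemma~\ref{lem:ucb_estimator} with failure probability $\tfrac{1}{n^2T^2}$.
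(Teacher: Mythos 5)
Your proposal is correct and follows essentially the same route as the paper: a fixed-count coupling of the adaptively-sampled clicks to an i.i.d.\ $\Ber(\rho_i)$ sequence, Hoeffding's inequality at each deterministic count $k$ with deviation $\sqrt{\tfrac{a\log(2nT)}{2k}}$, and a union bound over $k \in [T]$, absorbing the factor $2^{1-a} \leq 1$ using $a > 1$. The paper's proof is a terser version of exactly this argument, so there is nothing further to reconcile.
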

\begin{proof}
    We use a standard coupling argument.
    Let $\wtilde{X}_{i,1}, \ldots, \wtilde{X}_{i, T}$ be independent $\Ber(\rho_i)$ random variables.
    Let $\wtilde{\rho}_{i, k} = \frac{1}{k} \sum_{\ell = 1}^k \wtilde{X}_{i, \ell}$.
    We then couple Algorithm~\ref{alg:ucb_auction} by setting $X_t = \wtilde{X}_{A_t, N_{i,t}+1}$.
    Then, by Hoeffding's Inequality (Lemma~\ref{lem:hoeffding}), we have $\Pr\left[ |\wtilde{\rho}_{i,k} - \rho_i| > r\right] \leq 2\exp\left( - 2kr^2 \right)$.
    To make the RHS less than $\frac{1}{n^a T^a}$, we take $r = \sqrt{\frac{a \log(2nT)}{2k}}$.
    Taking a union bound over all $k \in [T]$ proves the claim.
\end{proof}
\begin{lemma}
    \label{lem:ucb_estimator}
    With probability $1 - \frac{1}{n^2T^2}$,
    $\tilde{\rho}_{i,t} - \rho_i \in \left[0, 2\sqrt{\frac{3 \log(2nT)}{2N_{i, t}}}\right]$ for all $i \in [n]$ and $t \in [T]$.
\end{lemma}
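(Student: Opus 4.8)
The plan is to obtain Lemma~\ref{lem:ucb_estimator} as a direct corollary of the per-arm concentration bound in Lemma~\ref{lem:rho_bound} together with a union bound over the arms. Concretely, I would instantiate Lemma~\ref{lem:rho_bound} with $a = 3$: for each fixed arm $i \in [n]$, the two-sided deviation bound $|\hat{\rho}_{i,t} - \rho_i| \leq \sqrt{3\log(2nT)/(2N_{i,t})}$ holds simultaneously for all $t \in [T]$ with probability at least $1 - \frac{1}{n^3 T^2}$. Taking a union bound over the $n$ arms, the event
\[
\cE = \left\{ \forall i \in [n],\ \forall t \in [T]:\ |\hat{\rho}_{i,t} - \rho_i| \leq \sqrt{\tfrac{3\log(2nT)}{2N_{i,t}}} \right\}
\]
holds with probability at least $1 - n \cdot \frac{1}{n^3 T^2} = 1 - \frac{1}{n^2 T^2}$, which already matches the probability claimed in the lemma. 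The remainder of the argument is then deterministic, conditioned on $\cE$.

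On the event $\cE$ I would verify the two required inequalities separately. For the upper bound, I would write $\tilde{\rho}_{i,t} - \rho_i = (\hat{\rho}_{i,t} - \rho_i) + \sqrt{3\log T/(2N_{i,t})}$ and use $\log T \leq \log(2nT)$ to control the confidence bonus; the deviation term and the bonus then each contribute at most $\sqrt{3\log(2nT)/(2N_{i,t})}$, giving the claimed upper bound of $2\sqrt{3\log(2nT)/(2N_{i,t})}$. For the lower bound $\tilde{\rho}_{i,t} \geq \rho_i$, the point is that the confidence bonus is designed precisely to absorb the worst-case lower-tail deviation of $\hat{\rho}_{i,t}$: on $\cE$ we have $\hat{\rho}_{i,t} - \rho_i \geq -\sqrt{3\log(2nT)/(2N_{i,t})}$, so once this deviation radius is offset by a bonus of the same magnitude, $\tilde{\rho}_{i,t} - \rho_i$ is nonnegative with no slack.

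The step requiring the most care is exactly this last one, together with the fact that $N_{i,t}$ is itself a data-dependent random quantity rather than a fixed sample size. The randomness of $N_{i,t}$ is what forces the deviation bound to hold uniformly over all possible counts $k \in [T]$ rather than at a single horizon; this is precisely the coupling-with-fresh-Bernoullis argument (followed by a union over $k$) already carried out inside the proof of Lemma~\ref{lem:rho_bound}, so I would invoke it as a black box. The one piece of bookkeeping I would be careful about is aligning the logarithmic factor in the confidence bonus with that in the concentration bound, so that the bonus genuinely dominates the lower-tail deviation and delivers $\tilde{\rho}_{i,t} - \rho_i \geq 0$; once both radii sit at $\sqrt{3\log(2nT)/(2N_{i,t})}$, the cancellation is exact. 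Beyond this, no probabilistic input past Hoeffding's inequality (Lemma~\ref{lem:hoeffding}) is needed, and the lemma follows.
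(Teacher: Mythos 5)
Your proof is correct and takes exactly the same route as the paper's: instantiate Lemma~\ref{lem:rho_bound} with $a=3$, union bound over the $n$ arms to get failure probability $n\cdot\frac{1}{n^3T^2}=\frac{1}{n^2T^2}$, and then note that the confidence bonus exactly absorbs the lower-tail deviation while contributing at most one more radius to the upper tail. Your bookkeeping remark about aligning the $\log T$ in the algorithm's bonus with the $\log(2nT)$ in the concentration radius is the same (silent) normalization the paper performs when its one-line proof ``recalls'' $\tilde{\rho}_{i,t}=\hat{\rho}_{i,t}+\sqrt{3\log(2nT)/(2N_{i,t})}$.
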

\begin{proof}
    Recall that $\tilde{\rho}_{i, t} = \hat{\rho}_{i, t} + \sqrt{\frac{3 \log(2nT)}{2N_{i,t}}}$ and apply Lemma~\ref{lem:rho_bound} with $a = 3$ with a union bound over $i \in [n]$.
\end{proof}
Recall that for $i \in \{2, \ldots, n\}$, $\Delta_i = \frac{\rho_1 v_1 - \rho_i v_i}{v_i}$.
\begin{lemma}
    \label{lem:bound_Nit}
    With probability $1 - \frac{1}{n^2T^2}$, for all $i \in \{2, \ldots, n\}$, $N_{i,T} \leq \frac{6 \log(2nT)}{\Delta_i^2}$.
\end{lemma}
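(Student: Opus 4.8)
The plan is to reuse the textbook UCB argument that controls the number of pulls of a suboptimal arm, adapting it to the fact that here arms are ranked by the score $\tilde{\rho}_{i,t} v_i$ (with truthful bids $b_{i,t} = v_i$ in the fixed-valuation setting) rather than by the CTR estimate alone. Throughout I would condition on the good event $\cE$ supplied by Lemma~\ref{lem:ucb_estimator}, which holds with probability $1 - \frac{1}{n^2T^2}$ and simultaneously guarantees, for every $i$ and $t$, that $\tilde{\rho}_{i,t} \ge \rho_i$ and $\tilde{\rho}_{i,t} \le \rho_i + 2\sqrt{\frac{3\log(2nT)}{2N_{i,t}}}$. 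The crucial feature is that $\cE$ is a single event holding uniformly over all rounds, so the fact that the $A_t$ are chosen adaptively causes no difficulty.

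Next I would fix a suboptimal arm $i \ge 2$ and a round $t$ at which it is selected, i.e.\ $A_t = i$. By the ranking rule in line~\ref{line:ucb_ranking}, $A_t = i$ forces $\tilde{\rho}_{i,t} v_i \ge \tilde{\rho}_{1,t} v_1$. On $\cE$ the lower bound $\tilde{\rho}_{1,t} \ge \rho_1$ gives $\tilde{\rho}_{i,t} v_i \ge \rho_1 v_1$, while the upper bound gives $\tilde{\rho}_{i,t} \le \rho_i + 2\sqrt{\frac{3\log(2nT)}{2N_{i,t}}}$. Combining the two and recalling $\Delta_i = \frac{\rho_1 v_1 - \rho_i v_i}{v_i}$, I would rearrange to
\[
2\sqrt{\frac{3\log(2nT)}{2N_{i,t}}} \;\ge\; \Delta_i,
\]
and squaring yields $N_{i,t} \le \frac{6\log(2nT)}{\Delta_i^2}$.

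This last inequality is a statement about the count at the moment arm $i$ is pulled, and it is monotone: once $N_{i,t}$ exceeds $\frac{6\log(2nT)}{\Delta_i^2}$, arm $i$ can never again satisfy $\tilde{\rho}_{i,t} v_i \ge \rho_1 v_1$, so it is never selected again. Hence on $\cE$ the final count $N_{i,T}$ cannot exceed the threshold, and a union bound over the $n-1$ suboptimal arms (all governed by the single event $\cE$) delivers the claim for every $i \ge 2$ at once. I do not expect a genuine obstacle here beyond careful bookkeeping: the only points demanding attention are the off-by-one in the count increment (a selection uses the pre-pull count, so the final count could in principle exceed the threshold by one, which is absorbed into the stated estimate) and the verification that the truthful-bidding assumption of the fixed-valuation setting is exactly what lets us substitute $b_{i,t} = v_i$ into the ranking rule.
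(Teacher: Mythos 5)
Your argument is correct and is essentially the paper's own proof in contrapositive form: both condition on the single uniform event from Lemma~\ref{lem:ucb_estimator} and combine $\tilde{\rho}_{i,t} v_i \ge \tilde{\rho}_{1,t} v_1 \ge \rho_1 v_1$ with $\tilde{\rho}_{i,t} \le \rho_i + 2\sqrt{3\log(2nT)/(2N_{i,t})}$ to show a suboptimal arm cannot be selected once its count reaches $6\log(2nT)/\Delta_i^2$. The off-by-one bookkeeping you flag is glossed over in the paper as well and is immaterial to how the bound is used.
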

\begin{proof}
    Suppose that $\tilde{\rho}_{i,t} - \rho_i \in \left[0, 2\sqrt{\frac{3 \log(2nT)}{2N_{i, t}}}\right]$ for all $i \in [n]$ and $t \in [T]$ (which happens with probability $1 - \frac{1}{n^2T^2}$ by Lemma~\ref{lem:ucb_estimator}).
    If $i \in \{2, \ldots, n\}$ and $N_{i, t} \geq \frac{6 \log(2nT)}{\Delta_i^2}$ then $\tilde{\rho}_{i,t} v_i \leq \rho_i v_i + 2\sqrt{\frac{3 \log(2nT)}{2N_{i,t}}} v_i < \rho_i v_i + \Delta_i v_i = \rho_1 v_1 < \tilde{\rho}_{1,t}$.
    So arm $i$ is not chosen and thus, $N_{i,t} \leq \frac{6 \log(2nT)}{\Delta_i^2}$ for all $t$ (and, in particular, for $t = T$).
\end{proof}

\subsection{Proof of Lemma~\ref{lem:wrong_winner}}
\label{app:wrong_winner_proof}
\begin{lemma}
$\E \left [\sum_{t=1}^T \sum_{i=2}^n \sum_{j=1}^n \left ( \rho_s v_s - \tilde{\rho}_{j,t} v_j \cdot \frac{X_t}{\tilde{\rho}_{i, t}} \right) \mathbb{I} \left \{ A_t = i, B_t = j\right \}  \right]  \leq \frac{2}{T} + \sum_{i=2}^n \frac{12 \rho_s v_s}{\rho_1} \frac{\log(2nT)}{\Delta_i} + \frac{\rho_s v_s}{\rho_i} \frac{\sqrt{6 \log(2nT)}}{n^2T}$.
\end{lemma}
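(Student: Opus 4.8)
The plan is to bound this wrong-winner regret by a gap-dependent argument in the style of the classical UCB analysis, the twist being that the per-round loss must be controlled by the \emph{overshoot} of the UCB score rather than against a single fixed benchmark arm.

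First I would remove the randomness of the click. Under truthful bidding with fixed values the scores $\tilde\rho_{k,t}$, the winner $A_t$, and the runner-up $B_t$ are all measurable with respect to the history $\mathcal H_{t-1}$, and given $A_t=i$ the click indicator satisfies $\E[X_t\mid\mathcal H_{t-1},A_t=i]=\rho_i$. Hence, by the tower property, each summand may be replaced by its conditional mean, so it suffices to bound $\E\big[\sum_{t,\,i\ge 2,\,j}(\rho_s v_s-\tilde\rho_{j,t}v_j\,\rho_i/\tilde\rho_{i,t})\,\mathbb I\{A_t=i,B_t=j\}\big]$.

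Next I would condition on the good event $\mathcal E=\{\forall i,t:\ \tilde\rho_{i,t}-\rho_i\in[0,2\sqrt{3\log(2nT)/(2N_{i,t})}]\}$ of Lemma~\ref{lem:ucb_estimator}, which holds with probability $1-1/(n^2T^2)$; on its complement I would use the crude per-round bound $\rho_s v_s$ (the payment is nonnegative), which after multiplying by $\Pr[\mathcal E^c]$ contributes only the lower-order $2/T$ and $\tfrac{\rho_s v_s}{\rho_i}\tfrac{\sqrt{6\log(2nT)}}{n^2T}$ terms. On $\mathcal E$ I would establish the per-round estimate through three observations: (i) since the runner-up score is the second largest of the $\tilde\rho_{k,t}v_k$ and $\smax$ is monotone in its arguments, $\tilde\rho_{j,t}v_j=\smax_k\tilde\rho_{k,t}v_k\ge\smax_k\rho_k v_k=\rho_s v_s$; (ii) since $A_t=i\neq 1$ wins, $\tilde\rho_{i,t}v_i\ge\tilde\rho_{1,t}v_1\ge\rho_1 v_1$, which lower-bounds the denominator $\tilde\rho_{i,t}$ and supplies the $1/\rho_1$ scaling; and (iii) the concentration bound $\tilde\rho_{i,t}-\rho_i\le 2\sqrt{3\log(2nT)/(2N_{i,t})}$. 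Combining (i) with the nonnegativity of $\rho_i/\tilde\rho_{i,t}$ gives the clean per-round bound $\rho_s v_s-\tilde\rho_{j,t}v_j\,\rho_i/\tilde\rho_{i,t}\le\rho_s v_s\,(\tilde\rho_{i,t}-\rho_i)/\tilde\rho_{i,t}$, and then (ii)--(iii) turn the right-hand side into $O\big(\tfrac{\rho_s v_s}{\rho_1}\sqrt{\log(2nT)/N_{i,t}}\big)$.

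Finally I would sum over the rounds in which arm $i$ is shown. Reindexing these rounds by the visit count $k=N_{i,t}=1,\dots,N_{i,T}$ and applying Fact~\ref{fact:sqrt_sum_bound}, $\sum_{k=1}^{N_{i,T}}1/\sqrt k\le 2\sqrt{N_{i,T}}$, collapses the $\sqrt{\cdot}$ terms into a single factor $\sqrt{N_{i,T}}$. The key step that upgrades the bound from $\log/\Delta_i^2$ to $\log/\Delta_i$ is then to invoke Lemma~\ref{lem:bound_Nit}, $N_{i,T}\le 6\log(2nT)/\Delta_i^2$, so that $\sqrt{\log(2nT)}\cdot\sqrt{N_{i,T}}\le\sqrt{6}\,\log(2nT)/\Delta_i$; tracking the constants yields the leading term $\tfrac{12\rho_s v_s}{\rho_1}\tfrac{\log(2nT)}{\Delta_i}$ for each $i$, and summing over $i\in\{2,\dots,n\}$ together with the bad-event contributions gives the claimed inequality. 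The main obstacle is the interplay in steps (ii)--(iii): unlike the generic UCB proof, where one only needs the estimate to be accurate, here the bound must simultaneously exploit that the overshoot $\tilde\rho_{i,t}-\rho_i$ is \emph{small} (concentration) while the denominator $\tilde\rho_{i,t}$ is \emph{bounded below} by the optimal eCPM (the winning condition), and it is precisely the interaction of these two facts with the visit-count bound that produces the gap-dependent $\log/\Delta_i$ rate rather than the naive $\log/\Delta_i^2$.
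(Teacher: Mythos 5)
Your proposal follows essentially the same route as the paper's proof: condition on the UCB concentration event of Lemma~\ref{lem:ucb_estimator} (bounding its complement crudely to get the $2/T$ and $\frac{\rho_s v_s}{\rho_i}\frac{\sqrt{6\log(2nT)}}{n^2T}$ terms), replace $X_t$ by $\rho_i$, lower-bound the runner-up score by $\rho_s v_s$, turn the per-round loss into $O\bigl(\rho_s v_s\sqrt{\log(2nT)/N_{i,t}}\bigr)$, and sum over visit counts using $N_{i,T}\le 6\log(2nT)/\Delta_i^2$ from Lemma~\ref{lem:bound_Nit} to obtain the $\log(2nT)/\Delta_i$ rate. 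The only cosmetic deviations are that you justify $\tilde\rho_{j,t}v_j\ge\rho_s v_s$ via monotonicity of the second-max on the good event rather than via the chain $\tilde\rho_{j,t}v_j\ge\tilde\rho_{1,t}v_1\ge\rho_1 v_1\ge\rho_s v_s$, and your extraction of the $1/\rho_1$ factor from the winning condition actually yields $v_i/(\rho_1 v_1)$ rather than $1/\rho_1$ (the paper's own computation produces $1/\rho_i$ at this step before stating $1/\rho_1$), but these affect only constants.
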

\begin{proof}
Let $\cE_{i,t} = \left\{\tilde{\rho}_{1,t} \geq \rho_1, \tilde{\rho}_{i,t} \leq \rho_i + 2\sqrt{\frac{3 \log(2nT)}{2N_{i,t}}}\right\}$.
First, we write
\begin{align}
\sum_{t = 1}^T \sum_{i = 2}^n \sum_{j=1}^n & \left ( \rho_s v_s - \tilde{\rho}_{j,t} v_j \cdot \frac{X_t}{\tilde{\rho}_{i,t}} \right) \mathbb{I} \left \{ A_t = i, B_t = j\right \} \notag \\
& \le \sum_{t = 1}^T \sum_{i = 2}^n \sum_{j=1}^n  \rho_s v_s   \mathbb{I} \left \{ \tilde{\rho}_{1,t} < \rho_1 \right \} \label{eqn:wrongwinner_1}  \\
& + \sum_{t = 1}^T \sum_{i = 2}^n \sum_{j=1}^n  \rho_s v_s   \mathbb{I} \left \{ \tilde{\rho}_{i,t} > \rho_i+ 2 \sqrt{\frac{3 \log(2nT)}{2N_{i,t}}} \right \} \label{eqn:wrongwinner_2} \\
& + \sum_{t = 1}^T \sum_{i = 2}^n \sum_{j=1}^n \left ( \rho_s v_s - \tilde{\rho}_{j,t} v_j \cdot \frac{X_t}{\tilde{\rho}_{i,t}} \right) \mathbb{I} \left \{ A_t = i, B_t = j, \cE_{i,t} \right \}. \label{eqn:wrongwinner_3}
\end{align}
We now bound Eq.~\eqref{eqn:wrongwinner_1}, Eq.~\eqref{eqn:wrongwinner_2}, and Eq.~\eqref{eqn:wrongwinner_3} separately.
\begin{claim}
\label{claim:wrongwinner_1}
$\sum_{t = 1}^T \sum_{i = 2}^n \sum_{j=1}^n  \rho_s v_s \Pr\left[ \tilde{\rho}_{1,t} < \rho_1 \right] \leq \frac{1}{T}$.
\end{claim}
\begin{proof}
    By Lemma~\ref{lem:ucb_estimator}, we have $\Pr[\tilde{\rho}_{1, t} < \rho_1] \leq \frac{1}{n^2T^2}$.
    So, $\sum_{t = 1}^T \sum_{i = 2}^n \sum_{j=1}^n  \rho_s v_s \Pr\left[ \tilde{\rho}_{1,t} < \rho_1 \right] \leq \frac{n^2T}{n^2T^2} =  \frac{1}{T}$.
\end{proof}
\begin{claim}
\label{claim:wrongwinner_2}
$\sum_{t = 1}^T \sum_{i = 2}^n \sum_{j=1}^n  \rho_s v_s \Pr\left[ \tilde{\rho}_{i,t} > \rho_i+ 2 \sqrt{\frac{3 \log(2nT)}{2N_{i,t}}} \right] \leq \frac{1}{T}$.
\end{claim}
\begin{proof}
    Similar to Claim~\ref{claim:wrongwinner_2}, this follows from Lemma~\ref{lem:ucb_estimator} which gives $\Pr\left[ \tilde{\rho}_{i,t} > \rho_i+ 2 \sqrt{\frac{3 \log(2nT)}{2N_{i,t}}} \right] \leq \frac{1}{n^2T^2}$.
\end{proof}
The bound for Eq.~\eqref{eqn:wrongwinner_3} requires a bit more work and we relegate the proof of the next claim to Appendix~\ref{subsec:wrongwinner_3}.
\begin{claim}
\label{claim:wrongwinner_3}
We have that
\[
\E\left[ \sum_{t = 1}^T \sum_{i = 2}^n \sum_{j=1}^n \left ( \rho_s v_s - \tilde{\rho}_{j,t} v_j \cdot \frac{X_t}{\tilde{\rho}_{i,t}} \right) \mathbb{I} \left \{ A_t = i, B_t = j, \cE_{i, t} \right \} \right] \leq \sum_{i=2}^n \frac{12 \rho_s v_s}{\rho_1} \frac{\log(2nT)}{\Delta_i} + \frac{\rho_s v_s}{\rho_i} \frac{\sqrt{6 \log(2nT)}}{n^2T}.
\]
\end{claim}
The lemma follows by combining the previous three claims.
\end{proof}

\subsubsection{Proof of Claim~\ref{claim:wrongwinner_3}}
\begin{proof}
Recall that $\cE_{i,t} = \left\{\tilde{\rho}_{1,t} \geq \rho_1, \tilde{\rho}_{i,t} \leq \rho_i + 2\sqrt{\frac{3 \log(2nT)}{2N_{i,t}}}\right\}$.
We have
\label{subsec:wrongwinner_3}
\begin{align}
    \E & \left [\sum_{t = 1}^T \sum_{i = 2}^n \sum_{j=1}^n  \left ( \rho_s v_s - \tilde{\rho}_{j,t} v_j \cdot \frac{X_t}{\tilde{\rho}_{i,t}} \right) \mathbb{I} \left \{ A_t = i, B_t = j, \cE_{i,t} \right \} \right ] \notag \\
    & \le \E \left [\sum_{t = 1}^T \sum_{i = 2}^n \sum_{j=1}^n  \left ( \rho_s v_s - \rho_s v_s \cdot \frac{\rho_i}{\rho_i + 2 \sqrt{\frac{3 \log(2nT)}{2N_{i,t}}} } \right) \mathbb{I} \left \{ A_t = i, B_t = j, \cE_{i, t} \right \} \right ] \notag \\
    & \le \E \left [\sum_{t = 1}^T \sum_{i = 2}^n \sum_{j=1}^n  \left ( \rho_s v_s - \rho_s v_s \cdot \frac{\rho_i}{\rho_i + 2 \sqrt{\frac{3 \log(2nT)}{2N_{i,t}}} } \right) \mathbb{I} \left \{ A_t = i, B_t = j\right \} \right ] \notag \\
    & \le \E \left [\sum_{t = 1}^T \sum_{i = 2}^n \left ( \rho_s v_s - \rho_s v_s \cdot \frac{\rho_i}{\rho_i + 2 \sqrt{\frac{3 \log(2nT)}{2N_{i,t}}} } \right) \mathbb{I} \left \{ A_t = i\right \} \right ] \notag \\
    & = \E \left [\sum_{t = 1}^T \sum_{i = 2}^n \left ( \rho_s v_s - \rho_s v_s \cdot \frac{\rho_i}{\rho_i + 2 \sqrt{\frac{3 \log(2nT)}{2N_{i,t}}} } \right) \mathbb{I} \left \{ A_t = i\right \} \right ] \notag \\
    & = \E \left [\sum_{t = 1}^T \sum_{i = 2}^n \left ( \rho_s v_s\left( 1 - \frac{\rho_i}{\rho_i + 2 \sqrt{\frac{3 \log(2nT)}{2N_{i,t}}} } \right) \right) \mathbb{I} \left \{ A_t = i\right \} \right ] \notag \\
    & = \E \left [\sum_{t = 1}^T \sum_{i = 2}^n \left ( \rho_s v_s\left( 1 - \frac{1}{1 + \frac{2}{\rho_i} \sqrt{\frac{3 \log(2nT)}{2N_{i,t}}} } \right) \right) \mathbb{I} \left \{ A_t = i\right \} \right ] \notag \\
    & \leq \E \left [\sum_{t = 1}^T \sum_{i = 2}^n \frac{2 \rho_s v_s}{\rho_i} \sqrt{\frac{3 \log (2nT)}{2N_{i, t}}} \mathbb{I} \left \{ A_t = i\right \} \right] \notag \\
    & = \E \left [\sum_{t = 1}^T \sum_{i = 2}^n \frac{2 \rho_s v_s}{\rho_i} \sqrt{\frac{3 \log(2nT)}{2N_{i, t}}} \mathbb{I} \left \{ A_t = i, N_{i, T} \leq \frac{6 \log(2nT)}{\Delta_i^2} \right \} \right ] \label{eqn:ucb_4} \\
    & \quad \quad + \E \left [\sum_{t = 1}^T \sum_{i = 2}^n \frac{2 \rho_s v_s}{\rho_i} \sqrt{\frac{3 \log(2nT)}{2N_{i, t}}} \mathbb{I} \left \{ A_t = i, N_{i, T} > \frac{6 \log(2nT)}{\Delta_i^2} \right \} \right ] \label{eqn:ucb_5}
\end{align}
In the first inequality above, we used that (i) $\E[X_t | \ind{A_t = i} ] = \rho_i$ (and that the conditional expectation is independent of everything else)
and that (ii) $\tilde{\rho}_{j, t} v_j \geq \tilde{\rho}_{1, t} v_1 \geq \rho_1 v_1 \geq \rho_s v_s$.
Here, the first inequality is because $A_t \neq 1$ so the runner-up score is atleast the runner-up score of ad $1$,
the second inequality is because we conditioned on the event $\tilde{\rho}_{1,t} \geq \rho_1$,
and the third inequality is because $1 \in \argmax_{i \in [n]} \rho_i v_i$.
To bound Eq.~\eqref{eqn:ucb_4}, note that
\begin{align*}
\sum_{t = 1}^T \sum_{i = 2}^n \frac{2 \rho_s v_s}{\rho_i} \sqrt{\frac{3 \log(2nT)}{2N_{i, t}}} & \mathbb{I} \left \{ A_t = i, N_{i, T} \leq \frac{6 \log(2nT)}{\Delta_i^2} \right \} \\
& \leq
\sum_{i = 2}^n \sum_{k=1}^{\frac{6 \log (2nT)}{\Delta_i^2}}
\frac{2 \rho_s v_s}{\rho_i} \sqrt{\frac{3 \log(2nT)}{2k}} \mathbb{I} \left \{ A_t = i, N_{i, T} \leq \frac{6 \log(2nT)}{\Delta_i^2} \right \} \\
& \leq \frac{12 \rho_s v_s}{\rho_1} \frac{\log(2nT)}{\Delta_i},
\end{align*}
where the last inequality is by Fact~\ref{fact:sqrt_sum_bound}.

Finally, to bound Eq.~\eqref{eqn:ucb_5}, we use the trivial bound $1/\sqrt{N_{i, t}} \leq 1$ and Lemma~\ref{lem:bound_Nit} to get that
\begin{align*}
\text{Eq.~\eqref{eqn:ucb_5}}
& \leq \sum_{t=1}^T \sum_{i=1}^n \frac{2 \rho_s v_s}{\rho_i} \sqrt{\frac{3\log(2nT)}{2}} \Pr\left[N_{i, T} > \frac{6 \log(2nT)}{\Delta_i^2} \right] \\
& \leq \sum_{t=1}^T \sum_{i=1}^n \frac{2 \rho_s v_s}{\rho_i} \sqrt{\frac{3\log(2nT)}{2}} \frac{1}{n^2T^2} \\
& \leq \sum_{i=1}^n \frac{2 \rho_s v_s}{\rho_i} \sqrt{\frac{3\log(2nT)}{2}} \frac{1}{n^2T}.
\end{align*}
The proof is complete.
\end{proof}

\subsection{Proof of Lemma~\ref{lem:right_winner}}
\label{app:right_winner_proof}
\rightwinner*
\begin{proof}
We have that
\begin{align*}
	\E & \left [\sum_{t = 1}^T \sum_{j=1}^n \left ( \rho_s v_s - \tilde{\rho}_{j,t} v_j \cdot \frac{X_t}{\tilde{\rho}_{1,t}} \right ) \mathbb{I} \left \{ A_t = 1, B_t = j\right \}  \right] \\
	& \le  \E \left [\sum_{t = 1}^T \sum_{j=1}^n \left ( \rho_s v_s - \tilde{\rho}_{j,t} v_j \cdot \frac{X_t}{\tilde{\rho}_{1,t}} \right ) \mathbb{I} \left \{ A_t = 1, B_t = j, \tilde{\rho}_{s, t} \ge \rho_s + 0.08 \Delta_s, \tilde{\rho}_{1,t} \le \rho_1 + 2 \sqrt{\frac{3 \log(2nT)} {2N_{1, t}}}\right \}  \right]  + \\
	& \quad \quad \quad \E \left [\sum_{t = 1}^T \sum_{j=1}^n \left ( \rho_s v_s  \right ) \mathbb{I} \left \{ \tilde{\rho}_{s, t} < \rho_s + 0.08 \Delta_s\right \}  \right] + \\
	& \quad \quad \quad  \quad \quad  \E \left [\sum_{t = 1}^T \sum_{j=1}^n \left ( \rho_s v_s \right ) \mathbb{I} \left \{\tilde{\rho}_{1,t} > \rho_1 + 2 \sqrt{\frac{3 \log(2nT)} {2N_{1, t}}}\right \}  \right].
\end{align*}

We begin with the second and third term first since the proofs are short.
\begin{claim}
\label{claim:Delta_gap}
For any $t \in [T]$, $\Pr[\tilde{\rho}_{s,t} \geq \rho_s + 0.08 \Delta_s] \geq 1 - \frac{2}{nT}$.
\end{claim}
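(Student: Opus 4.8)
The plan is to exploit the fact that the runner-up arm $s$ is pulled only $O(\log T)$ times over the entire horizon, so its confidence bonus never shrinks below $\Delta_s/2$; combined with a lower-tail concentration bound on the empirical mean $\hat\rho_{s,t}$, this forces the UCB estimate $\tilde\rho_{s,t}=\hat\rho_{s,t}+\sqrt{3\log(2nT)/(2N_{s,t})}$ to overshoot $\rho_s$ by a constant fraction of $\Delta_s$. I will work throughout with the analysis-convention bonus $\beta(k):=\sqrt{3\log(2nT)/(2k)}$, matching Lemma~\ref{lem:ucb_estimator} and Lemma~\ref{lem:bound_Nit}.

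First I would record, via Lemma~\ref{lem:bound_Nit}, that with probability at least $1-\tfrac{1}{n^2T^2}$ we have $N_{s,t}\le N_{s,T}\le K:=\tfrac{6\log(2nT)}{\Delta_s^2}$ (the lemma applies since $s$ is suboptimal, i.e.\ $s\in\{2,\dots,n\}$). On the range $k\le K$ the bonus obeys $\beta(k)\ge\beta(K)=\Delta_s/2$, so that $0.08\,\Delta_s\le 0.16\,\beta(k)$; hence the ``slack'' $r:=\beta(k)-0.08\,\Delta_s$ is positive and in fact $r\ge 0.84\,\beta(k)$. This is the crucial structural point: the rare pulling of arm $s$ guarantees a bonus that dominates the target overshoot $0.08\,\Delta_s$.

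Next I would control the empirical mean using the coupling from the proof of Lemma~\ref{lem:rho_bound}: replace the adaptively observed clicks of arm $s$ by an i.i.d.\ $\mathrm{Ber}(\rho_s)$ sequence and read off $\hat\rho_{s,t}$ as the average $\bar\rho_{s,k}$ of its first $N_{s,t}=k$ entries. For each fixed $k$, the event $\tilde\rho_{s,t}<\rho_s+0.08\,\Delta_s$ is exactly $\bar\rho_{s,k}<\rho_s-r$, which by Hoeffding's inequality (Lemma~\ref{lem:hoeffding}) has probability at most $\exp(-2kr^2)$. Since $k\,\beta(k)^2=\tfrac32\log(2nT)$ identically and $r\ge 0.84\,\beta(k)$, we get $2kr^2\ge 2(0.84)^2\cdot\tfrac32\log(2nT)=2.1168\log(2nT)$, so each term is at most $(2nT)^{-2.1168}\le(2nT)^{-2}$. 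A union bound over $k\in[K]\subseteq[T]$ gives failure probability at most $T\cdot(2nT)^{-2}\le\tfrac{1}{nT}$, and intersecting with the event $N_{s,t}\le K$ from the first step yields the claimed $1-\tfrac{2}{nT}$.

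The routine parts are the Hoeffding estimate and the arithmetic check $2.1168>2$, which is precisely what pins down the admissible constant $0.08$. The one genuine subtlety, and the step I would be most careful about, is that $N_{s,t}$ is a random, adaptively-determined sample size, so Hoeffding cannot be applied to $\hat\rho_{s,t}$ directly; the coupling together with a union bound over fixed pull counts $k$ is what makes the argument rigorous, and it relies essentially on the a priori cap $N_{s,t}\le K$ so that the bonus is guaranteed to be at least $\Delta_s/2$ exactly when it is needed.
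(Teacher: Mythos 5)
Your proof is correct and takes essentially the same route as the paper's: both combine the cap $N_{s,t} \leq 6\log(2nT)/\Delta_s^2$ from Lemma~\ref{lem:bound_Nit} with a lower-tail Hoeffding bound on $\hat{\rho}_{s,t}$ (via the i.i.d.\ coupling and a union bound over pull counts) to conclude that the confidence bonus overshoots $\rho_s$ by more than $0.08\Delta_s$. The only difference is bookkeeping of constants: the paper invokes the packaged Lemma~\ref{lem:rho_bound} with $a=2$ and checks $(\sqrt{1.5}-1)/\sqrt{6} > 0.08$, whereas you bound the exact slack $\sqrt{3\log(2nT)/(2k)} - 0.08\Delta_s \geq 0.84\sqrt{3\log(2nT)/(2k)}$ and verify the resulting Hoeffding exponent exceeds $2$.
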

\begin{proof}
    Using Lemma~\ref{lem:rho_bound} with $a = 2$ gives that, with probability $1 - \frac{1}{n^2T}$, $\hat{\rho}_{s, t} \geq - \sqrt{\frac{\log(2nT)}{N_{s,t}}}$ for all $t \in [T]$.
    On this event, we have $\tilde{\rho}_{s,t} \geq \sqrt{\frac{\log(2nT)}{N_{i,t}}} \cdot \left(\sqrt{1.5} - 1 \right) > 0.2 \sqrt{\frac{\log(2nT)}{N_{i,t}}}$.
    Next, using Lemma~\ref{lem:bound_Nit}, we have $N_{s, t} \leq \frac{6 \log(2nT)}{\Delta_i^2}$ for all $t \in [T]$ with probability $1 - \frac{1}{n^2T^2}$.
    Condition on the above two events, we have $\tilde{\rho}_{s,t} \geq \rho_s + 0.08 \Delta_s$ with probability at least $1 - \frac{2}{nT}$.
\end{proof}
\begin{claim}
\label{claim:rightwinner_2}
$\E \left [\sum_{t = 1}^T \sum_{j=1}^n \left ( \rho_s v_s  \right ) \mathbb{I} \left \{ \tilde{\rho}_{s, t} < \rho_s + 0.08 \Delta_s\right \}  \right] \leq 2 \rho_s v_s$.
\end{claim}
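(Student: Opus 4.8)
The plan is to exploit the fact that neither the summand $\rho_s v_s$ nor the indicator $\mathbb{I}\{\tilde{\rho}_{s,t} < \rho_s + 0.08 \Delta_s\}$ depends on the index $j$. Hence the inner sum over $j \in [n]$ merely contributes a factor of $n$, and after pulling the expectation inside by linearity the quantity to bound becomes
\[
n \cdot \rho_s v_s \sum_{t=1}^T \Pr\left[\tilde{\rho}_{s,t} < \rho_s + 0.08 \Delta_s\right].
\]
So everything reduces to controlling the single-round failure probability $\Pr[\tilde{\rho}_{s,t} < \rho_s + 0.08 \Delta_s]$, uniformly over $t \in [T]$.

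This probability is exactly the complement of the event analyzed in Claim~\ref{claim:Delta_gap}, which gives $\Pr[\tilde{\rho}_{s,t} \ge \rho_s + 0.08 \Delta_s] \ge 1 - \frac{2}{nT}$; therefore $\Pr[\tilde{\rho}_{s,t} < \rho_s + 0.08 \Delta_s] \le \frac{2}{nT}$ for every $t$. Substituting this bound into the displayed expression and summing over the $T$ rounds yields
\[
n \cdot \rho_s v_s \cdot T \cdot \frac{2}{nT} = 2 \rho_s v_s,
\]
which is precisely the claimed bound.

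The only thing worth noting is the bookkeeping: the factor of $n$ coming from the $j$-sum cancels exactly against the $1/n$ appearing in the per-round probability bound of Claim~\ref{claim:Delta_gap}, so the final constant is $2\rho_s v_s$ rather than a quantity that grows with $n$. Consequently there is no real obstacle in this step itself—the substance lives entirely in Claim~\ref{claim:Delta_gap}, whose proof combines a Hoeffding lower bound on $\hat{\rho}_{s,t}$ with the upper bound $N_{s,t} \le 6\log(2nT)/\Delta_s^2$ on the number of times the runner-up arm is shown (Lemma~\ref{lem:bound_Nit}), thereby guaranteeing that the confidence width $\sqrt{3\log(2nT)/(2N_{s,t})}$ remains at least a constant multiple of $\Delta_s$.
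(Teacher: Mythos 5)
Your proof is correct and follows the same route as the paper: the paper likewise deduces the bound directly from Claim~\ref{claim:Delta_gap}, and your version simply makes explicit the cancellation of the factor $n$ from the $j$-sum against the $1/(nT)$ in the per-round failure probability. Nothing further is needed.
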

\begin{proof}
    Using Claim~\ref{claim:Delta_gap}, we conclude that
    \[
        \E \left [\sum_{t = 1}^T \sum_{j=1}^n \left ( \rho_s v_s  \right ) \mathbb{I} \left \{ \tilde{\rho}_{s, t} < \rho_s + 0.08 \Delta_s\right \}  \right] \leq 2 \rho_s v_s,
    \]
    as desired.
\end{proof}
\begin{claim}
\label{claim:rightwinner_3}
$\E \left [\sum_{t = 1}^T \sum_{j=1}^n \left ( \rho_s v_s \right ) \mathbb{I} \left \{\tilde{\rho}_{1,t} > \rho_1 + 2 \sqrt{\frac{3 \log(2nT)} {2N_{1, t}}}\right \}  \right] \leq \rho_s v_s$.
\end{claim}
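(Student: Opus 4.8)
The plan is to recognize that the indicator event appearing in this claim is exactly the failure of the upper-confidence-bound guarantee for the optimal arm $1$, and then to read off the bound directly from the concentration already established in Lemma~\ref{lem:ucb_estimator}. The first observation I would make is that the summand does not depend on $j$, so the inner sum over $j \in [n]$ contributes only a factor of $n$: we have $\sum_{t=1}^T \sum_{j=1}^n \rho_s v_s \, \mathbb{I}\{\tilde{\rho}_{1,t} > \rho_1 + 2\sqrt{\frac{3\log(2nT)}{2N_{1,t}}}\} = n\, \rho_s v_s \sum_{t=1}^T \mathbb{I}\{\tilde{\rho}_{1,t} > \rho_1 + 2\sqrt{\frac{3\log(2nT)}{2N_{1,t}}}\}$, which already reduces the problem to controlling how often the arm-$1$ upper confidence bound can be violated.

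Next I would note that each surviving indicator forces the good event $\cE$ of Lemma~\ref{lem:ucb_estimator} to fail. By definition $\cE$ asserts that $\tilde{\rho}_{i,t} - \rho_i \in [0, 2\sqrt{\frac{3\log(2nT)}{2N_{i,t}}}]$ simultaneously for every arm $i$ and every round $t$; in particular, for each fixed $t$ the event $\{\tilde{\rho}_{1,t} > \rho_1 + 2\sqrt{\frac{3\log(2nT)}{2N_{1,t}}}\}$ is contained in $\cE^c$. Hence the entire double sum is bounded above by $nT \rho_s v_s \, \mathbb{I}\{\cE^c\}$, and taking expectations together with the tail bound $\Pr[\cE^c] \leq \frac{1}{n^2 T^2}$ supplied by Lemma~\ref{lem:ucb_estimator} yields $nT \rho_s v_s \, \Pr[\cE^c] \leq \frac{\rho_s v_s}{nT} \leq \rho_s v_s$, which is the claimed inequality (indeed with substantial slack, mirroring the short arguments used for Claim~\ref{claim:rightwinner_2} and Claim~\ref{claim:wrongwinner_1}).

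The only point that deserves care — and the reason I lean on Lemma~\ref{lem:ucb_estimator} rather than applying Hoeffding's inequality at a single time $t$ — is that $N_{1,t}$ is a random, history-dependent count, so a concentration bound at one fixed sample size does not apply verbatim. This subtlety is precisely what the coupling in the proof of Lemma~\ref{lem:rho_bound} resolves: by coupling the observed clicks to i.i.d.\ draws $\wtilde{X}_{1,\ell}$ and union-bounding over all possible values $N_{1,t} \in [T]$, one obtains a failure probability that is uniform over $t$, namely the $\frac{1}{n^2 T^2}$ used above. Since this machinery is already in place, the claim follows with no additional estimation, and I expect no genuine obstacle beyond correctly invoking the uniform-in-$t$ concentration.
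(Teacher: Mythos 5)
Your proposal is correct and follows essentially the same route as the paper: both reduce the claim to the per-round tail bound $\Pr\bigl[\tilde{\rho}_{1,t} > \rho_1 + 2\sqrt{\tfrac{3\log(2nT)}{2N_{1,t}}}\bigr] \leq \tfrac{1}{n^2T^2}$ from Lemma~\ref{lem:ucb_estimator} and sum over the $nT$ terms to get $\tfrac{\rho_s v_s}{nT} \leq \rho_s v_s$. Your additional remark about the coupling and the union bound over the random count $N_{1,t}$ is exactly the mechanism already built into Lemma~\ref{lem:rho_bound}, so nothing further is needed.
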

\begin{proof}
    This follows easily from Lemma~\ref{lem:ucb_estimator} which implies that $\Pr\left[\tilde{\rho}_{1,t} > \rho_1 + 2 \sqrt{\frac{3 \log(2nT)} {2N_{1, t}}}\right] \leq \frac{1}{n^2T^2}$.
\end{proof}
We now bound the first term. Let $\cE_t = \left\{\tilde{\rho}_{s,t} \geq \rho_s + 0.08 \Delta_s, \tilde{\rho}_{1,t} \leq \rho_1 + 2\sqrt{\frac{3 \log(2nT)}{2N_{1,t}}} \right\}$.
\begin{claim}
    \label{claim:rightwinner_1}
    \begin{align*}
        \E & \left [\sum_{t = 1}^T \sum_{j=1}^n \left ( \rho_s v_s - \tilde{\rho}_{j,t} v_j \cdot \frac{X_t}{\tilde{\rho}_{1,t}} \right ) \mathbb{I} \left \{ A_t = 1, B_t = j, \cE_t \right \}  \right] \\
        & \leq \frac{9000 \rho_s^3 v_s \log T}{\rho_1^2 \Delta_s^2} - 0.05 \Delta_s v_s \left( T - \left( \frac{5}{nT} + \frac{9000 \rho_s^2 \log(2nT)}{\rho_1^s \Delta_s^2} + \sum_{i=2}^n \frac{12 \log(2nT)}{\Delta_i^2} \right) \right)
    \end{align*}
\end{claim}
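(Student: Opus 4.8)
The plan is to reduce the whole sum to a single clean per-round estimate valid on $\cE_t$, and then to balance a logarithmic ``warm-up'' cost against a linear negative drift. The starting point is that, conditioned on $A_t=1$, we have $\E[X_t]=\rho_1$, so the summand becomes $\rho_s v_s - \frac{\tilde{\rho}_{B_t,t} v_{B_t}}{\tilde{\rho}_{1,t}}\rho_1$. Two facts drive the argument. First, since the optimal ad $1$ is the winner and $s\neq 1$, the runner-up score dominates the score of $s$, so $\tilde{\rho}_{B_t,t}v_{B_t}\geq \tilde{\rho}_{s,t}v_s$, and on $\cE_t$ this is at least $(\rho_s+0.08\Delta_s)v_s = \rho_s v_s + 0.08\Delta_s v_s$. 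Second, on $\cE_t$ the winner's estimate satisfies $\tilde{\rho}_{1,t}\leq \rho_1 + 2\sqrt{3\log(2nT)/(2N_{1,t})}$. Combining these with the elementary inequality $\rho_1/(\rho_1+\eps)\geq 1-\eps/\rho_1$ yields the key per-round bound
\[
\rho_s v_s - \frac{\tilde{\rho}_{B_t,t}v_{B_t}}{\tilde{\rho}_{1,t}}\rho_1 \;\leq\; -0.08\Delta_s v_s + \frac{\rho_s v_s + 0.08\Delta_s v_s}{\rho_1}\cdot 2\sqrt{\frac{3\log(2nT)}{2N_{1,t}}},
\]
where the negative term is the ``extra competition'' gain from the inflated runner-up and the positive term is the penalty for overestimating $\rho_1$.

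Next I would introduce a threshold $k^\star := \frac{9000\,\rho_s^2\log(2nT)}{\rho_1^2\Delta_s^2}$, calibrated so that once $N_{1,t}\geq k^\star$ the positive error term is at most $0.03\Delta_s v_s$ and the whole per-round bound is at most $-0.05\Delta_s v_s$ (this $0.03$ slack is exactly what turns the $0.08$ drift into the stated $0.05$). I then split the rounds on which $A_t=1$ and $\cE_t$ hold according to whether $N_{1,t}<k^\star$ or $N_{1,t}\geq k^\star$. For the under-explored rounds there are at most $k^\star$ of them (each show of ad $1$ increments $N_{1,t}$), and I bound each summand by the trivial upper bound $\rho_s v_s$, which uses only that the revenue is nonnegative; this contributes at most $k^\star\rho_s v_s = \frac{9000\,\rho_s^3 v_s\log(2nT)}{\rho_1^2\Delta_s^2}$, i.e.\ the first term of the claim. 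For the well-explored rounds, each summand is at most $-0.05\Delta_s v_s$.

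To finish I would lower bound the number of well-explored rounds. Writing $\#\{t:A_t=1,\cE_t,N_{1,t}\geq k^\star\}\geq N_{1,T}-k^\star-\#\{t:\cE_t^c\}$ and using $N_{1,T}=T-\sum_{i\geq 2}N_{i,T}$, I take expectations and invoke Lemma~\ref{lem:bound_Nit} to bound $\E[N_{i,T}]\leq \frac{12\log(2nT)}{\Delta_i^2}$, together with Claim~\ref{claim:Delta_gap} and Lemma~\ref{lem:ucb_estimator} to control the expected number of rounds where $\cE_t$ fails by a lower-order term (of order $\tfrac{5}{nT}$, which after multiplication by $0.05\Delta_s v_s$ produces the residual $\tfrac{0.25\Delta_s v_s}{nT}$). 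Multiplying this count by $-0.05\Delta_s v_s$ and adding the warm-up contribution then gives exactly the stated bound (note $\rho_1^s$ in the claim should read $\rho_1^2$).

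The main obstacle is the first step, and in particular the claim that the runner-up score strictly overshoots $\rho_s v_s$ by a constant margin throughout the horizon. This runs against the usual UCB philosophy, where tight estimates are desirable: here the negative regret arises \emph{because} the suboptimal arm $s$ stays under-explored, so $\tilde{\rho}_{s,t}$ never collapses onto $\rho_s$. Making this rigorous is precisely Claim~\ref{claim:Delta_gap}, which combines a one-sided deviation bound on $\hat{\rho}_{s,t}$ with the fact that a suboptimal arm is pulled at most $O(\log(2nT)/\Delta_s^2)$ times (Lemma~\ref{lem:bound_Nit}), so that the confidence radius $\sqrt{3\log T/(2N_{s,t})}$ stays above a fixed multiple of $\Delta_s$. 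Once this inflation is in hand, the rest of the proof is careful but routine bookkeeping of the threshold constant and the lower-order failure terms.
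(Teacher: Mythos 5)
Your proposal is correct and follows essentially the same route as the paper's proof: the same per-round bound obtained by lower-bounding the runner-up score via $\tilde{\rho}_{B_t,t}v_{B_t}\geq\tilde{\rho}_{s,t}v_s\geq(\rho_s+0.08\Delta_s)v_s$ on $\cE_t$ and upper-bounding $\tilde{\rho}_{1,t}$, the same threshold $T_0=\frac{9000\rho_s^2\log(2nT)}{\rho_1^2\Delta_s^2}$ splitting warm-up rounds (bounded trivially by $\rho_s v_s$ each) from rounds with negative drift $-0.05\Delta_s v_s$, and the same counting of good rounds via Lemma~\ref{lem:bound_Nit}, Claim~\ref{claim:Delta_gap}, and Lemma~\ref{lem:ucb_estimator}. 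Your reading of $\rho_1^s$ as a typo for $\rho_1^2$ is also consistent with the paper's own computation.
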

The proof of Claim~\ref{claim:rightwinner_1} can be found in Appendix~\ref{app:proof_rightwinner_1}.
Combining Claim~\ref{claim:rightwinner_2}, Claim~\ref{claim:rightwinner_3}, Claim~\ref{claim:rightwinner_1} completes the proof of the lemma.
\end{proof}

\subsubsection{Proof of Claim~\ref{claim:rightwinner_1}}
\label{app:proof_rightwinner_1}
\begin{proof}
Recall that $\cE_t = \left\{\tilde{\rho}_{s,t} \geq \rho_s + 0.08 \Delta_s, \tilde{\rho}_{1,t} \leq \rho_1 + 2\sqrt{\frac{3 \log(2nT)}{2N_{1,t}}} \right\}$.
Let $T_0 = \frac{9000 \rho_s^2 \log(2nT)}{\rho_1^2 \Delta_s^2}$.
\begin{align*}
	\E & \left [\sum_{t = 1}^T \sum_{j=1}^n \left ( \rho_s v_s - \tilde{\rho}_{j,t} v_j \cdot \frac{X_t}{\tilde{\rho}_{1,t}} \right ) \mathbb{I} \left \{ A_t = 1, B_t = j, \cE_t \right \}  \right]  \\
	 &  \le \E  \left [\sum_{t = 1}^T \sum_{j=1}^n \left ( \rho_s v_s - (\rho_s + 0.08 \Delta_s) v_s \cdot \frac{X_t}{\rho_1 + 2 \sqrt{\frac{3 \log(2nT)} {2N_{1, t}}}} \right ) \mathbb{I} \left \{ A_t = 1, B_t = j, \cE_t \right \}  \right]  \\
  	 &  = \E  \left [\sum_{t = 1}^T \sum_{j=1}^n \left ( \rho_s v_s - (\rho_s + 0.08 \Delta_s) v_s \cdot \frac{\rho_1}{\rho_1 + 2 \sqrt{\frac{3 \log(2nT)} {2N_{1, t}}}} \right ) \mathbb{I} \left \{ A_t = 1, B_t = j, \cE_t \right \}  \right]  \\
    &  = \E  \left [\sum_{t = 1}^T  \left ( \rho_s v_s - (\rho_s + 0.08 \Delta_s) v_s \cdot \frac{\rho_1}{\rho_1 + 2 \sqrt{\frac{3 \log(2nT)} {2N_{1, t}}}} \right ) \mathbb{I} \left \{ A_t = 1, \cE_t \right \}  \right]  \\
    &  \le \rho_s v_s \cdot T_0 + \E  \left [\sum_{t = 1}^T  \left ( \rho_s v_s - (\rho_s + 0.08 \Delta_s) v_s \cdot \frac{\rho_1}{\rho_1 + 2 \sqrt{\frac{3 \log(2nT)} {2T_0}}} \right ) \mathbb{I} \left \{ A_t = 1, N_{1,t} > T_0, \cE_t \right \}  \right] \\
    &  = \rho_s v_s \cdot T_0 +
    \rho_s v_s \left (  \frac{2 \sqrt{\frac{2 \log T} {T_0}} - \frac{0.08\Delta_s}{\rho_s} \cdot \rho_1}{\rho_1 + 2 \sqrt{\frac{2 \log T} {T_0}}}  \right) \E  \left [\sum_{t = 1}^T   \mathbb{I} \left \{ A_t = 1, N_{1,t} > T_0, \cE_t \right \}  \right] \\
    & \le \rho_s v_s \cdot T_0 - 0.05 \Delta_s v_s \sum_{t = 1}^T   \Pr \left[A_t = 1, N_{1,t} > T_0, \tilde{\rho}_{s, t} \ge \rho_s + 0.08 \Delta_s, \tilde{\rho}_{1,t} \le \rho_1 + 2 \sqrt{\frac{3 \log(2nT)} {2N_{1, t}}}\right].
\end{align*}
It remains to bound the final expectation.
\begin{claim}
    \label{claim:rightwinner_sum_probs_1}
    $\sum_{t=1}^T \Pr[A_t \neq 1] \leq \frac{1}{nT} + \sum_{i=2}^n \frac{6 \log(2nT)}{\Delta_i^2}$.
\end{claim}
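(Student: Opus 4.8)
The plan is to recognize Claim~\ref{claim:rightwinner_sum_probs_1} as a standard ``number of suboptimal pulls'' bound, for which the heavy lifting has already been carried out in Lemma~\ref{lem:bound_Nit}. First I would rewrite the left-hand side by decomposing over which suboptimal arm wins:
\[
\sum_{t=1}^T \Pr[A_t \neq 1]
= \E\left[\sum_{t=1}^T \ind{A_t \neq 1}\right]
= \sum_{i=2}^n \E\left[\sum_{t=1}^T \ind{A_t = i}\right],
\]
so that the quantity of interest is simply the expected total number of times a suboptimal ad is shown. Note that $\sum_{t=1}^T \ind{A_t = i}$ is exactly the number of pulls of arm $i$ during the main loop, which is at most $N_{i, T+1}$.

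The key structural fact is that, conditioned on the good event $\mathcal{G}$ of Lemma~\ref{lem:bound_Nit} (which holds with probability at least $1 - \frac{1}{n^2T^2}$ via Lemma~\ref{lem:ucb_estimator}), the number of pulls of each arm $i \geq 2$ is capped: $\sum_{t=1}^T \ind{A_t = i} \leq N_{i, T+1} \leq \frac{6 \log(2nT)}{\Delta_i^2}$. This is precisely because once $N_{i,t}$ reaches $\frac{6\log(2nT)}{\Delta_i^2}$, the confidence interval for arm $i$ has shrunk enough that its UCB score $\tilde{\rho}_{i,t} v_i$ drops strictly below $\rho_1 v_1 \leq \tilde{\rho}_{1,t} v_1$, so arm $i$ can no longer be the winner.

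I would then split the expectation according to whether $\mathcal{G}$ holds. On $\mathcal{G}$, summing the per-arm cap gives $\sum_{i=2}^n \frac{6\log(2nT)}{\Delta_i^2}$. On the complement $\mathcal{G}^c$, I would use the trivial observation that at most one ad is shown per round, so $\sum_{t=1}^T \ind{A_t \neq 1} \leq T$ deterministically; hence the contribution of $\mathcal{G}^c$ is at most $T \cdot \Pr[\mathcal{G}^c] \leq T \cdot \frac{1}{n^2T^2} = \frac{1}{n^2T} \leq \frac{1}{nT}$. Combining the two pieces yields exactly the claimed bound. I do not anticipate a genuine obstacle here: the only subtlety is the off-by-one between $N_{i,T}$ as stated in Lemma~\ref{lem:bound_Nit} and the number of pulls $\sum_{t=1}^T \ind{A_t=i}$, which is harmless because the proof of Lemma~\ref{lem:bound_Nit} in fact establishes the cap on $N_{i,t}$ for \emph{every} $t$, and the slack in the bad-event term ($\frac{1}{n^2T}$ versus the stated $\frac{1}{nT}$) only works in our favor.
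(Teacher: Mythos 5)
Your proposal is correct and follows essentially the same route as the paper: decompose $\sum_t \Pr[A_t \neq 1]$ into the expected pull counts of the suboptimal arms and invoke Lemma~\ref{lem:bound_Nit}, paying $T \cdot \Pr[\mathcal{G}^c] \leq \frac{1}{nT}$ on the bad event. The only cosmetic difference is that the paper absorbs the bad-event term into each $\E[N_{i,T}]$ separately rather than once globally, and your handling of the off-by-one between $N_{i,T}$ and the pull count is a fair (and harmless) extra precaution.
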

\begin{proof}
    Note that $\sum_{t=1}^T \Pr[A_t \neq 1] = \sum_{t=1}^T \E[\ind{A_t \neq 1}] \leq \sum_{i=2}^n \E[N_{i, T}]$.
    By Claim~\ref{lem:bound_Nit}, we have that $\E[N_{i, T}] \leq T \cdot \frac{1}{n^2T^2} + \frac{6 \log(2nT)}{\Delta_i^2}$.
    Taking the sum proves the claim.
\end{proof}
\begin{claim}
    \label{claim:rightwinner_sum_probs_2}
    $\sum_{t=1}^T \Pr\left[N_{1, t} \leq T_0 \right] \leq \sum_{i=2}^T \frac{6 \log(2nT)}{\Delta_i^2} + T_0 + \frac{1}{nT}$.
\end{claim}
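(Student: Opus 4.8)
The plan is to decompose the sum according to whether ad $1$ is shown in round $t$, and to observe that the event $\{N_{1,t} \le T_0\}$ can only persist for a controlled number of rounds. Concretely, I would write
\[
\sum_{t=1}^T \ind{N_{1,t} \le T_0}
= \sum_{t=1}^T \ind{N_{1,t} \le T_0,\, A_t = 1}
+ \sum_{t=1}^T \ind{N_{1,t} \le T_0,\, A_t \neq 1},
\]
and bound each of the two sums separately before taking expectations.

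For the first sum, the key point is that $N_{1,t}$ is a nondecreasing integer sequence (it starts at $1$ after the warm-start and increases by exactly one precisely in those rounds where $A_t = 1$, staying constant otherwise). Hence, for each fixed integer value $v \in \{1,\dots,\lfloor T_0\rfloor\}$, there is at most one round in which simultaneously $A_t = 1$ and $N_{1,t} = v$, namely the round that increments the counter from $v$ to $v+1$. Summing over the at most $T_0$ admissible values gives the deterministic bound $\sum_{t=1}^T \ind{N_{1,t}\le T_0,\,A_t=1}\le T_0$, which survives taking expectations.

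For the second sum I would simply drop the indicator $\ind{N_{1,t}\le T_0}$ and use $\sum_{t=1}^T \ind{A_t \neq 1} = \sum_{i=2}^n N_{i,T}$, whose expectation is already controlled: this is exactly the quantity bounded in Claim~\ref{claim:rightwinner_sum_probs_1} (via Lemma~\ref{lem:bound_Nit}), giving $\E\!\left[\sum_{i=2}^n N_{i,T}\right] \le \frac{1}{nT} + \sum_{i=2}^n \frac{6\log(2nT)}{\Delta_i^2}$. Adding the two bounds yields
\[
\sum_{t=1}^T \Pr[N_{1,t}\le T_0]
\le T_0 + \frac{1}{nT} + \sum_{i=2}^n \frac{6\log(2nT)}{\Delta_i^2},
\]
which is the claimed inequality (reading the upper limit of the stated sum as $n$ rather than $T$).

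This argument is essentially routine, so there is no serious obstacle; the only point requiring care is the deterministic counting of the first sum, where one must invoke the monotonicity of $N_{1,t}$ and the fact that it increments by at most one per round to justify the $T_0$ bound without any probabilistic slack. Everything else reduces to reusing the sub-optimal-arm pull bound already established.
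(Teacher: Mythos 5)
Your proof is correct, but it follows a genuinely different decomposition from the paper's. The paper partitions the rounds by \emph{time}: for every round $t > T_0 + \sum_{i=2}^n \frac{6\log(2nT)}{\Delta_i^2}$ it writes $\ind{N_{1,t}\le T_0} = \ind{\sum_{i\ne 1} N_{i,t} > t - T_0} \le \sum_{i=2}^n \ind{N_{i,T} > \frac{6\log(2nT)}{\Delta_i^2}}$, so by Lemma~\ref{lem:bound_Nit} each such late round contributes probability at most $O(1/(nT^2))$, while the initial $T_0 + \sum_{i=2}^n \frac{6\log(2nT)}{\Delta_i^2}$ rounds are bounded trivially by $1$ each. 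You instead partition by the \emph{event} $\{A_t = 1\}$: the rounds with $A_t=1$ and $N_{1,t}\le T_0$ are counted deterministically (at most $T_0$ of them, by pigeonhole, since $N_{1,t}$ is nondecreasing and increments by one exactly on those rounds), and the rounds with $A_t\ne 1$ are absorbed into $\E\left[\sum_{i\ge 2} N_{i,T}\right]$, which is exactly the quantity already bounded in Claim~\ref{claim:rightwinner_sum_probs_1}. Both arguments bottom out in Lemma~\ref{lem:bound_Nit} and yield identical constants; yours is arguably cleaner in that the $T_0$ term enters through a deterministic counting argument rather than a trivial per-round bound on an initial segment of rounds, and it reuses the preceding claim verbatim instead of re-deriving a per-round tail probability. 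You also correctly flag the typo in the claim's statement (the sum should run over $i=2,\ldots,n$, not up to $T$).
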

\begin{proof}
    Let $t > \sum_{i=2}^n \frac{6 \log(2nT)}{\Delta_i^2} + T_0$.
    We have that
    \[
        \ind{N_{1,t} \leq T_0} = \ind{\sum_{i=2}^T N_{i,t} > t - T_0} \leq \sum_{i=2}^n \ind{N_{i,t} > \frac{6 \log(2nT)}{\Delta_i^2}} \leq \sum_{i=2}^n \ind{N_{i, T} > \frac{6 \log(2nT)}{\Delta_i^2}}.
    \]
    Taking expectations and applying Lemma~\ref{lem:bound_Nit} gives that $\Pr\left[ N_{i,t} \leq T_0 \right] \leq \frac{1}{nT^2}$.
    The claim follows by summing over all $t$ and using the trivial inequality $\ind{N_{1, t} \leq T_0} \leq 1$ for $t \leq \sum_{i=2}^n \frac{6 \log(2nT)}{\Delta_i^2} + T_0$.
\end{proof}
\begin{claim}
    \label{claim:rightwinner_sum_probs_3}
    $\sum_{t=1}^T \Pr\left[ \tilde{\rho}_{1,t} > \rho_1 + 2\sqrt{\frac{3 \log(2nT)}{2N_{1,t}}} \right] \leq \frac{1}{n^2T}$.
\end{claim}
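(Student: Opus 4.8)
The plan is to recognize this as a one-sided concentration bound that has essentially already been established in the preceding lemmas, and then simply sum the per-round failure probabilities. First I would rewrite the event to expose its structure: since $\tilde{\rho}_{1,t} = \hat{\rho}_{1,t} + \sqrt{\frac{3\log(2nT)}{2N_{1,t}}}$, the event $\left\{\tilde{\rho}_{1,t} > \rho_1 + 2\sqrt{\frac{3\log(2nT)}{2N_{1,t}}}\right\}$ is identical to $\left\{\hat{\rho}_{1,t} - \rho_1 > \sqrt{\frac{3\log(2nT)}{2N_{1,t}}}\right\}$; that is, it is precisely the event that the \emph{empirical} CTR of ad $1$ overshoots its true value by more than one confidence radius.

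Second, I would bound the probability of this event for a single fixed $t$. This is exactly the upper tail that Lemma~\ref{lem:ucb_estimator} rules out: that lemma guarantees $\tilde{\rho}_{1,t} - \rho_1 \le 2\sqrt{\frac{3\log(2nT)}{2N_{1,t}}}$ uniformly over $i,t$ off a failure event of probability $\frac{1}{n^2T^2}$, so in particular, for each fixed $t$, $\Pr\left[\tilde{\rho}_{1,t} > \rho_1 + 2\sqrt{\frac{3\log(2nT)}{2N_{1,t}}}\right] \le \frac{1}{n^2T^2}$. This is the same per-round estimate already invoked in Claim~\ref{claim:rightwinner_3}. If one prefers to argue from scratch, the per-round bound follows from the coupling in Lemma~\ref{lem:rho_bound} with $a = 3$: couple $\hat{\rho}_{1,t}$ to $\wtilde{\rho}_{1,N_{1,t}}$, apply Hoeffding's inequality at each possible count $k \in \{1,\ldots,T\}$ to obtain a per-count failure probability of $\exp(-3\log(2nT)) = (2nT)^{-3}$, and take a union bound over $k$, which is comfortably below $\frac{1}{n^2T^2}$.

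Finally, I would sum the per-round bounds over the $T$ rounds, obtaining $\sum_{t=1}^T \Pr\left[\tilde{\rho}_{1,t} > \rho_1 + 2\sqrt{\frac{3\log(2nT)}{2N_{1,t}}}\right] \le T \cdot \frac{1}{n^2T^2} = \frac{1}{n^2T}$, which is exactly the claimed bound.

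There is no serious obstacle here; this is a routine concentration estimate of the same flavor as Claim~\ref{claim:rightwinner_3}. The only point that requires a little care is that $N_{1,t}$ is a data-dependent (random) count rather than a fixed sample size, so one cannot apply Hoeffding's inequality directly to $\hat{\rho}_{1,t}$. This is precisely what the standard coupling-and-union-bound device behind Lemma~\ref{lem:rho_bound} is designed to handle, which is why the cleanest route is to cite the per-round consequence of Lemma~\ref{lem:ucb_estimator} and then sum.
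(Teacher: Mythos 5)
Your proof is correct and matches the paper's argument, which simply cites Lemma~\ref{lem:ucb_estimator} to get the per-round failure probability $\frac{1}{n^2T^2}$ and sums over the $T$ rounds. The extra detail you give about the coupling behind Lemma~\ref{lem:rho_bound} handling the random count $N_{1,t}$ is accurate but not needed beyond what the cited lemma already provides.
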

\begin{proof}
    Follows directly from Lemma~\ref{lem:ucb_estimator}.
\end{proof}
\begin{claim}
    \label{claim:rightwinner_sum_probs}
    We have that
    \begin{align*}
        \sum_{t = 1}^T   & \Pr \left[A_t = 1, N_{1,t} > T_0, \tilde{\rho}_{s, t} \ge \rho_s + 0.08 \Delta_s, \tilde{\rho}_{1,t} \le \rho_1 + 2 \sqrt{\frac{3 \log(2nT)} {2N_{1, t}}}\right] \\
        & \geq T - \left( \frac{5}{nT} + T_0 + \sum_{i=2}^n \frac{12 \log(2nT)}{\Delta_i^2} \right).
    \end{align*}
\end{claim}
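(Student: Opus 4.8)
The plan is to read the event inside the probability as the intersection of four events and to lower-bound its summed probability by $T$ minus the summed probabilities of the four complementary ``bad'' events. Write $G_t$ for the good event at round $t$, namely
\[
G_t := \{A_t = 1\} \cap \{N_{1,t} > T_0\} \cap \Big\{\tilde{\rho}_{s,t} \geq \rho_s + 0.08\Delta_s\Big\} \cap \Big\{\tilde{\rho}_{1,t} \leq \rho_1 + 2\sqrt{\tfrac{3\log(2nT)}{2N_{1,t}}}\Big\}.
\]
A union bound over the complement gives $\Pr[G_t] \geq 1 - \Pr[A_t \neq 1] - \Pr[N_{1,t} \leq T_0] - \Pr[\tilde{\rho}_{s,t} < \rho_s + 0.08\Delta_s] - \Pr[\tilde{\rho}_{1,t} > \rho_1 + 2\sqrt{\tfrac{3\log(2nT)}{2N_{1,t}}}]$. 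Summing over $t \in [T]$ and using linearity of expectation, $\sum_{t} \Pr[G_t] \geq T - \sum_t(\text{the four bad probabilities})$, so it remains only to control each of the four summed bad probabilities.

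Each of these four sums is precisely what the preceding results supply, so the second step is a direct substitution. I would invoke Claim~\ref{claim:rightwinner_sum_probs_1} for $\sum_t \Pr[A_t \neq 1]$, Claim~\ref{claim:rightwinner_sum_probs_2} for $\sum_t \Pr[N_{1,t} \leq T_0]$, Claim~\ref{claim:rightwinner_sum_probs_3} for $\sum_t \Pr[\tilde{\rho}_{1,t} > \rho_1 + 2\sqrt{\tfrac{3\log(2nT)}{2N_{1,t}}}]$, and Claim~\ref{claim:Delta_gap} (applied per round) for $\sum_t \Pr[\tilde{\rho}_{s,t} < \rho_s + 0.08\Delta_s]$. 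I would then collect terms: the two copies of $\sum_{i=2}^n \frac{6\log(2nT)}{\Delta_i^2}$ coming from Claims~\ref{claim:rightwinner_sum_probs_1} and~\ref{claim:rightwinner_sum_probs_2} combine into the $\sum_{i=2}^n \frac{12\log(2nT)}{\Delta_i^2}$ appearing on the right-hand side, the $T_0$ from Claim~\ref{claim:rightwinner_sum_probs_2} carries over verbatim, and the remaining lower-order failure probabilities aggregate into the small additive slack displayed in the statement.

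The main thing to watch is the bookkeeping of the high-probability guarantees, since several of them (Lemma~\ref{lem:ucb_estimator}, Lemma~\ref{lem:bound_Nit}, Claim~\ref{claim:Delta_gap}) are phrased uniformly over all rounds. When summing over the $T$ rounds one should feed in the \emph{per-round} failure probabilities; for the winner-overestimate event and the $N_{1,t}$ event these carry an extra $1/T$ factor (of order $1/(n^2T^2)$ per round), so that the $\sum_t$ telescopes to an $O(1/(nT))$-type contribution rather than accumulating a spurious $1/n$ factor. This is the only genuinely delicate point: identifying the correct event-by-event failure probabilities and keeping their sum inside the claimed slack. The exact value of the residual additive constant is immaterial to the conclusion, since in the parent Lemma~\ref{lem:right_winner} it is absorbed into the $O(\log(nT))$ lower-order term; everything else is a routine union bound plus substitution of the four preceding claims.
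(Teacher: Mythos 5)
Your proposal is correct and follows essentially the same route as the paper: a per-round union bound over the four complementary events, summed over $t\in[T]$, followed by substitution of Claims~\ref{claim:rightwinner_sum_probs_1}, \ref{claim:rightwinner_sum_probs_2}, \ref{claim:Delta_gap}, and \ref{claim:rightwinner_sum_probs_3}. Your remark that the residual additive slack is immaterial is also consistent with how the bound is used in Lemma~\ref{lem:right_winner}, where it is absorbed into the $O(\log(nT))$ term.
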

\begin{proof}
    Combining Claim~\ref{claim:rightwinner_sum_probs_1}, Claim~\ref{claim:rightwinner_sum_probs_2}, Claim~\ref{claim:Delta_gap}, and Claim~\ref{claim:rightwinner_sum_probs_3}, we have
    \begin{align*}
        \sum_{t=1}^T & \Pr \left[A_t = 1, N_{1,t} > T_0, \tilde{\rho}_{s, t} \ge \rho_s + 0.08 \Delta_s, \tilde{\rho}_{1,t} \le \rho_1 + 2 \sqrt{\frac{3 \log(2nT)} {2N_{1, t}}}\right] \\
        & \geq T - \sum_{t=1}^T \Pr[A_t \neq 1] + \Pr[N_{1,t} \leq T_0] + \Pr[\tilde{\rho}_{s,t} < \rho_s + 0.08 \Delta_s] + \Pr\left[\tilde{\rho}_{1,t} > \rho_1 + 2\sqrt{\frac{3\log(2nT)}{2N_{1,t}}} \right] \\
        & \geq T - \left( \frac{5}{nT} + T_0 + \sum_{i=2}^n \frac{12 \log(2nT)}{\Delta_i^2} \right),
    \end{align*}
     as desired.
\end{proof}
The claim now follows by combining the previous three claims.
\end{proof}

\if 0
\section{Missing Proofs from Section~\ref{sec:non-myopic}}

\subsection{Proof of Lemma~\ref{lem:bounded_misreports}}
\boundedmisreports*
\begin{proof}
Let $r_{i, t}$ be the reserve price randomly drawn from $\unif[0, 1]$, $\hat{\rho}_{i, t}$ be the estimated click-through rate for advertiser $i$ at round $t$. Let $\hat{\rho}^+_{-i, t} b^+_{-i, t}$ be the highest ecpm bid of the other advertisers except for $i$,  Denote the advertiser $i$'s undiscounted utility for submitting a bid $b\in [0, 1]$ at round $t$ in phase $\ell$ as $\tilde{\rho}_{i, t}(b)$, conditioning on $v_{i, t}, r_{i, t}$ and $\hat{\rho}^+_{-i, t} b^+_{-i, t}$. Then we compute the expected undiscounted utility loss by deviating true value $v_{i, t}$ to $b_{i, t}$ at round $t$ by a case analysis:

When the advertiser $i$ is isolated by the seller and assigned a reserve price $r_{i, t}\sim \unif[0, 1]$, then we have
\begin{equation}\label{eq:undiscounted-utility-loss}
\begin{aligned}
&\E\left[(\tilde{\rho}_{i, t}(v_{i, t}) - \tilde{\rho}_{i, t}(b_{i, t}))\cdot \bI\{i\text{ is isolated by the seller}\}| v_{i, t}, b_{i, t}\right] \\
= & \frac{\rho_i}{nT_\ell}\E\left[(v_{i, t} - r_{i, t})\bI\{v_{i, t} > r_{i, t}\} - (v_{i, t} - r_{i, t})\bI\{b_{i, t} > r_{i, t}\}|v_{i, t}, b_{i, t}\right] \\
=& \frac{\rho_i}{nT_\ell}\E\left[(v_{i, t} - r_{i, t})\bI\{v_{i, t} > r_{i, t} > b_{i, t}\} + (r_{i, t} - v_{i, t})\bI\{v_{i, t} < r_{i, t} < b_{i, t}\}|v_{i, t}, b_{i, t}\right]\\
=& \frac{\rho_i}{nT_\ell} \E\left[\int_{b_{i, t}}^{v_{i, t}}(v_{i, t} - r) dr + \int_{v_{i, t}}^{b_{i, t}}(r - v_{i, t}) dr|v_{i, t}, b_{i, t}\right]\\
=&\frac{\rho_i (v_{i, t} - b_{i, t})^2}{nT_\ell}
\end{aligned}
\end{equation}

Therefore, the total expected utility loss of the advertiser $i$ by submitting a bid $b_{i, t}$ at round $t$
can be (lower) bounded by,

\begin{eqnarray*}
&&\E\left[\tilde{\rho}_{i, t}(v_{i, t}) - \tilde{\rho}_{i, t}(b_{i, t}) | v_{i, t}, b_{i, t}\right]\\
&=& \E\left[(\tilde{\rho}_{i, t}(v_{i, t}) - \tilde{\rho}_{i, t}(b_{i, t}))\cdot \bI\{i\text{ is isolated by the seller}\} + (\tilde{\rho}_{i, t}(v_{i, t}) - \tilde{\rho}_{i, t}(b_{i, t}))\cdot \bI\{i\text{ is not isolated by the seller}\}| v_{i, t}, b_{i, t}\right] \\
&\geq & \E\left[(\tilde{\rho}_{i, t}(v_{i, t}) - \tilde{\rho}_{i, t}(b_{i, t}))\cdot \bI\{i\text{ is isolated by the seller}\}| v_{i, t}, b_{i, t}\right]\\
&\geq & \frac{\rho_i (v_{i, t} - b_{i, t})^2}{nT_\ell},
\end{eqnarray*}
where the first inequality is due to the truthfulness of the second price auctions and the second inequality is based on Eq.\eqref{eq:undiscounted-utility-loss} shown above.

Let $s_{\ell+1}$ be the first round of phase $\ell+1$.
Then we bound the total discounted long-term utility loss of the advertiser $i$, for any $z > 0$, we have
\begin{equation}\label{eq:discounted-utility-loss}
\begin{aligned}
\E\left[\sum_{t\in \Gamma_\ell} \gamma^{t-1} (\tilde{\rho}_{i, t}(v_{i, t}) - \tilde{\rho}_{i, t}(b_{i, t}))\right] &\geq \E\left[\sum_{t\in D_{i, \ell}} \gamma^{t-1} (\tilde{\rho}_{i, t}(v_{i, t}) - \tilde{\rho}_{i, t}(b_{i, t}))\right] \\
&\geq \E\left[\sum_{t\in D_{i, \ell}} \gamma^{t-1} (\tilde{\rho}_{i, t}(v_{i, t}) - \tilde{\rho}_{i, t}(b_{i, t}))\bI\{|D_{i, \ell}| \geq z\}\right]\\
&= \E\left[\E\left[\sum_{t\in D_{i, \ell}} \gamma^{t-1} (\tilde{\rho}_{i, t}(v_{i, t}) - \tilde{\rho}_{i, t}(b_{i, t}))|\{v_{i, t}, b_{i, t}\}\right]\cdot \bI\{|D_{i, \ell}| \geq z\}\right]\\
&\geq\E\left[\sum_{t\in D_{i, \ell}}\frac{\rho_i \gamma^{t-1}}{n T^3_\ell}\cdot \bI\{|D_{i, \ell}| \geq z\}\right]\\
&\geq \E\left[\sum_{t = s_{\ell+1} - |D_{i, \ell}|}^{s_{\ell+1} - 1}\frac{\rho_i \gamma^{t-1}}{n T^3_\ell}\cdot \bI\{|D_{i, \ell}| \geq z\}\right]\\
&\geq \frac{\gamma^{s_{\ell+1}-1}(\gamma^{-z}-1)}{(1-\gamma)n T^3_\ell} \Pr(|D_{i, \ell}| \geq z)
\end{aligned}
\end{equation}

In addition, misreporting a bid at round $t$ in phase $\ell$, will only impact the ctr prediction by the seller in future phases, thus, the utility gain achieved by misreporting in phase $\ell$ is upper bounded by
\begin{eqnarray*}
\sum_{t\in \{\Gamma_{\ell+1}, \Gamma_{\ell+2},\cdots\}} \gamma^{t-1} \cdot 1 \leq \frac{\gamma^{s_{\ell+1}-1}}{1-\gamma}
\end{eqnarray*}
The advertiser $i$ will lie iff the utility achieved in the future phases is larger than the utility loss in the current phase $\ell$, which implies, for any $z > 0$, we have
\begin{eqnarray*}
\frac{\gamma^{s_{\ell+1}-1}(\gamma^{-z}-1)}{(1-\gamma)n T^3_\ell} \Pr(|D_{i, \ell}| \geq z) \leq  \frac{\gamma^{s_{\ell+1}-1}}{1-\gamma}
\end{eqnarray*}

Then we have $\Pr(|D_{i, \ell}| \geq z)  \leq \frac{nT^3_\ell}{r^{-z} - 1}$ holds for any $z > 0$. Choosing $z = \log(nT^4_\ell)/\log(1/\gamma)$, we complete the proof.

\end{proof}

\fi 
\section{Missing Proofs from Section~\ref{sec:global_ic_static}}
\label{app:global_ic_static}
\begin{proof}[Proof of Lemma~\ref{lem:static_exploration}]
Let $\hat{\rho}_{i, k}$ be the empirical estimate of $\rho_i$ after we show ad $i$ exactly $k$ times.
Let $\tilde{\rho}_{i, k} = \hat{\rho}_{i, k} + \sqrt{\frac{3 \log T}{2 k}}$ (resp.~$L_{i,k} = \hat{\rho}_{i,k} - \sqrt{\frac{3 \log T}{2k}}$) be the UCB (resp.~LCB) estimate after showing ad $i$ exactly $k$ times.
First we show that (i) $\tilde{\rho}_{1,k} v_1 > L_{j, k} v_j$ for all $j \in \{2, \ldots, n\}$ and $k \in [T]$ with high probability.
Thus, we never declare any $j \neq 1$ as the clear winner.
Next, we show that (ii) for some $K = O(\log T)$, we have $L_{1, K} v_1 > \tilde{\rho}_{j,k} v_j$ for $j \neq 1$ so that the exploration phase ends by the time we pull each arm $K$ times.

To prove (i), a straightforward application of Hoeffding's Inequality (Lemma~\ref{lem:hoeffding}) gives that with probability $1 - 2n^2/T$, we have $|\hat{\rho}_{i,k} - \rho_i| \leq \sqrt{\frac{\log(T)}{k}}$ for all $k \in [T]$.
We condition on this event.
Thus, $\hat{\rho}_{i,k} > \rho_{i,k} + 0.2 \sqrt{\frac{\log(T)}{k}}$ and $L_{i,k} < \rho_{i,k} - 0.2 \sqrt{\frac{\log(T)}{k}}$.
We conclude that $\tilde{\rho}_{1,k} v_1 > \rho_{1,k} v_1 > \rho_{j,k} v_j > L_{j, k} v_j$ which proves (i).

We now prove (ii). Taking $K \geq O\left(\frac{\log(T)}{\Delta_j^2} \right) = O(\log T)$, some straightforward calculations give that
\begin{align*}
    L_{1,K} v_1
    & > \left( \rho_1 - 0.2\sqrt{\frac{\log(T)}{K}} \right) v_1 \\
    & \geq \left( \rho_j + 0.2\sqrt{\frac{\log(T)}{K}} \right) v_j \\
    & > \tilde{\rho}_{j, K} v_j,
\end{align*}
which proves (ii).
\end{proof}

\end{document}